\theoremstyle{definition}
\newtheorem{definition}{Definition}[section]
\theoremstyle{theorem}
\theoremstyle{lemma}
\newtheorem{lemma}{Lemma}[section]
\newcommand{\depstop}{\ensuremath{\mathrm{dep\_stop}}}
\newcommand{\arrstop}{\ensuremath{\mathrm{arr\_stop}}}
\newcommand{\deptime}{\ensuremath{\mathrm{dep\_time}}}
\newcommand{\arrtime}{\ensuremath{\mathrm{arr\_time}}}
\newcommand{\trip}{\ensuremath{\mathrm{trip}}}
\newcommand{\dur}{\ensuremath{\mathrm{dur}}}
\newcommand{\enter}{\ensuremath{\mathrm{enter}}}
\newcommand{\exit}{\ensuremath{\mathrm{exit}}}
\newcommand{\leg}{\ensuremath{\mathrm{leg}}}
\newcommand{\change}{\ensuremath{\mathrm{change}}}
\newcommand{\first}{\ensuremath{\mathrm{first}}}
\newcommand{\problemtitle}[1]{\gdef\@problemtitle{#1}}
\newcommand{\probleminput}[1]{\gdef\@probleminput{#1}}
\newcommand{\problemoutput}[1]{\gdef\@problemoutput{#1}}
  \par\addvspace{.5\baselineskip}
  \par\addvspace{.5\baselineskip}
\begin{document}

\title{Connection Scan Algorithm\thanks{Support by DFG grant WA654/16-2}}

\author{%
Julian Dibbelt\footnotemark[3]{~}, Thomas Pajor\footnotemark[3]{~}, Ben Strasser\footnotemark[2]{~}, Dorothea Wagner\footnotemark[2]{~}\\
\footnotemark[2]{~}Karlsruhe Institute of Technology (KIT), Germany\\
\footnotemark[3]{~}work done while at KIT\\
\texttt{algo@dibbelt.de}~~~~\texttt{thomas@tpajor.com}\\
\texttt{strasser@kit.edu}~~~~\texttt{dorothea.wagner@kit.edu}
}

\date{March 2017}

\maketitle

\begin{abstract}
We introduce the Connection Scan Algorithm (CSA) to efficiently answer queries to timetable information systems.
The input consists, in the simplest setting, of a source position and a desired target position.
The output consist is a sequence of vehicles such as trains or buses that a traveler should take to get from the source to the target.
We study several problem variations such as the earliest arrival and profile problems.
We present algorithm variants that only optimize the arrival time or additionally optimize the number of transfers in the Pareto sense.
An advantage of CSA is that is can easily adjust to changes in the timetable, allowing the easy incorporation of known vehicle delays.
We additionally introduce the Minimum Expected Arrival Time (MEAT) problem to handle possible, uncertain, future vehicle delays.
We present a solution to the MEAT problem that is based upon CSA.
Finally, we extend CSA using the multilevel overlay paradigm to answer complex queries on nation-wide integrated timetables with trains and buses.
\end{abstract}

\section{Introduction}

We study the problem of efficiently answering queries to timetable information systems. 
Efficient algorithms are needed as the foundation of complex web services such as the Google Transit or bahn.de - the German national railroad company's website. 
To use these websites, the user enters his desired departure stop, arrival stop and a vague moment in time and the system should compute a journey telling the user when to take which train. 
In practice, trains do not adhere perfectly to the timetable and therefore it is necessary to be able to quickly adjust the scheduled timetable to the actual situation or account in advance for possible delays.

At its core, the studied problem setting consists of the classical shortest path problem.
This problem is usually solved using Dijkstra's algorithm \cite{d-ntpcg-59} which is build around a priority queue.
Algorithmic solutions that reduce timetable information systems to variation of the shortest path problem that are solved with extensions of Dijkstra's algorithm are therefore common.
The time-dependent and time-expanded graph \cite{pswz-emtip-08} approaches are prominent examples.

In this work, we present an alternative approach to the problem, namely the \emph{Connection Scan Algorithm} (CSA).
The core idea consists of doing away with the priority queue and replacing it with a list of trains sorted by departure time.
Contrary to most competitors, CSA is therefore not build upon Dijkstra's algorithm.
The resulting algorithm is comparatively simple because the complexity inherent to the queue is missing.
Further, Dijkstra's algorithm spends most of its execution time within queue operations.
Our approach replaces these with faster more elementary operations on arrays.
The resulting algorithm is therefore also able of achieving low query running times.
A further advantage of our approach is that the data structure consists primarily of an array of trains sorted by departure time.
Maintaining a sorted array is easy even when train schedules change.

Modern timetable information systems do not only optimize the arrival time.
A common approach consists of optimizing several criteria in the Pareto sense \cite{mswz-tima-07,dms-mcspt-08,bm-somcs-09}.
The practicality of this approach was shown by~\cite{mw-pspof-01}.
The most common second criterion is the number of transfers.
Another often requested criterion is the price~\cite{ms-pltcm-06} but we omit this criterion from our study because of very complex realworld pricing schemes.
A further commonly considered problem variant consists of profile queries.
In this variant the input does not contain a departure time.
Instead, the output should contain all optimal journeys between two stops for all possible departure times.
As further problem variant, we propose and study the minimum expected arrival time (MEAT) problem setting to compute delay-robust journeys.

CSA is very fast as it does not possess a heavyweight preprocessing step.
This makes the algorithm comparatively simple but it also makes the running time inherently dependent on the timetable's size.
For very large instances this can be a problem.
We therefore study an algorithmic extension called Connection Scan Accelerated (CSAccel) which combines a multilevel overlay approach~\cite{sww-daola-99,hsw-emlog-08,dgpw-crprn-13} with CSA.

\paragraph{Related Work.}

Finding routes in transportation networks is the focus of many research projects and thus many publications on this subject exist.
The published papers can be roughly divided into two categories depending on whether the studied network is timetable-based.
As our research focuses on timetable routing, we restrict our exposition to it and refer to a recent survey~\cite{bdgmpsww-rptn-16} for other routing problems.

Some techniques are processing-based and have an expensive and slow startup phase.
The advantage of preprocessing is, that it decreases query running times.
A major problem with preprocessing-based techniques is that the preprocessing needs to be rerun each time that the timetable changes.
We start by proving an overview over techniques without preprocessing and afterwards describe the preprocessing-based techniques.

The traditional approach consists of extending Dijkstra's algorithm.
Two common methods exist and are called the time-dependent and time-expanded graph models~\cite{pswz-emtip-08}.
In \cite{dkp-pcbcp-12} the time-dependent model has been refined by coloring graph elements.
The authors further introduce SPCS, an efficient algorithm to answer earliest arrival profile queries.
A parallel version called PSPCS is also introduced.
We experimentally compare CSA to SPCS, to the colored time-dependent model and the basic time-expanded model.

Another interesting preprocessing-less technique is called RAPTOR and was introduced in \cite{dpw-rbptr-14}.
Just as CSA it does not employ a priority queue and therefore is not based on Dijkstra's algorithm.
It inherently supports optimizing the number of transfers in the Pareto-sense in addition to the arrival time.
A profile extension called rRAPTOR also exists.
We experimentally compare CSA with RAPTOR and rRAPTOR.

Adjusting the time-dependent and time-expanded graphs to account for realtime delays is conceptually straightforward but the details are non-trivial and difficult as the studies of~\cite{ms-etipd-09} and~\cite{cddfgpz-egbmd-14} show.

In \cite{bgm-fdsut-10} SUBITO was introduced. 
This is an acceleration of Dijkstra's algorithm applied to the time-dependent graph model.
It works using lower bounds on the travel time between stops to prune the search.
As slowing down trains does not invalidate the lower bounds, most realworld train delays can be incorporated.
However, CSA supports more flexible timetable updates. 
For example, contrary to SUBITO CSA supports the efficient insertion of connections between stops that were previously not directly connected. 

In \cite{w-tbptr-15} trip-based routing (TB) was introduced.
It works by computing all possible transfers between trains in a preprocessing step.
The preprocessing running times are still well below those of other preprocessing-based techniques but non-negligible.
Unfortunately, the achieved query speedup lacks behind techniques with more extensive preprocessing.
In~\cite{w-tbptr-16} the technique was extended with a significantly more heavy-weight preprocessing algorithm that stores a large amount of trees to achieve higher speedups.

Many more preprocessing-based techniques exist.
For example, in \cite{g-ctnrt-10} Contraction Hierarchy, a very successful technique for road routing, was adapted for timetable-based routing.
In~\cite{ddpw-ptl-15}, Hub-labeling, another successful technique for roads, was also adapted for timetable-based routing.
Another labeling-based approach was proposed in~\cite{wlyxz-erppt-15}.
In addition to SUBITO, \cite{bgm-fdsut-10} introduces $k$-flags. 
$k$-flags is an adaptation of Arc-Flags \cite{l-aefea-04}, a further successful technique for roads, to timetables.
Another well-known preprocessing-based technique is called Transfer Patterns (TP).
It was introduced in~\cite{bceghrv-frvlp-10} and was refined since then over the course of several papers.
In~\cite{bs-fbspt-14} the authors combined frequency-based compression with routing and used it to decrease the TP preprocessing running times.
In~\cite{bhs-stp-16} TP was combined with a bilevel overlay approach to further decrease preprocessing running times.
CSAccel is not the first technique to combine multilevel routing with timetables.
This was already done in~\cite{swz-umlgt-02}.

We postpone giving an overview over the existing papers related to the MEAT problem until Section~\ref{sec:related-work-meat}, as the details of the MEAT problem are described in Section~\ref{sec:MEAT}.

\paragraph{Previous publications.}

This paper is the aggregated journal version of three conference papers. 
In \cite{dpsw-isftr-13}, we introduced CSA and the very basic MEAT problem.
In \cite{sw-csa-13}, we first described CSAccel.
In \cite{dsw-drjtn-14}, we present a more in-depth description and evaluation of the MEAT problem setting.

\paragraph{Contribution.}

We describe the Connection Scan family of algorithms (CSA) to solve various routing problems in timetable-based networks.
We describe profile and non-profile variants.
Algorithm variants are described that optimize arrival time and optionally the number of transfers in the Pareto sense.
We further describe Connection Scan Accelerated (CSAccel) a combination of CSA with multilevel overlay techniques.
Finally, we define the Minimum Expected Arrival Time (MEAT) problem and describe how it can be solved using CSA.
All algorithm descriptions are accompanied by an in-depth experimental analysis and experimental comparison with relevant related work.

\paragraph{Outline.}

Our paper is organized into five sections.
The first section contains the preliminaries.
It consists of the formal timetable definition and precisely states nearly all problem settings considered in the following sections.
The second section describes the basic CSA without profiles.
The third section extends CSA to profiles.
In the fourth section, we describe CSAccel, a multilevel extension of CSA.
The fifth section formalizes the MEAT problem and describes how it can be solved within the CSA framework.
The final section is a conclusion. 

\section{Preliminaries}

We describe the Connection Scan algorithm in terms of train networks.
Fortunately, many other transportation networks exist with the same timetable-based structure.
Flight, ship, and bus networks are examples thereof. 
We could therefore formulate our work in more abstract terms such as vehicles.
However, to avoid an unnecessary clumsy language, we refrain from it, and just refer to every vehicle as train.

\subsection{Timetable Formalization}

In this section, we formalize the notion of timetable, which is part of the input of nearly every algorithm presented in this paper.
We are not the first to present a formalization.
However, even though many previous works exist, they differ when it comes to notation and details.
We therefore explain our terminology and the model used in our work in detail to avoid confusion. 

A timetable encodes what trains exist, when they drive, where they drive, and how travelers can transfer between trains.
Especially, the details of the last part --- changing trains --- vary significantly across related work.
Unfortunately, unlike one intuitively might expect, these details impact the algorithm design and can have a huge impact on the running time behavior.
Further, these details can make a timetable description verbose.
Therefore, we first describe the entities not related to transfers, give examples for these, and only afterwards describe the transfer details.

A \emph{timetable} is a quadruple $(\mathcal{S},\mathcal{C},\mathcal{T},\mathcal{F})$ of stops $\mathcal{S}$, connections $\mathcal{C}$, trips $\mathcal{T}$, and footpaths $\mathcal{F}$.
The footpaths are used to model transfers.
We therefore postpone their description until we describe transfers.
A \emph{stop} is a position outside of a train where a traveler can stand. 
At a stop, trains can halt and passengers can enter or leave trains.
A \emph{trip} is a scheduled train. 
A \emph{connection} is a train that drives from one stop to another stop without intermediate halt. 
Formally, a connection $c$ is a five tuple $(c_\depstop,c_\arrstop,c_\deptime,c_\arrtime,c_\trip)$
We refer to these attributes as $c$'s \emph{departure stop}, \emph{arrival stop}, \emph{departure time}, \emph{arrival time}, and \emph{trip}, respectively.
We require from every connection $c$ that $c_\depstop\neq c_\arrstop$ and $c_\deptime<c_\arrtime$.

All connections with the same trip form a set. 
We require that this set can be ordered into a sequence $c^1,c^2\ldots c^k$ such that $c^i_\arrstop=c^{i+1}_\depstop$ and $c^i_\arrtime<c^{i+1}_\deptime$ for every $i$.
In a slight abuse of notation, we sometimes identify a trip with its corresponding sequence of connections.

\paragraph{Examples.}

Examples for stops are the train main stations, such as ``Karlsruhe Hbf''.
Other examples include subway or tram stations.

Trips include high speed trains, subway trains, trams, buses, ferries, and more.
An example for a trip is the ``ICE 104'' from Basel to Amsterdam that departs at 15:13 on the 2-nd of August 2016. 
Note, that the description ``ICE 104'' without the departure time does not uniquely identify a trip as such a train exists on every day of August 2016.
In our model, there is a trip for every day, even though these trips share the same sequence stop and the operator refers to all trains by the same name.

Pick one of the ``ICE 104'' trips and name it $x$.
The first three stops at which $x$ halts are Basel, Freiburg, and Offenburg. 
There is a connection with departure stop Basel, arrival stop Freiburg, and trip $x$.
There further is a connection with departure stop Freiburg, arrival stop Offenburg, and trip $x$.
However, there is no connection with departure stop Basel, arrival stop Offenburg, and trip $x$, as we require that the train of a connection does not halt at an intermediate stop.

\paragraph{Transfers.}

A traveler standing at stop $s$ at the time point $\tau$ can be described using a pair $(s,\tau)$.
To lighten our notation, we denote these pairs as $s@\tau$.
Denote by $P$ the infinit set of these pairs.
A \emph{transfer model} is a relation on $P$, which we denote using the $\rightarrow$ symbol.
A traveler sitting in an incoming connection $c$, wishing to transfer to an outgoing connection $c'$ of another trip, can do so by definition if and only if $c_\arrstop@c_\arrtime \rightarrow c'_\depstop@c'_\deptime$ holds.

Many transfer models exist and the details vary significantly across the literature. 
Unfortunately, there is no consent on what the best model is.
In the following, we focus our description on the model used in our work, which is based upon footpaths.
We also briefly discuss the differences to other models.

A \emph{footpath} $f$ is a triple $(f_\depstop,f_\arrstop,f_\dur)$, which we refer to as $f$'s \emph{departure stop}, $f$'s \emph{arrival stop}, and $f$'s \emph{duration}.
We require all footpath durations to be positive, i.e., $f_\dur > 0$.
The set of footpaths $\mathcal{F}$ is the last element of the quadruple that characterizes timetables.
These footpaths can be viewed as weighted, directed \emph{footpath graph} $G_\mathcal{F}=(\mathcal{S},\mathcal{F})$, where the stops are the nodes, the footpaths the arcs, and the duration the weights.
We define the transfer relation as follows: $a@\tau_a \rightarrow b@\tau_b$ holds, if and only if there is a path from $a$ to $b$ whose length is at most $\tau_b - \tau_a$.

Having a large connected footpath graph makes the considered problems significantly harder than having only loosely connected components. 
Following \cite{dpw-rbptr-14}, we therefore introduce two restrictions on the footpath graph.
It must be transitively closed and fulfill the triangle inequality.
Transitively closed means that if there is an edge $ab$ and an edge $bc$, then there is an edge $ac$. 
The triangle inequality further requires that $ab_\dur + bc_\dur \ge ac_\dur$.
From these two properties one can show that if there is a path from $a$ to $b$, then there is a shortest $ab$-path with a single edge.
The transfer relation in this special case therefore boils down to
\[ (a@\tau_a \rightarrow b@\tau_b) \iff \exists f \in F : \tau_b - \tau_a \ge f_\dur \text{ and } a=f_\depstop\text{ and }b=f_\arrstop \]
which allows us to limit our searches to single-edge paths.
These restrictions come at a price.
In each connected component there is a quadratic number of edges because of the transitive closure.
As a quadratic memory consumption is prohibitive in practice, we can therefore have no large components.

Our footpath-based transfer model is transitive, i.e., if $a@\tau_a \rightarrow b@\tau_b$ and $b@\tau_b \rightarrow c@\tau_c$ then $a@\tau_a \rightarrow c@\tau_c$.
We exploit this property in our algorithms.
While transfer model transitivity sounds like a very reasonable and desirable property, there is a common class of competitor transfer models that do not have it.
They are similar to our model, except that instead of requiring transitive closure and triangle inequality, they limit the maximum path length by some constant $m$.
It possible that one can walk within time $m$ from $a$ to $b$ and within time $m$ from $b$ to $c$ but require longer than time $m$ to get from $a$ to $c$, which demonstrates that transitivity breaks. 
The missing transitivity is the main reason why we chose a different model.

An interesting special case are \emph{loops} in the footpath graph. 
Note that without a loop at a stop $s$, a traveler cannot exit at $s$ and enter another train at $s$. 
In practice, all stops have therefore loops.
The duration of the loop footpath at stop $s$ is called the change time\footnote{Several other works refer to $s^\change$ as minimum change time.} $s^\change$.
Some competitor works even assume that there are no footpaths beside these loops, which is a significant restriction compared to our model.
Footpaths that are not loops are \emph{interstop footpaths}.

Our transfer model is in general not reflexive, i.e., it is possible that there are stops $s$ and time points $\tau$ such that $s@\tau\not\rightarrow s@\tau$.
However, one can study the special case of reflexive transfer models.
This requirement translates to every stop having a change time of 0.
The London benchmark instance of \cite{dpw-rbptr-14}, which we also use, has this additional property.

\paragraph{Examples.}

In our Germany instance, the Karlsruhe main station is modeled as two stops. 
There is a stop that represents the main tracks used by the long distance trains.
Further, there is a stop that represents the tracks where the local trams halt.
Both are connected using a footpath per direction.
Further, both stops have loop footpaths.
The loop of the main track stop has a duration of 5min and the loop of the local tram stop has a duration of 4min.
The footpaths between the two stops have a duration of 6min.

Transferring between local trams is therefore possible within 4min.
To transfer between long distance trains, the traveler needs 5min.
Finally, to transfer from tram to long distance train 6min are needed.

Other main stations can be modeled using more stops.
For example many stations have an additional stop per subway line.

Within cities, it can make sense to insert footpaths between neighboring tram stops.
However, one has to be careful to not create large connected components in the footpath graph by doing so.

It is also possible to model stations in greater detail using a stop per platform.
The London instance uses this approach.
This approach gives more precise transfer times at the expense of more stops.

\subsection{Journeys}

A journey describes how a passenger can travel through a timetable network.
They are composed of legs, which are pairs of connections $(l^i_\enter,l^i_\exit)$ within the same trip.
$l^i_\enter$ must appear before $l^i_\exit$ in the trip.
Formally, a journey consists of alternating sequence of legs and footpaths $f^0,l^0,f^1,l^1\ldots f^{k-1},l^{k-1},f^k$.
A journey must start and end with a footpath.
All intermediate transfers must be feasible according to the transfer model, i.e., for all $i$, $(l^{i-1}_\exit)_\arrstop @ (l^{i-1}_\exit)_\arrtime \rightarrow (l^i_\enter)_\depstop @ (l^i_\enter)_\deptime$ must hold.
We refer to $f^0$ as \emph{initial footpath} and to $f^k$ as \emph{final footpath}.
The remaining footpaths are called \emph{transfer footpaths}.
Further, for a journey $j$ we refer to $f^0_\depstop$ as \emph{$j$'s departure stop}, to $f^k_\arrstop$ as the \emph{$j$'s arrival stop}, to $(l^0_\enter)\deptime - f^0_\dur$ as \emph{$j$'s departure time}, to \emph{$(l^{k-1}_\exit)\arrtime + f^k_\dur$ as $j$'s arrival time}, and to $k$ as \emph{$j$'s number of legs}.
We also use $j_\leg$ to refer to the number of legs, i.e., $k$.
Finally, we refer to $j_\arrtime-j_\deptime$ as \emph{$j$'s travel time}.
Formally, journeys are allowed to consist of a single footpath and no leg.
However, we forbid this special case in certain problem settings to avoid unnecessary, simple but cumbersome special cases in our algorithms.

A journey $j$ that is missing its initial footpath, i.e., a sequence $l^0,f^1,l^1\ldots f^{k-1},l^{k-1},f^k$ is called a \emph{partial journey}.
We say that $j$ departs in the connection $l^0_\enter$.

Note, that the number of legs and the number of transfers differ slightly.
For every journey with at least one leg, the number of transfers is $j_\leg-1$.
The numbers are therefore essentially the same, except for a subtle difference.
A journey without leg has 0 legs but also has 0 transfers and not -1 transfers.
Counting legs eliminates some special cases in our algorithms and avoids some -1/+1-operations.
Hence, for simplicity, we count legs.

\subsection{Considered Problem Settings}

In this section, we describe most problem settings studied in this paper. 
Several of these problems are defined in terms of Pareto-optimization.
We therefore first recapitulate the definition of Pareto-optimal and domination and then state the problems considered in our paper.
Section~\ref{sec:MEAT} introduces another problem setting called Minimum Expected Arrival Time problem.
As its details are more involved, we introduce the problem setting in its own section.

\begin{definition}
A tuple $x$ \emph{dominates} a tuple $y$ if there is no component in which $y$ is strictly smaller than $x$ and there is component in which $x$ is strictly smaller than $y$.
\end{definition} 

Pareto-optimal is defined in terms of domination.

\begin{definition}
Denote by $P$ a multi-set of $n$-dimensional tuple with scalar components.
A tuple $x$ is \emph{Pareto-optimal} with respect to $P$, if no other tuple $y\in P$ exists, such that $y$ dominates $x$.
\end{definition} 

In our setting, the tuples are journey attributes such as a journey's travel time.
$P$ is the set of attribute-tuples of all journeys.

The easiest problem, that we consider, asks when a traveler will arrive the earliest possible.
Formally, it can be stated as follows:

\begin{problem}
\problemtitle{Earliest Arrival Problem}
\probleminput{timetable, source stop $s$, target stop $t$, source time $\tau$}
\problemoutput{The minimum arrival time over all journeys that depart after $\tau$ at $s$ and arrive at $t$}
\end{problem}

While simple, the earliest arrival problem has several downsides.
For one, a traveler often does not have a fixed departure time, but is flexible and has a range of possible departure times.
One can resolve this issue by iteratively solving the earliest arrival problem with varying source times.
Fortunately, we can do better and therefore formalize the aggregated problem as follows:

\begin{problem}
\problemtitle{Earliest Arrival Profile Problem}
\probleminput{timetable, source stop $s$, target stop $t$, minimum departure time $\tau_s$, maximum arrival time $\tau_t$}
\problemoutput{The set of all $(j_\deptime,j_\arrtime)$ over journeys $j$ such that
\begin{itemize}
\item $j$ departs not before $\tau_s$ at $s$,
\item $j$ arrives not after $\tau_t$ at $t$,
\item the pair $(-j_\deptime,j_\arrtime)$ is Pareto-optimal among all journeys, and
\item $j$ contains at least one leg.
\end{itemize}
}
\end{problem}

\begin{figure}

\begin{center}
\includegraphics{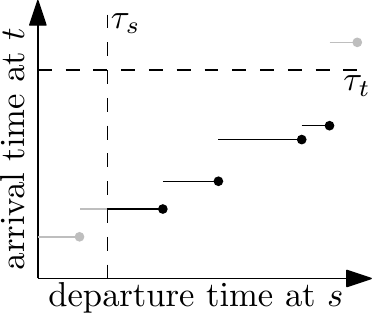}
\end{center}

\caption{Profile function that maps the departure times at a stop $s$ onto the arrival times at stop $t$. The black dots represent the solution to the earliest arrival profile problem. Only the black part needs to be computed. The gray part is excluded by the minimum departure time or maximum arrival time.}
\label{fig:profile-plot}
\end{figure}

The result of the profile problem can be represented using a plot such as the one in Figure~\ref{fig:profile-plot}. 
The result is a compact representation of the functions that maps a departure time at $s$ onto the earliest arrival time at $t$.
We refer to this function as \emph{profile function}.
Formulated differently, the profile problem asks to simultaneously solve the earliest arrival problem for all source times.

We require $j$ to have at least one leg, to be able to guarantee that the profile function is a step function.
Dropping this restriction, can break this property if $s$ and $t$ are connected via a footpath $f$.
At least in our setting, handling such a situation is trivial but requires special case handling in our algorithm.
To simplify our descriptions and to focus on the algorithmically interesting aspects, we decided to forbid journeys without leg.

\begin{figure}

\begin{center}
\includegraphics{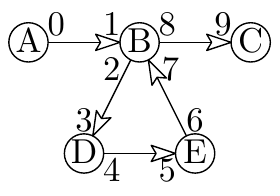}
\end{center}

\caption{Example for an ``optimal'' journey that visits a stop twice. Circles depict stops, arrows depict connections and are annotated with their departure and arrival times. The journey $A{\rightsquigarrow}B{\rightsquigarrow}D{\rightsquigarrow}E{\rightsquigarrow}B{\rightsquigarrow}C$ visits stop $B$ twice and has a minimum arrival time. The journey $A{\rightsquigarrow}B{\rightsquigarrow}C$ has the same arrival time but uses fewer legs.}
\label{fig:loop-journey}
\end{figure}

An issue common with the earliest arrival problem and with its profile counterpart is that solely optimizing arrival time can lead to very absurd but ``optimal'' journeys.
For example, Figure~\ref{fig:loop-journey} depicts a journey that is ``optimal'' with respect to its arrival time but visits a stop twice.
Similarly, it ``optimal'' journeys exist that enter a trip multiple times.
When computing earliest arrival journeys and not just their arrival time, one therefore usually also require that the journeys visit no stop or trip twice.

A simple solution to this problem consists of picking among all journeys with a minimum arrival time one that minimizes the number legs.
This implies that no stop or trip is used twice.
We say that the first optimization criterion is arrival time and the second criterion is the number of legs.
This slight change is enough to guarantee that no stop is visited twice.

While this small change solves many transfer-related problems, some remain.
Suppose, for example that there are two journeys whose arrival times differ by one second but the earlier one needs significantly more legs.
In this case one would like to pick the journey that arrives slightly later. 
This problem can be mitigated by rounding the arrival times at the target stop.
However, in many application one wants to find both journeys.
We therefore also consider the following problem setting.

\begin{problem}
\problemtitle{Pareto Profile Problem}
\probleminput{timetable, source stop $s$, target stop $t$, minimum departure time $\tau_s$, maximum arrival time $\tau_t$, maximum number of legs $\max_\leg$}
\problemoutput{The set of all $(j_\deptime,j_\arrtime,j_\leg)$ over journeys $j$ such that
\begin{itemize}
\item $j$ departs not before $\tau_s$ at $s$,
\item $j$ arrives not after $\tau_t$ at $t$, 
\item $j$ has at most $\max_\leg$ legs,
\item the pair $(-j_\deptime,j_\arrtime,j_\leg)$ is Pareto-optimal among all journeys, and
\item $j$ contains at least one leg.
\end{itemize}
}
\end{problem}

Besides the profile problem setting, we also consider \emph{range problem} variants.
In these, we set $\tau_t$ to $\tau_s + 2\cdot (x-\tau_s)$, where $x$ is the earliest arrival time.
Formulated differently, we are only interested in journeys that are at most two times as long as possible.
The solution to the range problems is a subset of the solution to the profile problems.
The range problems can therefore often be solved faster.
Fortunately, travelers usually do not want to arrive significantly later than the earliest arrival time. 
The solution to the range problem thus often consists of the journeys that actually interest a traveler.
The range problem special cases are therefore of high practical relevance.

Beside determining the attributes of optimal journeys, i.e., departure time, arrival time, and number of legs, we also consider the problem of computing corresponding journeys in Sections \ref{sec:ea-extraction} and \ref{sec:extraction}.
Note that optimal journeys are usually not unique.
There usually are multiple journeys for a combination of departure time, arrival time, and number of legs.
We regard all of them as being equal and only extract one of them.
Extracting all journeys for a combination is a different problem setting. 

\section{Earliest Arrival Connection Scan}
\label{sec:base-csa}

In this section, we describe the earliest arrival Connection Scan variant. 
It assumes that the connections are stored as array of quintuples that are sorted by departure time.
Further, the footpaths must be stored in a data structure that allows an efficient iteration over the incoming and outgoing footpaths of a stop, such as for example an adjacency array.  
Similar to Dijkstra's algorithm, CSA maintains a tentative arrival time array, that stores for each stop the earliest known arrival time.
A connection is called \emph{reachable} if there is a way for the traveler to sit in the connection.
Contrary to Dijkstra's algorithm, ours does not employ a priority queue.
Instead, it iterates over all connections increasing by departure time.
The algorithm tests for every connection whether it is reachable.
For each reachable connection, the algorithm adjusts the tentative arrival times of the stops reachable by foot from the connection's arrival stop.
After the execution of our algorithm, the output is $t$'s tentative arrival time.
Contrary to most adaptations of Dijkstra's algorithm, our algorithm touches more connections.
But the work required per connection does not involve a priority queue operation and is therefore significantly faster.

\begin{figure}
\begin{algorithm2e}[H]
\lFor{all stops $x$}{$S[x]\gets \infty$}
\lFor{all trips $x$}{reset $T[x]$}
\lFor{all footpaths $f$ from $s$}{$S[f_\arrstop] \gets \tau + f_\dur$}
\BlankLine
\For{all connections $c$ increasing by $c_\deptime$}{
	\If{$T[c_\trip]$ is set or $S[c_\depstop]\le c_\deptime$}{
		raise $T[c_\trip]$\;
		\For{all footpaths $f$ from $c_\arrstop$}{
			$S[f_\arrstop] \gets \min \{ S[f_\arrstop], c_\arrtime + f_\dur \}$\; 
		}
	}
}
\end{algorithm2e}

\caption{Unoptimized earliest arrival Connection Scan algorithm. $s$ is the source stop and $\tau$ the source time.}
\label{fig:unoptimized-earliest-arrival-connection-scan}
\end{figure}

Our algorithm maintains two arrays $S$ and $T$.
The array $S$ stores for every stop the tentative arrival time.
The array $T$ stores for every trip a bit indicating whether the traveler was able to reach any of the connections in the trip.
Testing whether a connection $c$ is reachable boils down to testing, whether $S[c_\depstop]\le c_\deptime$ or $T[c_\trip]$ is set.
To adjust the tentative arrival times, our algorithm relaxes all footpaths outgoing from $c_\arrstop$.
The algorithm is described in pseudo-code form in Figure~\ref{fig:unoptimized-earliest-arrival-connection-scan}.

\subsection{Optimizations}

\begin{figure}
\begin{algorithm2e}[H]
\lFor{all stops $x$}{$S[x]\gets \infty$}
\lFor{all trips $x$}{reset $T[x]$}
\lFor{all footpaths $f$ from $s$}{$S[f_\arrstop] \gets \tau + f_\dur$}
\BlankLine
Find first connection $c^0$ departing not before $\tau$ using a binary search\;
\For{all connections $c$ increasing by $c_\deptime$ starting at $c^0$}{
	\If{$S[t]\le c_\deptime$}{
		Algorithm is finished\;
	}
	\If{$T[c_\trip]$ is set or $S[c_\depstop]\le c_\deptime$}{
		raise $T[c_\trip]$\;

		\If{$c_\arrtime < S[c_\arrstop]$}{
			\For{all footpaths $f$ from $c_\arrstop$}{
				$S[f_\arrstop] \gets \min \{ S[f_\arrstop], c_\arrtime + f_\dur \}$\; 
			}
		}
	}
}
\end{algorithm2e}
\caption{Optimized earliest arrival Connection Scan algorithm. $s$ is the source stop, $\tau$ the source time, and $t$ that target stop.}
\label{alg:optimized-earliest-arrival-connection-scan}
\end{figure}

In this subsection, we describe three optimizations to the earliest arrival Connection Scan algorithm.
Figure~\ref{alg:optimized-earliest-arrival-connection-scan} presents pseudo-code that incorporates all three optimizations.
In the following, $c$ always denotes the connection currently being processed.

\paragraph{Stopping criterion.}
We can abort the execution of the algorithm as soon as $S[t] \le c_\deptime$.
This is correct because processing a connection $c$ never assigns a value below $c_\deptime$ to any tentative arrival time.
Further, as we process the connections increasing by $c_\deptime$, it follows that $S[t]$ will not be changed by our algorithm after the inequality holds.

\paragraph{Starting criterion.}
No connection departing before the source time $\tau$ is reachable, as for every journey $j$, $\tau \le j_\deptime < j_\arrtime$ must hold.
The proposed optimization exploits this.
It runs a binary search to determine the first connection $c^0$ departing no later than $\tau$.
The iteration is started from $c^0$ instead of the first connection in the timetable.

\paragraph{Limited Walking.}
If $S[c_\arrstop]$ cannot be improved even with an instant transfer, i.e., $S[c_\arrstop]\le c_\arrtime$ holds, then no tentative arrival time can be improved.
The optimization consist of not iterating over the outgoing footpaths of $c_\arrstop$ in this case.

The correctness of this optimization relies on the transitivity of the transfer model.
Denote by $y=c_\arrstop$. 
As $S[y]\neq\infty$ a journey $j$ ending at $y$ has already been found.
Denote by $f^{xy}$ the last footpath of $j$ departing at $x$.
Note, that it is possible that $x=y$ and that $f^{xy}$ is a loop.
For every outgoing footpath $f^{yz}$ of $y$ to some stop $z$, there exists a footpath $f^{xz}$ from $x$ to $z$ such that $f^{xz}_\dur \le f^{xy}_\dur + f^{yz}_\dur$.
We can replace the last footpath of $j$ by $f^{xz}$ and have obtained a journey arriving at $z$ no later than the journey involving $c$.
As this argumentation works for all outgoing footpaths, no tentative arrival time can be improved.
Iterating over the outgoing footpaths is therefore superfluous.
The optimization is thus correct.

\begin{figure}
\begin{center}
\includegraphics{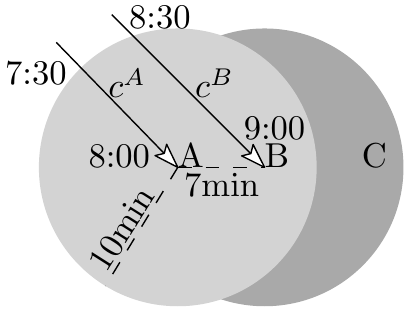}
\end{center}
\caption{Counterexample for the correctness of limited walking optimization in combination with a maximum path length transfer model. The example includes three stops $A$, $B$, and $C$, two connections $c^A$, and $c^B$ with annotated departure and arrival times, and walking radii of 10min. The light gray area is reachable by foot from $A$. The dark grey area is reachable by foot from $B$ but not from $A$.}
\label{fig:limited-walking-max-path-counter-example}
\end{figure}

Note that the limited walking optimization crucially depends on the transitivity of the transfer model.
For example, it does not hold for a transfer model with a maximum path length.
Consider the example depicted in Figure \ref{fig:limited-walking-max-path-counter-example}.
Assume that both $c^A$ and $c^B$ are reachable connections.
When processing the connection $c^A$ arriving at $A$ the tentative arrival time at $B$ is set to 8:07.
However, the tentative arrival time at $C$ remains $\infty$ as the path is too long.
Because the tentative arrival time at $B$ is smaller than 9:00 the limited walking optimization activates when processing $c^B$.
The tentative arrival time at $C$ therefore remains at $\infty$, which is clearly incorrect.

\subsection{Journey Extraction}
\label{sec:ea-extraction}

The algorithm described in the previous section only computes the earliest arrival time.
In this section, we describe how to compute an earliest arrival journey in a post processing step.
Our algorithm guarantees that the extracted journey visits no stop nor trip twice.

The algorithm comes in two variants.
The first variant augments the data structures used during the Connection Scan with additional \emph{journey pointers} that can be used to reconstruct a journey.
The second variant leaves the earliest arrival scan untouched but needs to perform more complex tasks to reconstruct an earliest arrival journey. 
The trade-off between the two variants is that the former is conceptually slightly more straight-forward and therefore easier to implement.
Further, the former has a lower extraction running time, which comes at the cost of a higher scan running time.
Finally, the later requires additional data structures, which must be computed in a fast preprocessing step. 
If only a journey towards one target stop should be extracted, then the later variant is faster.
If journeys from one source stop to many target stops should be extracted, the former can be faster.
 
\subsubsection{With Journey Pointers}

\begin{figure}
\begin{algorithm2e}[H]
\lFor{all stops $x$}{$S[x]\gets \infty$}
\lFor{all trips $x$}{$T[x]\gets \bot$}
\lFor{all stops $x$}{$J[x]\gets (\bot,\bot,\bot)$}
\lFor{all footpaths $f$ from $s$}{$S[f_\arrstop] \gets \tau + f_\dur$}
\BlankLine
\For{all connections $c$ increasing by $c_\deptime$}{
	\If{$T[c_\trip]\neq \bot$ is set or $S[c_\depstop]\le c_\deptime$}{
		\If{$T[c_\trip]=\bot$}{
			$T[c_\trip]\gets c$\;
		}
		\For{all footpaths $f$ from $c_\arrstop$}{
			\If{
				$c_\arrtime + f_\dur < S[f_\arrstop]$
			}{
				$S[f_\arrstop] \gets c_\arrtime + f_\dur$\;
				$J[f_\arrstop] \gets (T[c_\trip], c, f)$\;
			}
		}
	}
}
\BlankLine
$j\gets\{\}$\;
\While{$J[t] \neq (\bot,\bot,\bot)$}{
	Prepend $j$ with $J[t]$\;
	$t\gets J[t]^\enter_\depstop$\;
}
Prepend $j$ with the footpath from $s$ to $t$\;
Output $j$\;
\end{algorithm2e}

\caption{Earliest arrival Connection Scan algorithm with journey extraction. $s$ is the source stop. $\tau$ the source time, and $t$ the target stop.}
\label{fig:unoptimized-earliest-arrival-connection-scan-with-journey-pointer}
\end{figure}

Our algorithm, which is illustrated in Figure~\ref{fig:unoptimized-earliest-arrival-connection-scan-with-journey-pointer}, stores for every stop $x$ a triple $J[x]$ of final enter connection, final exit connection, and final footpath of an earliest arrival journey towards $x$.
We refer to this triple as \emph{journey pointer}.
If no optimal journey exists, then the journey pointer is set to an invalid value.

A journey $j$ from a source stop $s$ to a target stop $t$ can be constructed backwards.
Initially $j$ is empty.
If $t$ has a valid journey pointer, we the prepend $t$'s journey pointer to to $j$.
Further, we set $t$ to the departure stop of the journey pointer's enter connection and iterate.
If $t$ does not have a valid journey pointer, we prepend $j$ with a footpath from $s$ to $t$ and the journey extraction terminates.

\paragraph{Journey Pointer Construction.}
When a tentative arrival time is modified, our algorithm stores a corresponding journey pointer.
To this end, our algorithm must determine the three elements of the triple.
Determining the exit connection and the final footpath is easy.
These are the values denoted by the variables $c$ and $f$ in the code depicted Figure~\ref{fig:unoptimized-earliest-arrival-connection-scan-with-journey-pointer}.
Computing the enter connection is more difficult.

We replace the bit array $T$ in the base algorithm by an array that contains for every trip a connection ID.
This connection ID indicates the earliest connection reachable in a trip.
It may be invalid, if no connection was reached.
The ID being valid corresponds to the bit being set in the base algorithm.
We set an ID when a trip is first reached.

It remains to show that the extracted journey does not visit a stop or trip twice.
A trip cannot be visited twice by the extracted journey because $T$ is set to the first connection reachable in a trip.
A stop cannot be visited twice as our algorithms stores at each stop the first journey pointer found towards it.
Fortunately, a journey pointer leading to a journey with a loop cannot be the first.

\subsubsection{Without Journey Pointers}

A journey can be extracted without storing journey pointers.
However, additional data structures are necessary.

\paragraph{Additional Datastructures.}

Our algorithm needs to enumerate the connections in a trip that precede a given connection.
We therefore construct an adjacency array that maps a trip ID onto the IDs of the connections in the trip.
The connections are sorted by position in the trip. 
Our algorithm can thus enumerate all connections in a trip rapidly and stop once the given connection is found.

Further, our algorithm needs to enumerate the connections arriving at a given stop at a given timepoint.
We therefore construct a second adjacency array that maps a stop ID onto the IDs of the connections arriving at the stop.
The connections are sorted by arrival time.
We can use a binary search to efficiently enumerate all requested connections.

\paragraph{Extraction.}

Our algorithm works similar to the one using journey pointers.
However, the journey pointer is generated on the fly.
We therefore need a subroutine to determine a triple of enter connection $c^\enter$, exit connection $c^\exit$, and final footpath $f$.
We start by constructing a set of candidates for $c^\exit$.
This set is then pruned.
Finally, our algorithm iterates over the candidates and tries to find a corresponding $c^\enter$.

To generate the candidate set our algorithm enumerates all incoming footpaths $f$ of the stop $t$.
For every $f$, all connections arriving at $f_\depstop$ at $S[t]-f_\dur$ are added to the candidate set.

A candidate can only be a valid $c^\exit$ if it is reachable.
If it is reachable then the trip bit must be set.
We can therefore prune all candidate connections $c$ for which $T[c_\trip]$ is false.
Note, that the bit being set does not imply that the candidate is reachable.
It is also possible that only a later connection in the same trip is reachable.

Finally, our algorithm iterates over the remaining candidates $c$.
For each candidate, it enumerates all connections $x$ in $c_\trip$ not after $c$.
It then checks whether $S[x_\depstop]\le S[x_\deptime]$.
If it holds, then $x$ is a valid enter connection, $c$ a valid exit connection and $f$ a valid final footpath.
Further, as $x$ is the first connection in the trip, the extracted journey cannot visit a trip twice.
As our approach constructs a journey for the earliest timepoint where $t$ is reachable, we can guarantee that the extracted journey does not visit a stop twice.

If no journey pointer can be generated, then $t$ was reached by foot from $s$.
This corresponds to $J[t]$ being invalid in the algorithm of Figure~\ref{fig:unoptimized-earliest-arrival-connection-scan}.

\subsection{Experiments}

We experimentally evaluated the earliest arrival Connection Scan algorithm and compare it with competing algorithms.
Beside, only measuring the query running times, we also report how much time is needed to setup the data structures.
The setup time is an upper bound to the time needed to update a timetable.

The section is structured as follows: 
We first describe the machines on which we run our experiments. 
We then describe the test instances and how we generate our test queries.
Afterwards, we report the running times needed by the Connection Scan algorithm.
Finally, we compare the achieved running times with related work.

\subsubsection{Experimental Setup}

\label{sec:experimental-setup}

\paragraph{Machine.}
Unless specified otherwise, we ran all experiments on a single pinned thread of an Intel Xeon E5-1630v3, with 10 MiB of L3 cache and 128\,GiB of DDR4-2133MHz.
This is a CPU with Haswell architecture.
Some experiments were executed on an older dual 8-core Intel Xeon E5-2670, with 20 MiB of L3 cache and 64\,GiB of DDR3-1600 RAM, a CPU with Sandy Bridge architecture.
Hyperthreading was deactivated in all experiments.
Our implementation is written in C++ and is compiled using g++ 4.8.4 with the optimization flags \texttt{-O3 -march=native}.

\paragraph{Instances.}

\begin{table}
\begin{center}
\begin{tabular}{rrrrrr}
\toprule
Instance & Stops & Connections & Trips & Routes & Interstop Footpaths\\
\midrule
Germany & 252\,374 & 46\,218\,148 & 2\,395\,656 & 248\,261 & 103\,535 \\
London & 20\,843 & 4\,850\,431 & 125\,537 & 2\,135 & 45\,652 \\
\bottomrule
\end{tabular}
\end{center}

\caption{Instance sizes.}

\label{tab:instances}
\end{table}

We performed our experiments on two main benchmark instances.
Table~\ref{tab:instances} reports the sizes. 
The first instance is based on the data of \url{bahn.de} during winter 2011/2012. 
The data was provided to use by Deutsche Bahn (DB), the German national railway company.
We thank DB for making this data accessible to us for research purposes.
The data contains European long distance trains, German local trains, and many buses inside of Germany. 
The data includes vehicles of local operators beside DB.
The raw data contains for every vehicle a day of operation.
Unfortunately, no day exists every local operator operates. 
The planning horizon of some operators ends before the reported data of other operators begins.
To avoid holes in our timetable, we therefore extract all trips regardless of their day of operation and assume that they depart within the first day.
Our extracted instance contains therefore more connections per day than the instance in productive use.
Further, to support night trains, we consider two successive identical days.
The raw data contains footpaths. 
We did not generate additional ones based upon geographic positions but did add footpaths to make the graph transitively closed.
We removed data errors such as exactly duplicated trips, vehicles driving at more than 300 km/h or footpaths at more than 50 km/h. 

The second instance is based on open data made available by Transport for London (TfL). 
The raw input data is available in the London data store\footnote{\url{http://data.london.gov.uk}}.
We thank TfL for making this data openly available.
The data includes tube (subway), bus, tram, Dockland Light Rail (DLR).
The data corresponds to a Tuesday of the periodic summer schedule of 2011.
In contrast to the Germany instance, the London instance thus only contains data for a single day.
Stops correspond to platforms in this data set.
As a consequence all change times are zero, i.e., the transfer model is reflexive.
This data set is the main instance used in \cite{dpw-rbptr-12}, one of our main competitor algorithms.
We removed some obvious data errors from the data.
The instance sizes we report are therefore slightly smaller than in \cite{dpw-rbptr-12}.
 
\paragraph{Test Query Generation.}
To evaluate our algorithms, we generate random test queries.
The source and target stops are chosen uniformly at random.
The source time is chosen uniformly at random within the first 24 hours.
Unless noted otherwise all reported running times are averaged over $10^4$ queries.

\subsubsection{Earliest Arrival Connection Scan}

\begin{table}
\begin{center}
\begin{tabular}{cccccr}
\toprule
& Start & Stop & Limited & Journey  & Running \\
Instance & Crit. & Crit. & Walk. & Extraction & Time [ms]\\
\midrule
Germany & $\circ$ & $\circ$ & $\circ$ & $\circ$ & 329.0 \\
Germany & $\bullet$ & $\circ$ & $\circ$ & $\circ$ & 298.9 \\
Germany & $\bullet$ & $\bullet$ & $\circ$ & $\circ$ & 67.9 \\
Germany & $\bullet$ & $\bullet$ & $\bullet$ & $\circ$ & 44.9 \\
Germany & $\bullet$ & $\bullet$ & $\bullet$ & $\bullet$ & 47.1 \\
\midrule
London & $\circ$ & $\circ$ & $\circ$ & $\circ$ & 41.2 \\
London & $\bullet$ & $\circ$ & $\circ$ & $\circ$ & 37.9 \\
London & $\bullet$ & $\bullet$ & $\circ$ & $\circ$ & 2.7 \\
London & $\bullet$ & $\bullet$ & $\bullet$ & $\circ$ & 1.2 \\
London & $\bullet$ & $\bullet$ & $\bullet$ & $\bullet$ & 1.3 \\
\bottomrule
\end{tabular}

\end{center}
\caption{
Earliest arrival Connection Scan running times.
}
\label{tab:ea-time}
\end{table}

We experimentally evaluated the earliest arrival Connection Scan algorithm and report the average running time in Table~\ref{tab:ea-time}.
We successively activate the proposed optimizations.
Further, we evaluate the running time of the journey extraction without journey pointers.

The start and stop criteria drastically reduce the running times. 
The explanation is that significantly fewer connections have to be scanned.
On the London instance the speedup is 15 times whereas the speedup on the Germany instance is ``only'' 5.
This is due to the differences in journey lengths.
In London a traveler needs on average less time to traverse the whole network than in Germany.
The stop criterion therefore activates sooner reducing the number of scanned connections.
The limited walking optimization further reduces running times by 1.5 to 2.0 times.

Finally, we report the running time needed to perform a journey extraction in addition to the earliest arrival Connection Scan.
As we only extract a single journey per scan, we use the extraction process that does not store journey pointers.
The extraction process is very faster compared to the scan.
On the Germany instance, it only need about 1.2ms and on the London instance 0.1ms.

\begin{table}
\begin{center}
\begin{tabular}{crr}
\toprule
Instance & Sort [s] & Journey [s]\\
\midrule
Germany & 3.56 & 6.15 \\
London & 0.35 & 0.39 \\
\bottomrule
\end{tabular}
\end{center}
\caption{
Datastructure construction running time averaged over $100$ runs. 
``Sort'' is the time needed to sort the connection array by departure time. 
``Journey'' is the additional time needed to construct the journey extraction data structures. 
}
\label{tab:ea-construction}
\end{table}

\paragraph{Datastructure Construction.}

In Table Table~\ref{tab:ea-construction}, we report the running time needed to sort the connection array and the running time needed to construct the journey extraction data structures.
To avoid accelerating the sort algorithm by providing it with nearly sorted data, we randomly permute the array before sorting it. 
We use GCC's std::sort implementation.
If the timetable significantly changes, then these two steps need to be rerun.
If the changes are only small, then it is probably faster to patch the existing data structures.

In practice, when delays occurs, the operator needs to simulate how the delay propagates through the network.
This propagation is in practice probably slower, than the few seconds needed to construct the data structures needed by our algorithms.

\paragraph{Comparison with Related Work.}

\begin{table}
\begin{center}
\begin{tabular}{cccr}
\toprule
Instance & Algorithm & Pareto & Running Time [ms]\\
\midrule
Germany & TED & $\circ$ & 1\,996.6 \\
Germany & TD & $\circ$ & 448.5 \\
Germany & TD-col & $\circ$ & 163.3 \\
Germany & RAPTOR & $\bullet$ & 325.8 \\
Germany & CSA & $\circ$ & 44.9 \\
\midrule
London & TED & $\circ$ & 29.3 \\
London & TD & $\circ$ & 9.5 \\
London & TD-col & $\circ$ & 3.7 \\
London & RAPTOR & $\bullet$ & 6.4 \\
London & CSA & $\circ$ & 1.2 \\
\bottomrule
\end{tabular}

\end{center}
\caption{
Comparison with related work with respect to the earliest arrival time problem.
}
\label{tab:ea-related-work}
\end{table}

In Table~\ref{tab:ea-related-work}, we compare our algorithms with related work. 
The employed implementations are based upon the code of~\cite{dpw-rbptr-14}.
All competitors are run with stopping criterion active.

We compare the Connection Scan algorithm's running times against three extensions of Dijkstra's algorithm and RAPTOR.
The first extension is based on a time-expanded graph model.
The second uses a time-dependent graph model.
We refer to~\cite{pswz-emtip-08} for a detailed exposition of these models.
The third uses an optimized time-dependent graph model, proposed in \cite{dkp-pcbcp-12}, that merges nodes based on colored timetable elements.
Finally, we compare against RAPTOR~\cite{dpw-rbptr-14}, an algorithm that does not employ a graph based model.
Instead, it operates directly on the timetable, similarly to the Connection Scan algorithm.

We experimentally compare the performance of the algorithms with respect to the earliest arrival time problem.
However, RAPTOR does not fit precisely into this category.
It is designed in a way that inherently optimizes the number of transfers in the Pareto-sense.
It can and must thus solve a more general problem.
It does not benefit from restricting the problem setting.
We therefore report its running times alongside the other earliest arrival time algorithms.

Table~\ref{tab:ea-related-work} shows that the non-graph based algorithms clearly dominate the base versions of the time-dependent and time-expanded extensions of Dijkstra's algorithm.
The time-dependent extension can be engineered to be about a factor of 2 faster than RAPTOR. 
The Connection Scan algorithm is faster than all of the competitors.

\paragraph{Section Conclusions.}

CSA enables answering earliest arrival time queries in mere milliseconds.
A corresponding earliest arrival journey can be extracted afterwards in a nearly negligible amount of addition query running time.
Even on the large Germany network with integrated local transit average query running times below 50ms are possible.
The data structures can be constructed in less than 10 seconds even for the large Germany instance.
This enables an easy straightforward and fast integration of realtime train delays.

\section{Profile Connection Scan}

The Connection Scan algorithm can be extended to solve the profile problem variants.
The algorithm is very flexible and, compared with many other algorithms to solve the profile problem, comparatively easy.

We first present the algorithm on a very high level in the form of an abstract framework.
Afterwards, we illustrate how this framework can be used to solve the various profile problem variants.
We start with a very restricted problem setting to simplify the exposition.
We then extend the algorithm, iteratively dropping these restrictions.
The initial simplifications are:
\begin{itemize}
\item The time horizon is unbounded, i.e., there is no minimum departure nor maximum arrival time in the input. 
\item We solve the all-to-one problem, i.e., there the input contains only a target stop and the profile functions from every stop to this target should be computed. 
\item We assume that there are no interstop footpaths, i.e., there are only change times, i.e., there are only loops in the footpath graph.
\item We solve the earliest arrival profile problem, i.e., we do not optimize the number of transfers.
\end{itemize}

\begin{figure}
\begin{algorithm2e}[H]
\lFor{all stops $x$}{Initialize stop data structure $S[x]$}
\lFor{all trips $x$}{Initialize trip data structure $T[x]$}
\BlankLine
\For{connections $c$ decreasing by $c_\deptime$}{
	\tcc{1. Determine arrival time when starting in $c$}
	$\tau_1\gets$~arrival time when walking to the target\;
	$\tau_2\gets$~arrival time when remaining seated, uses $T[c_\trip]$\;
	$\tau_3\gets$~arrival time when transferring, uses $S[c_{\mathrm{arr\_stop}}]$\;
	\BlankLine
	\tcc{$\tau_c$ = arrival time when starting in $c$}
	$\tau_c\gets \min \{\tau_1,\tau_2,\tau_3\}$\;	
	\BlankLine
	\tcc{2. Incorporate $\tau_c$ into the data structures}
	Incorporate $\tau_c$ into $S[x]$ for all stops $x$ with footpath $(x,c_\depstop)$\;
	Incorporate $\tau_c$ into $T[c_\trip]$\;
} 
\end{algorithm2e}
\caption{Pseudo-Code of the Connection Scan profile framework.}
\label{alg:profile-connection-scan-framework} 
\end{figure}

\subsection{Framework}

Figure~\ref{alg:profile-connection-scan-framework} depicts the high level framework of all Connection Scan based profile algorithms.
Understanding this structure is crucial to understand any of the algorithms.
At its core the algorithm uses dynamic programming.
It constructs journeys from late to early and exploits that an early journey can only have later journeys as subjourneys.
Further, it exploits the observation that a traveler sitting in a connection only has three options to continue his journey.
The three options to continue his journey are:
\begin{itemize}
\item The traveler can exit the train and, if there is a footpath to the target, walk there, or
\item he can remain seated reaching the next connection in the trip, if there is a next connection, or
\item he can exit the train and use a footpath towards some other stop and enter another train.
\end{itemize}
The two ways how a traveler can have reached a connection $c$ are:
\begin{itemize}
\item He can have been sitting in the train, i.e, he reached a connection before $c$ in the same trip, or
\item or he entered the train at $c_\depstop$ proceeded by a footpath.
\end{itemize}
The algorithm scans the connections decreasing by departure time.
We say that the algorithm iteratively \emph{scans} the connections.
In the following, we always use the letter $c$ to indicating the connection currently being scanned.
The algorithm stores at each stop $x$ a profile from $x$ to the target $t$ and at each trip the earliest arrival time over all partial journeys departing in a connection of the trip.
The algorithm's structure is depicted in the pseudo-code of Figure~\ref{alg:profile-connection-scan-framework} which mirrors this high level description very closely.

\subsection{Earliest Arrival Profile Algorithm without Interstop Footpaths}

Figure~\ref{alg:profile-connection-scan-framework} contains the pseudo-code of the basic Connection Scan profile framework.
In this section we describe, how to instantiate this framework to obtain an algorithm to solve the earliest arrival profile algorithm.
The pseudo-code of the instantiated algorithm is depicted in Figure~\ref{alg:earliest-arrival-profile-connection-scan}.

We start our description by describing how the stop data structure $S$ and the trip data structure $T$ are implemented.
Afterwards, we describe the operations that modify $S$ and $T$.

For every trip, our algorithm stores one integer, i.e., $T$ is an array of integers whose size is the number of trips.
This number represents the earliest arrival time for the partial journey departing in the earliest scanned connection of the corresponding trip.

For every stop, we store a profile function.
A function is stored as sorted array of pairs of departure and arrival times.
This means that $S$ is an array whose size is the number of stops.
The elements of $S$ are arrays with a dynamic size.
The elements of these inner arrays are pairs of departure and arrival times.
After the execution of the algorithm, $S[x]$ contains the $xt$-profile.

We initialize all elements of $T$ with $\infty$ and all elements of $S$ with a singleton array containing a $(\infty,\infty)$-pair.
This algorithm state encodes that all travel times are $\infty$, i.e., the traveler cannot get anywhere.
This would also be the correct solution, if the timetable contained no connections.
When scanning the connection $c$, we modify $S$ and $T$ to account for all journeys that use $c$.
One can thus view our approach as maintaining profiles corresponding to the timetable consisting of only the latest connections.
We start with no connection and iteratively add connections.
The scanned connection $c$ is the connection currently being added.

Scanning $c$ consists of two parts.
First, $\tau_c$ must be computed and then $\tau_c$ must be integrated into $T$ and $S$.
Computing $\tau_c$ consists of the already mentioned three subcases and the integration has two subcases. 
Luckily, most of these cases are trivial in the simplified problem variant considered here.

Computing the arrival time at the target $\tau_1$ is trivial: 
Either $c$ arrives at the target stop, in which case the arrival time is $c_\arrtime+c_\arrstop^\change$, or the target is unreachable, as there are no interstop footpaths.
If the traveler remains seated, then his arrival time will be the same, as the arrival time if he was sitting in the next connection of the trip.
This arrival time is stored in $T[c_\trip]$.
Incorporating $\tau_c$ into the trip data structure is also trivial, it consists of a single assignment: $T[c_\trip]\leftarrow \tau_c$.

Slightly more complex are the incorporation of $\tau_c$ into the profile and the efficient computation of $\tau_3$.
Incorporating $\tau_c$ consists of adding the pair $p=(c_\depstop-c_\depstop^\change,\tau_c)$ into the array if it is non-dominated.
Because the connections are scanned decreasing by departure time, there cannot be pair with an earlier departure time.
However, there can be a pair with the same departure time.
It is therefore sufficient for the domination test to look at the earliest pair $q$ already in the array.
If $p$ is not dominated by $q$, we either add $p$ or replace $q$, depending on whether the departure times are equal.

Evaluating the function is done by finding the pair $p$ in the array $S[c_\arrstop]$ with the earliest departure time no earlier than $c_\arrtime$.
The arrival time of $p$ is $\tau_3$.
As the array is sorted, the evaluation can be done in logarithmic running time using a binary search.
However, as $c_\arrtime-c_\deptime$ is usually small in practice, the requested pair is usually near the beginning of the array.
A sequential search is therefore faster in practice.

\begin{figure}
\begin{algorithm2e}[H]
\lFor{all stops $x$}{$S[x]\gets\{ (\infty, \infty) \}$}
\lFor{all trips $x$}{$T[x]\gets\infty$}
\BlankLine
\For{connections $c$ decreasing by $c_\deptime$}{
	\eIf{$c_\arrstop = \mathrm{target}$}{
		$\tau_1\gets c_\arrtime+c_\arrstop^\change$
	}{
		$\tau_1\gets \infty$
	}
	\BlankLine
	$\tau_2\gets T[c_\trip]$\;
	\BlankLine
	$p\gets$ earliest pair of $S[c_\arrstop]$\;
	\While{$p_\deptime<c_\arrtime$}{
		$p\gets$ next earlier pair of $S[c_\arrstop]$\;
	}
	$\tau_3\gets p_\arrtime$\;
	\BlankLine
	$\tau_c\gets \min \{\tau_1,\tau_2,\tau_3\}$\;	
	\BlankLine
	$p\gets(c_\deptime-c_\depstop^\change,\tau_c)$\;
	$q\gets$ earliest pair of $S[c_\depstop]$\;
	\If{$q$ does not dominate $p$}{
		\eIf{$q_\deptime \neq p_\deptime$}{
			Insert $p$ at the front of $S[c_\depstop]$\;
		}{
			Replace $q$ as the earliest pair of $S[c_\depstop]$ with $p$\;
		}
	}
	\BlankLine
	$T[c_\trip]\gets \tau_c$\;
} 
\end{algorithm2e}
\caption{Earliest arrival Connection Scan profile algorithm without interstop footpaths.}
\label{alg:earliest-arrival-profile-connection-scan} 
\end{figure}

\subsection{Optimizations}

Several optimizations exist for the Connection Scan profile algorithm. 
The first optimization, that we describe exploits a hardware feature called prefetching.
The next three optimizations exploit that in most cases we do not want to compute journeys from every stop to the target.
They exploit additional information in the input such as the source stop to accelerate the computation.

\paragraph{Memory Prefetching.}
The Connection Scan profile algorithm can be slightly accelerated by using processor memory prefetch instructions.
Modern processors are capable of detecting simple memory access patterns and to fetch data sufficiently early to hide memory access latency.
The sequential scan over the connection array is an example of such a simple memory access pattern.
However, detecting the stop profile access is more complex.
When scanning the $c$-th connection, we therefore execute prefetch instructions for the stop profiles $S[{(c-4)}_\depstop]$, and $S[{(c-4)}_\arrstop]$ and the trip arrival time $S[{(c-4)}_\depstop]$.
These instructions help hide memory latency by overlapping the processing of connection $c$ with the memory fetching of the four connections $c-4$, $c-3$, $c-2$, and $c-1$.

\paragraph{Bounded Time-Horizon.}
The minimum departure time $\tau_s$ and maximum arrival time $\tau_t$ can exploited by only scanning connections $c$ with $\tau_s \le c_\deptime \le \tau_t$.
The earliest connection can be determined using a binary search.
To determine the latest connection, a binary search can be used.
However, it is also a byproduct of the next optimization.

\paragraph{Scanning only Reachable Trips.}
The source stop and source times can be exploited by running a non-profile earliest arrival scan before the profile scan.
The objective of this initial scan is to determine, which trips are reachable.
If a trip is not reachable, then no connection in it can be reachable.
We do not have to scan non-reachable connections as they cannot influence the profile at the source stop.
We can thus skip connections for which the trip bit is not set.
An efficient implementation starts by finding the first connection not before $\tau_s$ using a binary search.
It then performs the earliest arrival scan increasing by departure time until a connection departing after $\tau_t$ is encountered.
The same connections are then scanned in the reverse order in the profile scan.

\paragraph{Source Domination.}
The source stop can be exploited in another way.
In the profile framework depicted in Figure~\ref{alg:profile-connection-scan-framework}, scanning a connection consists of two parts.
The first part determines the arrival time when sitting in the connection $\tau_c$.
The second part incorporates $\tau_c$ into the data structures.
Consider the pair $p=(c_\deptime,\tau_c)$. 
If $p$ is dominated by the pairs in the profile of the source stop, then the second part can be skipped.
This optimization is correct because every journey starting at the source stop and using $c$ would be dominated.

It remains to describe, how to efficiently implement the domination test.
For the test, we need to know the arrival time of the earliest pair $q$ in the profile of the source stop such that $q_\deptime \ge c_\deptime$.
This information can be obtained by evaluating the source stop's profile.
However, as the connections are scanned decreasing by departure time, we can do better by maintaining a pointer to the relevant pair in the source stop's profile.
When scanning a connection our algorithm first decreases the pointer if necessary and then looks up the arrival time.
As the pointer can only be decreased as often as there are pairs in the source stop's profile, we can bound the running time needed to perform these evaluations by the size of the source stop's profile.

\subsection{Interstop Footpaths}
\label{sec:interstop-footpath}

In this section, we expand the profile algorithm to handle interstop footpaths.
Initial and transfer footpaths are handled in the same way, but a different strategy is needed for final footpaths.
We start our description with the later, as the idea is simpler.
The pseudo-code for this algorithm variant is presented in Figure~\ref{alg:profile-connection-scan-with-footpaths}.

\paragraph{Final Footpaths.}

Handling final footpaths consists of modifying the computation of $\tau_1$ in the framework of Figure~\ref{alg:profile-connection-scan-framework}.
In the base algorithm, the traveler can only arrive at the target by train.
In the extended version, he can also walk at the end.
For this extension, we add a new array of integers $S$.
It stores for every stop the walking distance to the target or $\infty$, if walking is not possible.
Computing $\tau_1$ for a connection $c$ can be done in constant time by evaluation $c_\arrtime + D[c_\arrstop]$.

For efficiently reasons, we do not reset all elements of $D$ for each query.
Instead, we initialize all elements of $D$ to $\infty$ during the algorithm setup.
We do this initialization only once.
Each query begins by iterating over the incoming footpaths of the target stop.
It sets $D$ to the appropriate values for all stops from which the traveler can transfer to the target.
After the profile computation, our algorithms iterates a second time over the same footpaths to reset all values of $D$ to $\infty$.

\paragraph{Transfer and Initial Footpaths.}

Our algorithm handles transfer and initial footpaths by iterating over the incoming footpaths $f$ of $c_\depstop$ when incorporating $\tau_c$ into the profiles.
It inserts a pair $p=(c_\deptime-f_\dur,\tau_c)$ into the profile of the stop $f_\depstop$, if $p$ is not dominated in $f_\depstop$'s profile. 

Unfortunately, we can no longer guarantee, that the departure time of $p$ will be the earliest in each profile.
A slightly more complex insertion algorithm is therefore needed:
Our algorithm temporarily removes pairs departing before the new pair.
It then inserts $p$, if non-dominated, and then reinserts all previously removed pairs that are not dominated by $p$.

\paragraph{Limited Walking}

If the number of interstop footpaths is large, handling transfer and initial footpaths can be computationally expensive.
Especially, the iteration over the incoming footpaths of $c_\depstop$ can be costly.
Fortunately, the limited walking optimization can be adapted and can drastically reduce running time on some instances.
The idea is as follows: If the pair $(c_\deptime,\tau_c)$ is dominated in the profile of $c_\depstop$, then all pairs computed when scanning $c$ are dominated.
The correctness argument is essentially the same as for the non-profile algorithm.
One can prefix the journey of the dominating pair with each footpath and obtain at each stop a pair that would dominate each of the pairs created during the scanning of $c$.
We thus do not need to generate them as they would be dominated anyway, i.e., we do not need to iterate over the incoming footpaths.

\paragraph{Different Set of Footpaths for Initial and Final Footpaths.}

\begin{figure}
\begin{algorithm2e}[H]
\tcc{$D[x]\gets \infty$ for every stop $x$ in a preprocessing step}
\lFor{all footpaths $f$ with $f_\arrstop = \mathrm{target}$}{$D[x]\gets f_\dur$}
\lFor{all stops $x$}{$S[x]\gets\{ (\infty, \infty) \}$}
\lFor{all trips $x$}{$T[x]\gets\infty$}
\BlankLine
\For{connections $c$ decreasing by $c_\deptime$}{
	$\tau_1\gets c_\arrtime + D[c_\arrstop]$\; 
	$\tau_2\gets T[c_\trip]$\;
	$\tau_3\gets$~evaluate~$S[c_\arrstop]$ at $c_\arrtime$\;
	\BlankLine
	$\tau_c\gets \min \{\tau_1,\tau_2,\tau_3\}$\;	
	\BlankLine
	\If{$(c_\deptime,\tau_c)$ is non-dominated in profile of $S[c_\arrstop]$}{
		\For{all footpaths $f$ with $f_\arrstop = c_\depstop$}{
			Incorporate $(c_\deptime-f_\dur,\tau_c)$ into profile of $S[f_\depstop]$\;
		}
	}
	$T[c_\trip]\gets \tau_c$\;
} 
\lFor{all footpaths $f$ with $f_\arrstop = \mathrm{target}$}{$D[x]\gets \infty$}
\end{algorithm2e}
\caption{Earliest arrival Connection Scan profile algorithm with interstop footpaths and the limited walking optimization.}
\label{alg:profile-connection-scan-with-footpaths} 
\end{figure}

In our proposed transfer model, we only have one type of footpaths.
However, many applications have an extended set of footpaths for the initial and final footpaths.
In some applications the traveler can walk for a longer amount of time at the beginning or at the end of his journey than when changing trains.
Further, some applications have source and target locations that are not stops but might, for example, be city districts.
Luckily, our algorithm can easily be extended to handle these cases.

Final footpaths can be handled by iterating over the extended footpath set during the initialization of $D$.
Handling initial footpaths is slightly more complex.
Denote by $s$ the source location, for which the profile should be computed.
In a first step, we create a set $X$ of pairs that may contain dominated entries.
After removing the dominated entries, the profile of $s$ is obtained.

Our algorithm starts by iterating over all outgoing extended footpaths $f$ of $s$.
For every pair $(d,a)$ in the profile of $f_\arrstop$, there is a $(d-f_\dur,a)$ pair in $X$.
After removing dominated pairs from $X$, the profile of $s$ is obtained.

It is possible to generate the set of extended footpaths using Dijkstra's algorithm on the fly.
We can therefore drop the requirement that the set of extended footpaths must be transitively closed.
This allows us to have very long initial and final footpaths.
Unfortunately, the restrictions still apply for transfer footpaths.

\subsection{Optimizing the Number of Legs}

In the previous section, we presented the basic Connection Scan profile algorithm and extended it to a footpath-based transfer model.
In this section, we further extend it to optimize the number of legs beside the arrival time.
We present three ways to perform this optimization.
The first and easiest approach optimizes the number of legs as a secondary criterion.
The second approach is a refinement of the first that heuristically mitigates some of its problems.
Finally, we present as third approach an extension that optimizes the number of legs and the arrival time in the Pareto-sense.

The overhead of the first two approaches over the basic algorithm is negligible.
Unfortunately, the optimization in the Pareto-sense adds a significant overhead.
We therefore recommend to the reader to first try the first two approaches and only use the third if it is really necessary for the particular application at hand.

Our algorithm optimizes the number of legs by counting the number of times a traveler exits a train.
As there is an exit per leg, the number of exits and the number of legs coincide.
The exit counter is increased each time that a profile is evaluated, i.e., during the computation of $\tau_3$ in the framework.

\paragraph{Number of Legs as Secondary Criterion.}

Optimizing the number of legs as secondary criterion, i.e., computing a journey with a minimum number of legs among all journeys with a minimum arrival time, is surprisingly easy.
Denote by $\epsilon$ a negligibly small time value, i.e., think of $\epsilon$ as one millisecond.
The modification of our algorithm consists of increasing $\tau_3$ by $\epsilon$ after each profile evaluation, i.e., the modification consists of inserting a single addition compared to the base algorithm.
If two journeys have different arrival times, then the earlier journey is chosen.
If the arrival times are equal, the number of $\epsilon$s added determines which journey is chosen.
As an $\epsilon$ is added each time that the travelers exits a train, the number of $\epsilon$s corresponds to the number of legs.
The number of legs is thus optimized as secondary criterion.

In a real implementation, we multiply all departure and arrival times in the timetable with a small constant, such as for example $2^5$.
Timestamps, even with second resolution, usually require significantly fewer than 32 bits. 
For example, to encode all seconds within a year, 25 bits are enough.
We can therefore encode the modified timestamps using 32 bit integers.
The value of $\epsilon$ is set to 1.
The modifications to the algorithm depicted in Figure~\ref{alg:profile-connection-scan-with-footpaths} adding a ``+1'' in line 7 and perform the scaling using two bit shift operations between the lines 4 and 5.

Stated differently, we encode the number of legs in the lower 5 bits of a timestamp. 
The higher 27 bits encode the arrival time.
As an integer comparison only compares the lower bits if the higher bits are equal, we obtain the desired effect, that the journeys are tie-broken using the number of legs.

\paragraph{Rounding the Arrival Times.}

Optimizing the number of legs as secondary criterion, eliminates the most problematic earliest arrival journeys, such as those visiting a stop several times or those entering a trip multiple times.
However, a journey that arrives at $8{:}02$ with 10 legs is still preferred over a journey with 2 legs arriving at $8{:}03$.
While the former arrives earlier, most travelers prefer the later.
This problem can be avoid by optimizing the number of legs in the Pareto-sense.
Fortunately, a simpler partial solution to the problem exists that might be good enough for some applications.

The idea consists of rounding the value of $\tau_1$ in the framework of Figure~\ref{alg:profile-connection-scan-framework}.
If $\tau_1$ is rounded down the lowest multiple of say 5 minutes, then both journeys are equal with respect to arrival time and therefore the journey with 2 legs is chosen.
Rounding down to multiple of 5 minutes divides a day into 288 time buckets.
Journeys arriving within one bucket are regarded as arriving at the same time and thus one with a minimum number of legs is picked.
This avoids many problematic journeys, but it is only a partial solution as the problem remains at the time bucket borders.
Further, the trick has no effect, if the difference in journey arrival times is larger than the bucket size.

Notice, that we are only rounding the arrival times at the target stop.
We do not round the departure or arrival times of intermediate connections.
This trick therefore does not modify the transfer model.

A problem with this trick is that the profiles contain rounded arrival times.
However, we want to display the non-rounded arrival times to the user.
Further, there will only be one journey per bucket.
Fortunately, these problems can be solved by permuting some bits in the timestamps.

\begin{figure}
\begin{center}
\includegraphics{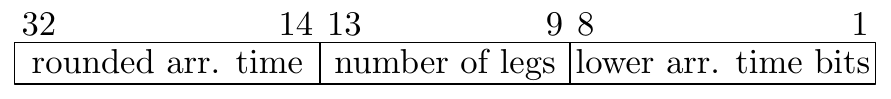}
\end{center}
\caption{Encoding used to represent timestamps. The numbers represent the bit-offsets within a 32 bit integer of the three data items.}
\label{fig:bit-encoding}
\end{figure}

Suppose that, we want to use 5 bits to encode the number of legs.
Further, assume that we round the arrival times down to $2^8=256$.
With seconds resolution that corresponds to rounding down to multiples of $\approx$4.2 minutes.
The idea consists of not encoding the number of legs in the lowest bits of a timestamp.
Instead, we use bits in the middle.
The lowest 8 bits are the lower bits of the arrival time.
The next higher 5 bits are the number of legs.
The remaining bits encode the higher bits of the arrival time.
Figure~\ref{fig:bit-encoding} illustrates the layout.
The effect of this modification is that our algorithm now optimizes three criteria.
These are:
\begin{enumerate}
\item The rounded arrival time,
\item the number of legs, and
\item the exact arrival time.
\end{enumerate}
Criteria 2 and 3 are used as second and third criteria, i.e., they are tie-breakers.
The exact arrival times can easily be reconstructed from this encoding.
Further, assume that there are two journeys that arrive within the same bucket and have the same number of legs but have different arrival times.
In the base version only one would be found.
Using the refined algorithm both are found as they are not identical with respect to the third criterion.

Unfortunately, as already mentioned this trick mitigates but does not resolve the problem of trading many addition transfers for a tiny improvement in arrival time.
However, for certain applications this trick reduces the number of problematic cases to a sufficiently small amount.
The main advantage of this trick is that it is significantly easier to implement than the more complex solution described in the next paragraph.
Further, the incurred overhead is comparatively low.

\paragraph{Pareto Optimization.}

\begin{figure}
\begin{algorithm2e}[H]
\lFor{all stops $x$}{$S[x]\gets\{ (\infty, (\infty,\infty\ldots \infty)) \}$}
\lFor{all trips $x$}{$T[x]\gets(\infty,\infty\ldots \infty)$}
\BlankLine
\For{connections $c$ decreasing by $c_\deptime$}{
	\eIf{$c_\arrstop = \mathrm{target}$}{
		$x\gets c_\arrtime+\mathrm{target}_\change$\;
	}{
		$x\gets \infty$\;
	}
	$\tau_1\gets (x,x \ldots x)$\;
	$\tau_2\gets T[c_\trip]$\;
	$\tau_3\gets \mathrm{shift}($evaluate~$S[c_\arrstop]$ at $c_\arrtime )$\;
	\BlankLine
	$\tau_c\gets \min (\tau_1,\tau_2,\tau_3)$\;	
	\BlankLine

	$y\gets$ arrival time of earliest pair of $S[c_\depstop]$\;
	\If{
		$y \neq \min (y, \tau_c)$
	}{
		Add $(c_\deptime-(c_\depstop)_\change,\min(y, \tau_c))$ at the front of $S[c_\depstop]$\;
	}
	$T[c_\trip]\gets \tau_c$\;
} 
\end{algorithm2e}
\caption{Pareto Connection Scan profile algorithm without interstop footpaths.}
\label{alg:pareto-profile-connection-scan} 
\end{figure}

The number of legs and the arrival times can be optimized in the Pareto-sense.
For a fixed target $t$, we want to compute for every source stop $s$, every source time $\tau_s$, and every number of legs $\ell$, the earliest arrival time $\tau_t$ over all journeys from $s$ to $t$ not departing before $\tau_s$ with at most $\ell$ legs.
To simplify this problem slightly, we bound $\ell$ by $\leg_{\max}$ which is a constant in the algorithm.
We usually set $\leg_{\max}$ to 8 or a similarly large value, exploiting that travelers in practice do not care about journeys with too many legs.

We modify our algorithm by replacing all arrival times by constant-sized vectors.
$\leg_{\max}$ is the dimension of the vectors.
We denote the elements of a vector $A$ as $A[1],A[2]\ldots A[\leg_{\max}]$.
The element $A[\ell]$ is the arrival time at the target, if the journey has at most $\ell$ legs.
We define two operations that modify these vectors.
The first is the \emph{component wise minimum}, i.e., the result of the minimum operation of two vector $A$ and $B$ is a vector $C$ such that $C[i]=\min\{A[i],B[i]\}$ for all indices $i$.
The second operation is the \emph{shift} operation, which is defined as follows: Shifting $A$ yields a vector $B$ such that $B[1]=\infty$ and $B[i]=A[i-1]$ for all other indices $i$.

The interpretation of the minimum operation consists of taking the best of two options.
Further, the shift operation can be interpreted as increasing the number of legs.

All $\tau$-variables in the framework from Figure~\ref{alg:profile-connection-scan-framework} become vectors. 
The trip data structure $T$ becomes an array of vectors.
The profile data structure $S$ becomes an array of dynamic-sized arrays of pairs of an integer and a vector.
The walking distance to the target $D$ remains an array of integers.

It is possible that a vector $A$ dominates another vector $B$ in one component, for example $A[1]<B[1]$, but $B$ dominates $A$ in another component, for example $A[2]>B[2]$.
For this reason, the vector insertion must be modified.
If all components of the new vector are dominated, then the profile is not modified.
Otherwise, we insert the minimum of the new vector and the minimum of the earliest vector already in the profile.
Two successive pairs can have the same arrival time with respect to certain but not all values of $\ell$ but different departure times.

In the base algorithm the profiles are initialized with a sentinel $(\infty,\infty)$ pair.
The arrival time of this pair is a vector in the extended algorithm, i.e., the new sentinel is $(\infty, (\infty,\infty\ldots \infty))$.

The computation of $\tau_1$ starts analogous to the non-Pareto case. 
Our algorithm starts by computing the walking time $x$ to the target.
Afterwards $x$ is converted to a vector $A$ by setting $A[i]=x$ for all indices $i$.
The operation of setting all components of a vector to one value is sometimes called \emph{broadcast}.

In Figure~\ref{alg:pareto-profile-connection-scan} we present the profile Pareto algorithm in pseudo-code form. 
To simplify its exposition, we omit interstop footpaths.
Fortunately, they can be incorporated in the same way as already described in Section~\ref{sec:interstop-footpath} and depicted in Figure~\ref{alg:profile-connection-scan-with-footpaths}. 

\begin{figure}
\begin{center}
\includegraphics{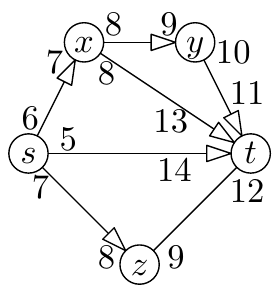}
\end{center}
\caption{
Example timetable.
The circles are stops and the arrows are connections annotated by their departure and arrival times.
All connections are part of different trips.
There are 4 journeys from $s$ to $t$ with a varying number of legs: 
$s@5{\rightarrow}t@14$, 
$s@7{\rightarrow}z{\rightarrow}t@12$,
$s@6{\rightarrow}x{\rightarrow}t@13$, and
$s@6{\rightarrow}x{\rightarrow}y{\rightarrow}t@11$
}
\label{fig:pareto-example}
\end{figure}

\paragraph{Example.}
Consider the example timetable depicted in Figure~\ref{fig:pareto-example}. 
We describe how the profile of $s$ evolves during the execution of our algorithm.
We set the target stop to $t$ and $\leg_{\max}$ to 3.
The profile is a dynamic array of pairs of departure time and arrival time vectors.
Initially it only contains an infinity sentinel, i.e., initially we have $S[s]=\{(\infty,(\infty,\infty,\infty))\}$.

The profile $S[s]$ is changed for the first time, when the connection from $s$ to $z$ is scanned.
The value of $\tau_c$ is $(\infty,12,12)$.
As there is no way to reach $t$ with at most 1 leg, the first component $\tau_c[1]$ is $\infty$.
$\tau_c[2]$ is $12$ as the target can be reached at 12 with 2 legs.
Further, $\tau_c[3]$ is $12$ also as the target can be reached at 12 with at most 3 legs.
Notice that $\tau_c[3]$ is $12$ even though that the corresponding journey only contains 2 legs.
$\tau_c$ is better in two components than the earliest vector in the profile, which is the $(\infty,\infty,\infty)$ sentinel.
The algorithm therefore inserts a new pair, namely $(7,(\infty,12,12))$ into the profile $S[s]$.
The profile $S[s]$ after the scan is  $\{(7,(\infty,12,12)), (\infty,(\infty,\infty,\infty))\}$.

The profile $S[s]$ is changed for the second time, when the connection from $s$ to $x$ is scanned.
The value of $\tau_c$ is $(\infty,13,11)$.
$\tau_c[1]$ is $\infty$ as $t$ cannot be reached without transfer.
$\tau_c[2]$ is $13$ because the journey $s@6{\rightarrow}x{\rightarrow}t@13$ contains 2 journeys.
Further, $\tau_c[3]$ is $10$ because the journey $s@6{\rightarrow}x{\rightarrow}y{\rightarrow}t@11$ with 3 legs exists.
As the later has more than 2 legs, we have that $\tau_c[2]\neq 11$.
$\tau_c$ is better in at least one component than the earliest vector in the profile, i.e., $(\infty,12,12)$.
However, it is not better in every component.
The algorithm therefore computes the minimum $\min\{(\infty,13,11),(\infty,12,12)\} = (\infty,12,11)$.
The pair $(6,(\infty,12,11))$ is added to the profile $S[s]$.
The resulting profile has the value $\{(6,(\infty,12,10)), (7,(\infty,12,12)), (\infty,(\infty,\infty,\infty))\}$.

The last time that the profile $S[s]$ might be change is when the connection from $s$ to $t$ is scanned.
The value of $\tau_c$ is $(14,14,14)$.
However, $\tau_c$ is not better in any component than the earliest vector in the profile, i.e., $(\infty,12,10)$.
No pair is thus added.

After the execution of the algorithm the profile $S[s]$ is $\{(6,(\infty,12,10)), (7,(\infty,12,12)), (\infty,(\infty,\infty,\infty))\}$.
To determine the arrival time for a source time $\tau_s$ and maximum number of legs $\ell$, find the earliest pair with a departure time no earlier than $\tau_s$.
The $\ell$-th component of the corresponding arrival time vector contains the answer.

For $\tau_s = 6.5$ and $\ell = 3$, we therefore first look up the first pair with a departure time after $6$. 
This is $(7,(\infty,12,12))$.
The $\ell$-th, i.e., third, component is 12.
The traveler can thus arrive at 12.

\paragraph{Earliest Arrival Time.}
In some cases, one is more interested in the minimum arrival time over all journeys than in the minimum arrival time over all journeys with at most $\leg_{\max}$ legs.
This can be implemented using a small change in the definition of the shift operation. 
The result of the modified shift of a vector $A$ is a vector $B$ such that $B[1]=\infty$, $B[\leg_{\max}] = \min\{A[\leg_{\max}-1], A[\leg_{\max}]\}$, and $B[i] = A[i-1]$ for all other indices $i$. 
With this modification, the $\leg_{\max}$-th vector component contains the earliest arrival times over all journeys.

\paragraph{SIMD.}
All vectors operations, i.e., component-wise minimum, component shifting, and broadcasting a value to all components, can be implemented using SIMD operations on all common processor architectures.
This includes x86 processors with the SSE and AVX2 instruction sets.
One SSE vector has 4 components with 32 bit integers.
Concatenating two vectors, yields an efficient implementation for $\leg_{\max}=8$.
Alternatively, AVX2 vectors have 8 components with 32 bit integers.
One AVX2 vector is therefore large enough.

\subsection{Journey Extraction}

\label{sec:extraction}
In the previous section, we introduced an algorithm to compute profiles. 
In this section, we describe how to extract corresponding journeys in a post processing step.

Similar to the extraction process for the earliest arrival time Connection Scan algorithm, the extraction comes in two variants.
The first conceptually simpler approach consists of storing journey pointers.
The second approach computes the journey pointers on the fly during the extraction.

The input consists of a source stop $s$ and source time $\tau_s$.
The output consists of an earliest arrival journey towards the target stop for which the profile was computed.
If transfers are optimized in the Pareto-sense, then the input contains additionally a maximum number of legs $\ell$.

Several journeys can exist that are identical with respect to all considered criteria, i.e., they depart at the same source stop at the same source time and arrive at the same target stop at the same target time and have the same number of transfers.
We only consider the problem setting of extracting one of these journeys.
Our algorithms guarantee that the extracted journey visits no stop or trip twice even when the number of legs is not optimized.

\subsubsection{Journey Pointers}

In the base profile algorithm, the pairs $(d,a)$ contain two pieces of information namely a departure time $d$ and an arrival time $a$. 
We extend the pairs with two connection IDs $l^\enter$, $l^\exit$, turning the pairs into quadruples $(d,a,l^\enter,l^\exit)$.
The meaning of such a quadruple is that there is an optimal journey $j$ that arrives at the target stop at time $a$ and departs at time $d$.
The extracted journey $j$ starts with a footpath towards $l^\enter_\depstop$.
$j$ leaves the stop using the connection $l^\enter$.
The traveler exits the train at the end of the connection $l^\exit$.
These quadruples can be used to iteratively extract an optimal journey.

The extraction starts by computing the time needed to directly transfer to the target.
Doing this trivial without interstop footpaths. 
With footpaths, we use the $D$ array of the base profile algorithm. 
In the next step, our algorithm determines the first quadruple $p$ after $\tau_s$ in the profile $P[s]$ of the source stop $s$.
If directly transferring to the target is faster, then the journey consists of a single footpath and there is nothing left to do.
Otherwise, $p$ contains the first leg of an optimal journey.
The algorithm then sets $s$ to $l^\exit_\arrstop$ and $\tau_s$ to $l^\exit_\arrtime$ and iteratively continues to find the remaining legs of the output journey.

It remains to describe how $l^\enter$ and $l^\exit$ are determined when inserting the quadruple into the profile during the scan.
$l^\enter$ is the connection being scanned and is therefore already known.
To determine $l^\exit$ efficiently, we extend the trip information $T$ with a connection ID for each trip, i.e., $T$ becomes an array of pairs of arrival times and connection IDs.
Each time that the arrival time stored in $T$ is decreased, the algorithm sets the trip's connection ID to the currently scanned connection.
When inserting the quadruple, $c^\exit$ is the connection ID stored with currently scanned connection's trip.

This approach can be combined with Pareto-optimization by replacing $l^\enter$, $l^\exit$, and the trip connection IDs with constant-sized vectors.
The input of the algorithm must be extended with the maximum number of desired legs.

\subsubsection{Without Journey Pointers}

Similarly to the earliest arrival Connection Scan, it is possible to implement a journey extraction without modifying the scan.

Our algorithms require enumerating the outgoing connections of a stop ordered by departure time.
To efficiently support this operation, we create an auxiliary data structure that consists of an adjacency array that maps a stop $s$ onto the departure time and the ID of all connections $c$ departing at $s$, i.e., onto the connections $c$ for which $c_\depstop=s$ holds.
The outgoing connections are ordered by departure time. 
Further, our algorithm needs to be able to enumerate all connections in a trip after a given connection.
To efficiently support the second operation, we create another auxiliary adjacency array that maps a trip $t$ onto the IDs of the connections $c$ in the trip, i.e., onto the connections $c$ for which $c_\trip = t$ holds.
The connections are ordered by their position in the trip.
To enumerate the connections in a trip after a given connection $c$, we enumerate the connections in $c_\trip$ from late to early and abort the enumeration once $c$ is encountered.
Notice, that all auxiliary data structures are independent of the target stop.
Further, both data structures can be computed by essentially sorting the connections by various criteria.
We can therefore compute the auxiliary data in a fast preprocessing step.

Similarly, to the journey pointer approach, our second approach starts by checking, whether directly walking from the source stop $s$ and the source time $\tau_s$ to the target $t$ is optimal.
It terminates, if this is the case.
Otherwise, our algorithm must compute a pair of valid $l^\enter$ and $l^\exit$.
In the first approach, these were stored in the pairs which is no longer the case in the second approach.
Our algorithm therefore needs to infer the values.
It does so by searching for the earliest pair $(d,a)$ after $\tau_s$ in $s$'s profile $P[s]$ using a binary search.
We know that there must be a footpath $f$ outgoing from $s$ towards $l^\enter_\depstop$ such that $l^\enter_\deptime = d+f_\dur$.
By iterating over the outgoing footpaths of $s$ and checking this condition, we obtain a set $\{c^1,c^2\ldots c^k\}$ of candidates for $l^\enter$.
We know that there must be an optimal first leg $l$, such that $l^\enter$ is among the candidates.

We can optionally prune the candidate set using the trip arrival times $T[x]$ computed during the profile scan.
$T[x]$ is the minimum arrival time over all optimal journeys departing in a connection of trip $x$.
We therefore know that if for a candidate $T[c^i_\trip]>a$ holds, that $c^i$ cannot be $l^\enter$ and we can therefore remove $c^i$ from the set.

For the remaining candidates, we need to look at the connections in their trips.
For each potential candidate $c^i$, our algorithm enumerates all connections $c$ in its trip that come after $c^i$, including $c^i$ itself.
For each $c$, our algorithm searches for the earliest pair $(d',a')$ in $c_\arrstop$'s profile after $c_\arrtime$ using a binary search.
If $a=a'$, then we found an optimal first leg and $c$ is the corresponding $l^\exit$.
If we only wish to extract one journey, then our algorithm can discard the remaining candidates.
Our algorithm iterates by setting $s$ to $l^\exit_\arrstop$ and $\tau_s$ to $l^\exit_\arrtime$.
To assure that no trip is used twice in a journey, we pick the latest valid $l^\exit$ in the trip.
As we enumerate connections from late to early, the first valid $l^\exit$ we encounter is automatically the latest.

\paragraph{Pareto Optimization.}

The candidate set is computed by finding the first pair $(d,a)$ departing after $\tau_s$.
This is correct for the base profile scan algorithm.
However, the Pareto-extension can insert several pairs with the same departure time with respect to $\ell$.
A modification to the extraction is therefore necessary.

Consider for example the example illustrated in Figure~\ref{fig:pareto-example}.
Suppose that the traveler departs at $s$ at 5 and wants to use at most 2 legs.
Already the first pair $(6,(\infty,12,10))$ in the profile departs later than 5.
However, there is no earliest arrival journey towards $t$ departing at 6 towards $t$ with at most 2 legs.
The corresponding journey departs at 7. 
Indeed, the second pair $(7,(\infty,12,12))$ in the profile has the correct departure time and arrives at the same time.

To fix this problem, we slightly modify the algorithm.
First we find the earliest pair $p$ departing no earlier than $\tau_s$.
In a second step, we iterated over the pairs in the profile from early to late starting at $p$ until we find the last pair $q$ with the same arrive time than $p$ for the requested number of legs.
The departure time of $q$ is used to determine the candidate set.

\subsection{Experiments}

\begin{table}
\begin{center}
\begin{tabular}{ccccccr}
\toprule
\multicolumn{7}{c}{Older Machine with 20 MiB of L3 cache}\\
\midrule
&  Pre- & Limited & Source & Range & Journeys  & Running \\
Instance & fetch & Walk. & Dom. & Query & Extraction & Time [ms]\\
\midrule
Germany & $\circ$ & $\circ$ & $\circ$ & $\circ$ & $\circ$ & 2\,132.1 \\
Germany & $\bullet$ & $\circ$ & $\circ$ & $\circ$ & $\circ$ & 1\,995.7 \\
Germany & $\bullet$ & $\bullet$ & $\circ$ & $\circ$ & $\circ$ & 1\,567.2 \\
Germany & $\bullet$ & $\bullet$ & $\bullet$ & $\circ$ & $\circ$ & 1\,119.3 \\
Germany & $\bullet$ & $\bullet$ & $\bullet$ & $\bullet$ & $\circ$ & 253.1 \\
Germany & $\bullet$ & $\bullet$ & $\bullet$ & $\circ$ & $\bullet$ & 1\,118.4 \\
Germany & $\bullet$ & $\bullet$ & $\bullet$ & $\bullet$ & $\bullet$ & 253.1 \\
\midrule
London & $\circ$ & $\circ$ & $\circ$ & $\circ$ & $\circ$ & 287.8 \\
London & $\bullet$ & $\circ$ & $\circ$ & $\circ$ & $\circ$ & 279.7 \\
London & $\bullet$ & $\bullet$ & $\circ$ & $\circ$ & $\circ$ & 162.3 \\
London & $\bullet$ & $\bullet$ & $\bullet$ & $\circ$ & $\circ$ & 119.9 \\
London & $\bullet$ & $\bullet$ & $\bullet$ & $\bullet$ & $\circ$ & 11.1 \\
London & $\bullet$ & $\bullet$ & $\bullet$ & $\circ$ & $\bullet$ & 121.2 \\
London & $\bullet$ & $\bullet$ & $\bullet$ & $\bullet$ & $\bullet$ & 11.2 \\
\midrule
\multicolumn{7}{c}{Newer Machine with 10 MiB of L3 cache, used in most experiments}\\
\midrule
Germany & $\circ$ & $\circ$ & $\circ$ & $\circ$ & $\circ$ & 2\,517.2 \\
Germany & $\bullet$ & $\circ$ & $\circ$ & $\circ$ & $\circ$ & 2\,391.0 \\
Germany & $\bullet$ & $\bullet$ & $\circ$ & $\circ$ & $\circ$ & 1\,684.4 \\
Germany & $\bullet$ & $\bullet$ & $\bullet$ & $\circ$ & $\circ$ & 1\,246.2 \\
Germany & $\bullet$ & $\bullet$ & $\bullet$ & $\bullet$ & $\circ$ & 217.9 \\
Germany & $\bullet$ & $\bullet$ & $\bullet$ & $\circ$ & $\bullet$ & 1\,246.4 \\
Germany & $\bullet$ & $\bullet$ & $\bullet$ & $\bullet$ & $\bullet$ & 218.0 \\
\midrule
London & $\circ$ & $\circ$ & $\circ$ & $\circ$ & $\circ$ & 242.3 \\
London & $\bullet$ & $\circ$ & $\circ$ & $\circ$ & $\circ$ & 238.7 \\
London & $\bullet$ & $\bullet$ & $\circ$ & $\circ$ & $\circ$ & 140.0 \\
London & $\bullet$ & $\bullet$ & $\bullet$ & $\circ$ & $\circ$ & 106.9 \\
London & $\bullet$ & $\bullet$ & $\bullet$ & $\bullet$ & $\circ$ & 9.4 \\
London & $\bullet$ & $\bullet$ & $\bullet$ & $\circ$ & $\bullet$ & 107.9 \\
London & $\bullet$ & $\bullet$ & $\bullet$ & $\bullet$ & $\bullet$ & 9.4 \\
\bottomrule
\end{tabular}

\end{center}
\caption{Earliest arrival profile computation running times.}
\label{tab:ea_profile}
\end{table}

We use the experimental setup described in Section \ref{sec:experimental-setup}.
In Table~\ref{tab:ea_profile}, we report the running times of the earliest arrival Connection Scan profile algorithm.
We report the running times for both main instances on both of our test machines.
We iteratively activate optimizations to show their impact.
Activating range queries also includes not processing unreachable trips.
We also report the running time needed to perform the scan and extract for every pair in the source stop's profile a corresponding earliest arrival journey.

The comparison between the two machines is interesting.
We expect the newer machine to be faster, as it has a faster processor, a newer architecture, and faster RAM.
This expected behavior is also nearly always the observed behavior, except on the Germany instance for non-range queries.
The differences in L3 cache sizes explains the effect.
The newer machine is better with respect to every criterion except L3 cache.
The old machine has 20 MiB while the newer one only has 10 MiB.
The London instance is smaller and therefore a larger part of the stop profiles fit into the 10 MiB.
If we compute range queries, only parts of the stop profiles are computed.
This part is smaller and therefore a greater percentage fits into the L3 cache.
The newer machine is therefore faster on range queries and slower on non-range queries.
The conclusion is that a sufficiently large cache is necessary for a good Connection Scan profile performance.

Activating prefetching decreases the running times.
On the newer machine and the London instance the speedup is only about 1.02.
However, on the Germany instance the gain is already 1.05.
This observation again illustrates that caching effects matter for good performance.
On the London instance large parts of the frequently used data structures are never evicted from L3 cache. 
The gain from prefetching comes therefore mostly from moving data to the lower cache levels.
On the Germany instance prefeching moves data from the RAM into L3 cache more often.
As the absolute differences in access speeds between L3 cache and RAM are greater than between L2 and L3 cache, the speedup is lower for the London instance.

Activating the limited walking optimization further reduces the running times.
The speedup is about 1.4 to 1.7, which is roughly comparable to the speedups achieved for the non-profile algorithm variants.

Activating source domination further reduces the running times.
As source domination prunes pairs from profiles except the source stop, the algorithm solves a more restricted problem setting.
Instead of computing the profiles from every stop towards the target stop, it now only computes a single profile from the source stop to the target.

Switching to range queries drastically reduces the running times.
The specified maximum travel time of twice the minimum travel time allows the algorithm to limit the connections that need to be scanned.
On the Germany instance the speedup is about a factor 6.
On the London instance the speedup of 11 is higher.
These speedups are roughly comparable to the speedups achieved by activating the start and stop criteria in the non-profile earliest arrival algorithm.
The reason is that the decrease in scanned connections is roughly comparable.
Further, as already observed, a traveler needs less time to traverse London than to traverse Germany. 
The relative decrease in scanned connections is thus higher on the London instance and as a consequence the achieved speedups are higher.

\begin{table}
\begin{center}
\begin{tabular}{ccccccr}
\toprule
& & Pre- & Limited & Source & Range & Running \\
Instance & SIMD & fetch & Walk. & Dom. & Query & Time [ms]\\
\midrule
Germany & --- & $\circ$ & $\circ$ & $\circ$ & $\circ$ & 8\,298.5 \\
Germany & --- & $\bullet$ & $\circ$ & $\circ$ & $\circ$ & 7\,109.3 \\
Germany & SSE & $\circ$ & $\circ$ & $\circ$ & $\circ$ & 4\,792.2 \\
Germany & SSE & $\bullet$ & $\circ$ & $\circ$ & $\circ$ & 4\,612.6 \\
Germany & SSE & $\bullet$ & $\bullet$ & $\circ$ & $\circ$ & 3\,519.9 \\
Germany & SSE & $\bullet$ & $\bullet$ & $\bullet$ & $\circ$ & 2\,834.6 \\
Germany & SSE & $\bullet$ & $\bullet$ & $\bullet$ & $\bullet$ & 279.5 \\
Germany & AVX & $\circ$ & $\circ$ & $\circ$ & $\circ$ & 4\,402.9 \\
Germany & AVX & $\bullet$ & $\circ$ & $\circ$ & $\circ$ & 4\,332.7 \\
Germany & AVX & $\bullet$ & $\bullet$ & $\circ$ & $\circ$ & 3\,220.7 \\
Germany & AVX & $\bullet$ & $\bullet$ & $\bullet$ & $\circ$ & 2\,489.6 \\
Germany & AVX & $\bullet$ & $\bullet$ & $\bullet$ & $\bullet$ & 259.2 \\
\midrule
London & --- & $\circ$ & $\circ$ & $\circ$ & $\circ$ & 777.5 \\
London & --- & $\bullet$ & $\circ$ & $\circ$ & $\circ$ & 749.1 \\
London & SSE & $\circ$ & $\circ$ & $\circ$ & $\circ$ & 424.1 \\
London & SSE & $\bullet$ & $\circ$ & $\circ$ & $\circ$ & 420.1 \\
London & SSE & $\bullet$ & $\bullet$ & $\circ$ & $\circ$ & 261.2 \\
London & SSE & $\bullet$ & $\bullet$ & $\bullet$ & $\circ$ & 213.8 \\
London & SSE & $\bullet$ & $\bullet$ & $\bullet$ & $\bullet$ & 11.9 \\
London & AVX & $\circ$ & $\circ$ & $\circ$ & $\circ$ & 355.6 \\
London & AVX & $\bullet$ & $\circ$ & $\circ$ & $\circ$ & 359.8 \\
London & AVX & $\bullet$ & $\bullet$ & $\circ$ & $\circ$ & 206.1 \\
London & AVX & $\bullet$ & $\bullet$ & $\bullet$ & $\circ$ & 170.2 \\
London & AVX & $\bullet$ & $\bullet$ & $\bullet$ & $\bullet$ & 10.7 \\
\bottomrule
\end{tabular}

\end{center}
\caption{Profile computation running times with optimization of the number of legs and the earliest arrival time in the Pareto-sense.}
\label{tab:pareto}
\end{table}

In Table~\ref{tab:pareto}, we report running times of the Connection Scan Pareto profile algorithm.
It optimizes the number of legs, the arrival time, and the departure time in the Pareto-sense.
The maximum number of legs is set to 8.
We use the algorithm variant that computes the earliest arrival time in the 8-th vector component.
We iteratively activate our proposed optimizations to demonstrate their effectiveness.

We present three SIMD variants.
All three use the same memory layout.
All use vectors with 256 bits that contain 8 components with a 32-bit timestamp.
They differ in what processor instructions are used to operate on the vectors.
The first variant uses no special instructions and works with loops with a fixed number of iterations.
The second variant uses SSE instructions. 
SSE registers are 128 bits wide.
To process one vector, two SSE instructions are thus required.
The third variant uses AVX registers.
Luckily, these are 256 bits wide and therefore a single instruction is sufficient.
We use integer AVX arithmetic instructions. 
These were introduced with AVX2, a feature introduced in the Haswell processor architecture.
Our AVX code can therefore not run on our older test machine, which does not yet support AVX2.

The first optimization that we consider consists of prefetching memory.
On the Germany instance without SSE nor AVX, a speedup of 1.16 was achieved.
This is significant, considering that no algorithmic changes were performed.
Interestingly, the speedup is only 1.02, when comparing the AVX prefetch and AVX non-prefetch running times.
It is also interesting that by using AVX compared to the base version a speedup of 1.9 is achievable.
Especially, the later is interesting, as we expect SIMD to have the largest benefit in compute-bound algorithms and our previous experiments suggest that the Connection Scan algorithm heavily depends on memory access speeds.
One explanation for these two effects is that the AVX code has fewer instructions, making it easier for processor to predict memory access patterns. 
This would explain why the benefit of prefetching nearly vanishes but running times drastically decrease.
This explanation is also consistent with the observation that using AVX is a benefit over SSE as the AVX code requires fewer instructions.

The speedups of the limited walking and source domination optimizations are comparable to those observed for the earliest arrival profile algorithm.
We refer to discussion of these experiments for an interpretation of the observed effects.
The speedup of the range query variant is about 10 on the Germany instance and 17-19 on the London instance.
These speedups are larger than those observed for the earliest arrival profile algorithm.
The difference is likely due to the Pareto algorithms having a larger overall memory consumption.
As a consequence caching effects have a larger impact and therefore a reducing of the memory footprint yields a large relative advantage.

\paragraph{Comparison with Related Work.}

\begin{table}

\begin{center}
\begin{tabular}{ccccr}
\toprule
Instance & Algorithm & Pareto & One-to-one & Running Time [s]\\
\midrule
Germany & CSA & $\circ$ & $\circ$ & 1.68 \\
Germany & CSA & $\circ$ & $\bullet$ & 1.25 \\
Germany & CSA & $\bullet$ & $\circ$ & 3.22 \\
Germany & CSA & $\bullet$ & $\bullet$ & 2.49 \\
Germany & SPCS-col & $\circ$ & $\circ$ & 10.95 \\
Germany & SPCS-col & $\circ$ & $\bullet$ & 8.40 \\
Germany & rRAPTOR & $\bullet$ & $\circ$ & 6.27 \\
Germany & rRAPTOR & $\bullet$ & $\bullet$ & 4.73 \\
\midrule
London & CSA & $\circ$ & $\circ$ & 0.14 \\
London & CSA & $\circ$ & $\bullet$ & 0.11 \\
London & CSA & $\bullet$ & $\circ$ & 0.21 \\
London & CSA & $\bullet$ & $\bullet$ & 0.17 \\
London & SPCS-col & $\circ$ & $\circ$ & 1.19 \\
London & SPCS-col & $\circ$ & $\bullet$ & 0.79 \\
London & rRAPTOR & $\bullet$ & $\circ$ & 0.97 \\
London & rRAPTOR & $\bullet$ & $\bullet$ & 0.68 \\
\bottomrule
\end{tabular}

\end{center}
\caption{Comparison of profile algorithms.}
\label{tab:profile_compare}
\end{table}

In Table~\ref{tab:profile_compare}, we compare the Connection Scan profile algorithm with two competitor algorithms.
The first is Self-Pruning Connection-Setting (SPCS) algorithm~\cite{dkp-pcbcp-12}.
It computes profiles that optimize departure and arrival time in the Pareto-sense but does not optimize transfers.
The algorithm can be combined with the colored timetable optimization, which was used in our experiments.
We therefore refer to the algorithm as SPCS-col in Table~\ref{tab:profile_compare}.
The second competitor is rRAPTOR~\cite{dpw-rbptr-14}.
Similar to the base RAPTOR algorithm, it inherently optimizes transfers in the Pareto-sense.
The Connection Scan algorithm (CSA) was run with AVX and limited walking activated.

Both rRATPOR and CSA clearly dominate SPCS-col in terms of running time.
The difference between CSA and rRATPOR is smaller.
CSA is always faster, but on the Germany instance the gap is only up to a factor of 2.
On the London instance, there is a speedup of up to 4.7.

\paragraph{Section Conclusions.}

Using CSA and by exploiting the full capabilities of modern processors, it is possible to answer Pareto range queries on the large Germany instance in a quarter of a second.
It is feasible to construct interactive timetable information systems upon these running times.
However, ideally lower running are desirable.
For example, spending a quarter of a second per query in a web server severely limits throughput.
Fortunately, we were able to achieve these running times without compromising the excellent data structure construction times of the base algorithm.
Flexible realtime updates are possible.

\section{Connection Scan Accelerated}

In the previous sections, we present the Connection Scan family of algorithms.
We demonstrate that queries can be answered very quickly on modern hardware.
Even Pareto range queries can be answered in well below a second even on the large Germany instance.
A significant advantage of the Connection Scan algorithms is that the preprocessing is very lightweight.
It mainly consists of sorting the connections, which can be done in very few seconds.
This allows us to quickly update the timetable to account for disturbances, such as delayed trains, blocked stops or tracks, or overbooked trains.

While, all of these properties make the Connection Scan family of algorithms a good fit for many applications, it is also interesting to investigate whether further gains are achievable by using more heavy-weight preprocessing techniques.
Further, even though the achieved running times on the Germany instance are low enough for interactive applications, we expect them to consume a significant amount of resources.
Lower running times are therefore very desirable in practice.
Investigating the combination of Connection Scan with more heavy-weight preprocessing techniques is therefore the topic of this section.

We investigate a multilevel overlay extension to the Connection Scan algorithms, which we call Connection Scan Accelerated (CSAccel).
The central ideas are similar to those used in~\cite{sww-daola-99,hsw-emlog-08,dgpw-crprn-13}.
In several studies, this approach has proven to enable very fast queries in road networks.
Compared to Dijkstra's algorithms, speedups on the order of 1000 are possible.
It is therefore reasonable to expect similar speedups on timetable networks.
We are not the first to investigate this question.
Unfortunately, previous research~\cite{bdgm-atdmc-09,bgm-fdsut-10} has shown that achieving similar speedups is harder than one would naively expect.
Our work is no exception to this observation.
Our multilevel extension manages to provide a significant speedup on the Germany instance.
However, the speedup lacks far behind of what is achievable in road networks.

The core idea of our extension is best illustrated using an example: 
When planning a journey from Karlsruhe to Stuttgart, do not scan rural bus connections around Hamburg.
We use overlays to formalize the concept of rural bus.
Our algorithm partitions the stop set into cells.
Karlsruhe and Stuttgart are put into the same cell.
Hamburg is in a different cell.
For every cell, our algorithms computes a subset of \emph{transit connections}.
For every pair of connections entering and leaving a cell $z$, there must be a journey with a minimum number of transfers, that only enters or exits trips at transit connections of $z$.
For rural buses, usually no such journey exists and thus they are not in the transit connection set.
When traveling from Karlsruhe to Stuttgart, our algorithm only looks at the transit connections of Hamburg's cell and thus skips the rural buses around Hamburg.

Following the setup and terminology of~\cite{dgpw-crprn-13}, our algorithm works in three phases.
In the first phase, called \emph{preprocessing phase}, a multilevel partition of the stop set is computed.
In the second phase, called \emph{customization phase}, our algorithm computes overlays for every cell.
Finally, in the third phase, called \emph{query phase}, our algorithm computes arrival times and journeys.
The second phase uses the results of the first phase.
Similarly, the third phase uses the results of the first and the second phases.
The preprocessing phase should only use data that rarely changes, such as for example what tracks exist and perhaps what tracks are highly frequented.
The idea is that the preprocessing phase does not have to be rerun very often and may therefore be slow.
To update the timetable, it should be sufficient to rerun the customization, which should be fast.
Our customization phase works with every stop partitioning, as long footpaths do not cross cell boundaries and the stop sets are identical.
However, if the timetables used during preprocessing and customization differ too much, then customization and query performance will significantly degrade. 

multilevel approaches inherently rely on the structure of the network.
Small, balanced graph cuts are a necessity.
Without these, the achievable speedups crumble.
Fortunately, as shown in many studies, road graphs typically have this structure.
However, for timetables, the situation is less clear.
Indeed, country-wide timetables that consist of many urban centers differ in structure from timetables that consist of a single large urban region.
There typically exist small, balanced cuts between cities, however, cutting through a city is significantly more difficult.
Many cities contain natural cuts such as rivers or large main roads.
This property is exploited to achieve fast shortest path queries in road networks.
Unfortunately, in timetable networks, rivers are not necessarily advantageous. 
Often, several trains or buses lines pass over a single bridge.
Cutting through tracks with a high public transit frequency is expensive, in the context of timetables, as we need to weight the cuts by the number vehicles that pass over it. 
We therefore expect the performance of all multilevel overlay extensions to perform poorer on pure urban instances. 
This differs from the basic Connection Scan algorithm, whose performance is nearly independent of the timetable structure.

Connection Scan algorithms find a journey $j$ with legs $l^1,l^2 \ldots l^k$, if the connections $l^1_\exit$, $l^1_\enter$ $\ldots$ $l^k_\exit$, $l^k_\enter$ are scanned in the correct order.
These are the connections where the traveler transfers, i.e., enters or exits.
A connection where the traveler does neither does not have to be scanned.
Scanning all connections ordered by departure time fulfills this property for all journeys.
This is the core observation exploited by the Connection Scan base algorithms.
For a fixed source and target stop it can be sufficient to only scan a subset of the connections.
Our algorithm exploits this observation.
Our query phase thus works in two phases.
In the first phase a sorted connection subset $\mathcal{C_S}$ is assembled.
For every pair in the $st$-profile, there must be a journey $j$, such that all transfer connections of $j$ are included in $\mathcal{C_S}$.
In the second phase the Connection Scan base algorithms are run restricted to the connections in $\mathcal{C_S}$.

Our algorithm computes $\mathcal{C_S}$ by merging arrays of sorted connections.
In the base setting every cell has an associated sorted array of transit connections.
To compute $\mathcal{C_S}$, one would identify all potentially relevant cells and merge their transit connections.
Unfortunately, the number of these cells can be large and merging sorted arrays is a task that requires some running time.
We therefore want to reduce the number of arrays merged.
We therefore introduce the concept of \emph{long distance connections}.
A transit connection of a cell $z$ is a long distance connection of its direct parent cell.
On the lowest level, all connections within a cell $z$ are long distance connections of $z$. 
For every cell, our algorithm stores a sorted array of long distance connections.
To assemble $\mathcal{C_S}$, our algorithm merges the long distance connections of all cells that contain the source or target stop or both.

If the long distance connections of a cell $z$ are merged into $\mathcal{C_S}$, then also the connections of $z$'s parent are merged.
We can exploit this observation to further thin out the long distance connection set.
If $c$ is a long distance connection of a cell $z$ and of $z$'s parent cell, then it is sufficient to store $c$ in the parent cell's array.
Further, we can construct the transit connections with the property that, if $c$ is a long distance connection of $z$'s parent, then $c$ is a long distance connection of $z$.
A consequence of this is that every connection is contained in at most one thinned out long distance connection set.
The memory consumption is therefore linear in the number of connections.

To prove that our algorithm is correct, we show that for every Pareto-optimal journey $j$, there exists a Pareto-optimal journey $j'$ that only enters or exists trips in the merged connection subset $\mathcal{C_S}$, such that $j$ and $j'$ have the same departure and arrival time and have the same number of legs.
Before formally proving the correctness, we illustrate the employed arguments using an example.

\begin{figure}
\begin{center}
\includegraphics[scale=2]{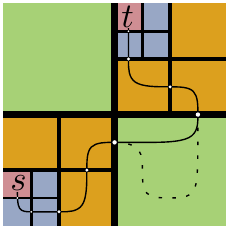}
\end{center}
\caption{
Multi level journey example from stop $s$ to stop $t$. 
}
\label{fig:multilevel-journey}
\end{figure}

Figure~\ref{fig:multilevel-journey} illustrates a stop set that was recursively partitioned along the straight solid lines.
At every level, every cell was partitioned into four parts.
The thickness of the lines indicates the level -- the thicker the line the higher a level.
The solid bent line represents the journey $j$.
The colored areas represent transit connections merged into $\mathcal{C_S}$.
Red means lowest level, blue is next higher, then orange and green is the highest level.
The white dots represent connections, where $j$ crosses cell boundaries.
The dotted line represents an alternative subjourney of $j$ within the green bottom-right cell.

The journey $j$ consists of a prefix from $s$ to the first boundary connection, several subjourneys that traverse cells, and a suffix from the last boundary connection to $t$.
The subjourneys are enclosed by the white dots in Figure~\ref{fig:multilevel-journey}.
The constructed journey $j'$ has the same prefix and suffix and crosses the cell boundaries in the same connections as $j$, i.e., in the white dots.
Because the prefixes and the suffixes are equal, the departure and arrival times of $j$ and $j'$ are equal.
The subjourneys within a cell can differ.
For example it is possible that $j$ uses the solid line, whereas $j'$ uses the dotted line.
By construction, we know that for every cell, entry connection, and exit connection, there exists a subjourney with a minimum number of transfers that only enters or exits trips at transit connections in $\mathcal{C_S}$. 
For every cell $z$ that $j$ traverses, replace the subjourney of $j$ within $z$ with the corresponding minimum transfer journey to obtain $j'$.
$j'$ cannot have more transfers than $j$ because otherwise one of the employed subjourneys would not have had a minimum number of transfers.
Further, as $j$ was Pareto-optimal, $j'$ cannot have fewer transfers than $j$.
$j$ and $j'$ therefore have the same number of transfers.

In the following, we describe the details of our multi level extension.
The text is organized along the three main phases.
It first describes the the preprocessing phase which mostly consists of a graph partitioning problem.
Afterwards, the customization phase is described, which primarily consists of computing the transit connections.
Next, we explain how to perform the queries, which consists of computing $\mathcal{C_S}$.
Finally, we present an experimental evaluation of the algorithm and a comparison with related work.

\subsection{Phase 1: Partitioning the Stop Set}

A $k$-partition of the stop set $V$ divides $V$ into $k$ cells such that every stop is in exactly one cell.
We require that stops connected by footpaths must be in the same cell, i.e., footpaths must not cross cell borders.
A connection is interior (exterior) to a cell if it departs at a stop inside (outside) the cell. 
In an $l$-level partition with $k$ children, the stop set is recursively split into $k$ cells over $l$ levels.
At the bottom level there are $k^l$ cells.
The top level consists of a single cell that contains all stops.
The parent $p$ of a cell $z$ is the cell which was split to create $z$.
Similarly, $z$ is a child of $p$.
The bottom level cells do not have children and the top level cell does not have a parent. 

The preprocessing step consists of computing an $l$-level partition with $k$ children where $l$ and $k$ are tuning parameters of the algorithm.
We perform the partitioning using a graph partitioner.
From the timetable, we build an undirected, weighted graph as follows:
The stops form the node set of the graph. 
There is an edge between two nodes if there is a connection or footpath between the corresponding stops.
If there is a footpath, we weight the corresponding edge with $\infty$, to assure that it is not cut.
Otherwise, the weight of an edge reflects the number of connections between the edge's endpoints.
We partition the graph into $k$ parts using KaHip v1.0c\footnote{We also tried Metis in a preliminary experiment and the resulting query and customization running times were dominated.} with 20\% imbalance.
We recursively repeat this operation $l$ times.
We run KaHip using the ``strong''-preconfiguration. 
Unfortunately, the results we get from KaHip vary significantly depending on the random seed given to it.
We therefore run KaHip at each level in a loop with varying seeds until for 10 iterations no smaller cut is found.
This setup is definitely not the fastest partitioning method.
Fortunately, it is fast enough and the obtained cuts are reliably small.

\subsection{Phase 2: Computing Transit Connections}

In this section, we describe how to compute the transit and long distance connections.
We start by describing how to compute journeys with a minimum number of transfers.
In the next step, we describe how to use this algorithm to compute transit and long distance connections sequentially.
Finally, we describe how the customization algorithm can efficiently be parallelized.

\paragraph{Minimum Number of Transfers.}

\begin{figure}
\begin{algorithm2e}[H]
\lFor{all stops $x$}{$S[x]\gets\{ (\infty, \infty) \}$}
\lFor{all trips $x$}{$T[x]\gets\infty$}
\BlankLine
\For{all footpaths $f$ with $f_\arrstop = (c_t)_\depstop$}{
	Incorporate $((c_t)_\deptime-f_\dur,0)$ into profile of $S[f_\depstop]$\;
}
$T[c_t]\gets 0$\;
\BlankLine
\For{all connections $c$ strictly before $c_t$ decreasing by $c_\deptime$}{
	$\tau_2\gets T[c_\trip]$\;
	$\tau_3\gets($evaluate~$S[c_\arrstop]$ at $c_\arrtime)+1$\;
	\BlankLine
	$\tau_c\gets \min \{\tau_2,\tau_3\}$\;
	\BlankLine
	\If{$(c_\deptime,\tau_c)$ is non-dominated in profile of $S[c_\arrstop]$}{
		\For{all footpaths $f$ with $f_\arrstop = c_\depstop$}{
			Incorporate $(c_\deptime-f_\dur,\tau_c)$ into profile of $S[f_\depstop]$\;
		}
	}
	$T[c_\trip]\gets \tau_c$\;
	\BlankLine
	\If{$c\in C_s$}{
		Extract journey from $c$ to $c_t$ and mark transit connections\;
	}
} 
\end{algorithm2e}
\caption{Minimum transfer profile algorithm between connections.}
\label{alg:profile-min-trans} 
\end{figure}

To compute overlays, our algorithm needs to quickly compute journeys with a minimum number of transfers between each pair of connection entering and leaving a cell $z$.
We implement this using a variant of the earliest arrival Connection Scan profile algorithm with secondary transfer optimization.
We run this algorithm on a part of the network restricted to the connection inside of $z$.
Contrary to the algorithms described in Section~\ref{sec:base-csa}, the traveler does not start and end at a stop but starts in an entry connection $c_s$ and ends in an exit connection $c_t$.
A key observation is that, for a fixed target exit connection $c_t$, the arrival times of all journeys are the same.
The algorithm therefore only optimizes the number of transfers.

Our algorithm iterates in an outer loop over the exit connections $c_t$ and computes a backward profile for each.
In an inner loop, it iterates over all entry connections and evaluates the profile. 
It extracts a corresponding journey $j$ from $c_s$ to $c_t$.
All connections where $j$ exits or enters a trip are marked as transit connections including $c_s$ and $c_t$.

In the following, we describe the inner loop of our algorithm in greater detail.
In the inner loop, we have a fixed target exit connection $c_t$ and a set of source enter connections $C_s$.
For every connection $c_s$ in $C_s$, a minimum transfer journey from $c_s$ to $c_t$ should be computed.
The pseudo-code of the algorithm is given in Figure~\ref{alg:profile-min-trans}.
We start our scan with the connection $c_t$, as all connections after it are obviously not reachable.
The body of the loop is left mostly unchanged compared to the base algorithm. 
There is only one major modification:
We no longer compute a walk-to-target time $\tau_1$. 
It would be $\infty$ for every connection except $c_t$, which is not useful.
Instead, we introduce a special case for $c_t$ outside of the loop.
As all journey end in $c_t$, it does not matter what arrival time we give $c_t$.
For simplicity, we use 0.

To quickly extract journeys, we use journey pointers. 
The extraction works analogous to the base algorithm with one modification.
To extract the first leg of a journey starting in a connection $c_s$, we need to look at the exit connection stored with $c_s$'s trip.
However, this exit connection may be overwritten, if there are several entry connection inside of this trip.
We therefore extract the journey directly after processing $c_s$.
A trip can contain multiple entry connections, if it leaves and enters a cell multiple times. 

\paragraph{Computing Transit and Long-Distance Connections.}

We compute transit connections bottom-up, i.e., the transit connections of the lowest level are computed first.
To accelerate the computations on the higher levels, we use transit connections of lower levels.
A central observation is that for every cell $z$ there is a valid transit connection set $T_z$ that is a subset of the long-distance connection set $L_z$ of $z$.
Our algorithm thus works as following:
For all levels $l$ from bottom to the top and all cells $z$ in the level $l$, first compute the long-distance connections $L_z$ of $z$, then compute the transit connections $T_z$ of $z$ by restricting the search to the long-distance connections $L_z$.
For the lowest level cells, the long-distance connection set contains all interior connections.
In a second faster step, we iterate a second time over the levels and cells and thin out the long distance connection sets.

\paragraph{Parallelization.}

Significant speedups can be achieved by parallelizing the transit connection computation.
There are two levels of granularity on which we can parallelize: (1) we can compute the transit connections of cells on the same level in parallel, and (2) we can compute the journeys for different exit connections within a cell in parallel.
The former has the advantage that the data structures of different cells are completely disjoint, minimizing the necessary communication and synchronization. 
However, the boundary sizes of cells are very skewed because of urban centers.
It is therefore difficult to keep all threads fully occupied.
The later is more fine-grained and therefore allows us to fully occupy all threads.
However, more communication and synchronization is needed.

We use a hybrid approach that combines the best properties of both.
In a first step, we sort all cells first by level from bottom to top and as a secondary criterion by decreasing boundary size.
The obtained list is a topological sorting of the dependencies between the cells.
We sort the cells by boundary size, to assure that the more expensive cells, i.e., those with a larger boundary, are processed first. 
We attach to every cell an atomic counter, that indicates the number of children cells have not yet been computed.
If this counter reaches zero then processing can start.
The bottom level cells start with a counter of zero.
The higher level cells start with the number of children used in the partitioning.
We spawn as many threads as the hardware can process simultaneously.
Every thread iterates over the list of cells once.
If it finds a cell with counter 0, it grabs the cells by atomically increases the counter to prevent other threads from seeing the 0 counter value.
The thread then processes the cell and once it is finished decreases the counter of its parent.
When a thread reaches the end of the list, it puts itself into a pool of idle threads.
The threads that are still processing cells, look at whether this pool is non-empty between processing two target exit connections.
If it is non-empty, they extract an idle thread atomically and the thread helps processing the cell.
At the end of processing a cell, all threads but the main one are put back into the idle pool.

\subsection{Phase 3: Answering Queries}

In this section, we describe how to compute the connection subset $\mathcal{C_S}$ and how the query algorithms need to be modified.
\paragraph{2-way vs $k$-way.}
Efficiently computing $\mathcal{C_S}$ is a crucial component of an efficient implementation of our query algorithms.
The input consists of several arrays of sorted data that should be merged.
Three major strategies exists~\cite{k-taocp-97}.
The first consists of iteratively performing a two-way merge to combine pairs of arrays.
The other two are direct $k$-way merges. 
The idea consists of storing a pointer into each array and iteratively determining the smallest element and increasing the corresponding pointer.
Determining which element is the smallest is the challenging part.
There are two approaches. 
One can use a binary heap or one can use tournament trees.
All three variants have a worst case running time of $O(n \log k)$, where $n$ is the total number of elements.
We implemented all three variants and in preliminary experiments on our data set, the iterative two-way merge was the fastest, followed by the binary heap, and the tournament heaps came last.
Unfortunately, the iterative two-way merge can only compute $\mathcal{C_S}$ as a whole.
The direct $k$-way approach allows us to perform a partial merge, i.e., only merge the first $x$ connections, which is enough for some of our applications.

\paragraph{Profile Queries.}

We implement the earliest arrival and Pareto profile algorithms in the straight forward way.
First, our algorithm computes $\mathcal{C_S}$ using an iterative two-way merge.
In a second step, the Connection Scan base algorithm is applied restricted to $\mathcal{C_S}$.

\paragraph{Earliest Arrival Queries.}
Earliest arrival queries have a start and stop criterion.
We therefore use a direct $k$-way merge to avoid computing parts of $\mathcal{C_S}$ that will not be scanned.
For each of the $k$ arrays, we run a binary search to determine the first connection not before the source time.
We then start the $k$-way merge. 
We run the merging process until the stop criterion aborts the scan.

\paragraph{Range Queries.}
For range queries, we use a similar approach.
We first perform the $k$ binary search and then start with the $k$-way merge.
To determine the reachable trips, we execute a non-profile earliest arrival scan.
Once the stop-criterion activates, we continue the merge until all connections departing within the desired range have been computed.
We store the output of the merging process into a temporary array.
We then run the profile algorithm restricted to the connections in this temporary array.

\subsection{Experiments}

\label{sec:csa_accel_exp}

In this section, we experimentally evaluate CSAccel.
We use the experimental setup described in Section \ref{sec:experimental-setup}.
We start by comparing various multilevel configuration in terms of preprocessing, earliest arrival query, and profile query running time.
For one of the best configurations, we present an evaluation of range queries.
We conclude with a comparison of experimental results with related work.

\begin{table}
\begin{center}
\begin{tabular}{lrrrrrrr}
\toprule
 & Low & \multicolumn{2}{c}{Setup [s]} &  & \multicolumn{3}{c}{Query [ms]} \\
\cmidrule(lr){3-4} \cmidrule(lr){6-8}
Instance & Cell & Part. & Cust. & Conn [K] & EA & EA-Prof & Par-Prof \\
\midrule
Germany-2-9 & 512 & {\fontseries{b}\selectfont 2\,483.7} & 157.7 & 897.2 & 6.6 & 49.1 & {\fontseries{b}\selectfont 75.0}\\
Germany-2-12 & 4\,096 & 7\,300.5 & 329.1 & {\fontseries{b}\selectfont 751.0} & {\fontseries{b}\selectfont 6.2} & {\fontseries{b}\selectfont 47.3} & 78.9\\
Germany-3-5 & {\fontseries{b}\selectfont 243} & 2\,604.8 & 114.5 & 1\,184.6 & 7.2 & 56.7 & 85.9\\
Germany-3-7 & 2\,187 & 4\,918.8 & 220.3 & 868.8 & 6.5 & 51.0 & 79.3\\
Germany-4-5 & 1\,024 & 3\,746.7 & 157.7 & 1\,023.4 & 7.2 & 55.6 & 84.6\\
Germany-4-6 & 4\,096 & 7\,214.2 & 229.1 & 988.9 & 7.0 & 57.1 & 89.0\\
Germany-8-3 & 512 & 3\,170.2 & {\fontseries{b}\selectfont 113.6} & 1\,331.4 & 7.9 & 66.2 & 99.3\\
Germany-8-4 & 4\,096 & 7\,367.9 & 176.0 & 1\,252.2 & 7.7 & 67.5 & 102.3\\
\midrule
London-2-7 & 128 & 253.5 & 101.2 & 1\,933.6 & 2.6 & {\fontseries{b}\selectfont 91.7} & {\fontseries{b}\selectfont 134.4}\\
London-2-10 & 1\,024 & 838.1 & 126.6 & {\fontseries{b}\selectfont 1\,920.8} & 2.6 & 96.3 & 137.5\\
London-3-3 & {\fontseries{b}\selectfont 27} & {\fontseries{b}\selectfont 124.3} & 54.1 & 2\,181.2 & 2.5 & 99.2 & 140.6\\
London-3-5 & 243 & 338.3 & 74.1 & 2\,085.0 & 2.3 & 92.6 & 137.5\\
London-4-3 & 64 & 230.0 & 51.7 & 2\,226.2 & 2.3 & 95.0 & 141.2\\
London-4-5 & 1\,024 & 718.6 & 67.1 & 2\,193.6 & 2.2 & 97.1 & 141.5\\
London-8-2 & 64 & 186.7 & {\fontseries{b}\selectfont 32.9} & 2\,490.1 & 2.0 & 97.7 & 147.9\\
London-8-3 & 512 & 579.9 & 40.6 & 2\,464.9 & {\fontseries{b}\selectfont 1.9} & 97.1 & 147.0\\
\bottomrule
\end{tabular}

\end{center}

\caption{Preprocessing, and customization running times, number of lowest level cells, number of connections in filter, and average query running times for earliest arrival time, earliest arrival profile, and Pareto profile.
Preprocessing and customization were run on the older machine. Customization was parallelized with 16 threads.}
\label{tab:accel_times}
\end{table}

\paragraph{Query Experiments.}
In Table~\ref{tab:accel_times}, we experimentally evaluate Connection Scan Accelerated for various configurations.
A label X-$c$-$l$ refers to a recursive partitioning of timetable X, over $l$ levels, with $c$ children per level.
The number of lowest level cells is $c^l$.
We report $c^l$ in the table to give an overview over the granularity of the partition.
We report the time needed to compute the multilevel partitioning with KaHip version 1.0c.
In preliminary experiments, we also tried using Metis. 
The partition running times were significantly lower but the customization and query running times were higher.
As we focus on the later two values, we therefore refrain from reporting these experiments.
Further, we report the customization running times.
Both the preprocessing and customization experiments were performed on our older Xeon E5-2670 machine with 16 physical hardware threads. 
The customization running times are parallelized, whereas the preprocessing running times are sequential.
We also report running times for various query variants.
The query experiments were run sequentially on the newer Xeon E5-1630v3 machine.
We report the average running times for the earliest arrival time, the earliest arrival profile, and the Pareto profile problem settings.
Journey extraction was not performed. 
Range query experiments are reported in Table~\ref{tab:accel_range} and discussed later in this section.
We activated all optimizations of the base algorithm, i.e., start and stop criteria, source domination, limited walking, and AVX.
Beside the query running times, we also report the number of connections in $\mathcal{C_S}$.
These are the number of connections that are scanned by the profile algorithms.
The earliest arrival algorithm only needs to scan a subset of these connections because of the start and the stop criteria.

The preprocessing running times roughly grows with the number of lowest level cells. 
This is non-surprising, as the number of partitioner invocations follows this trend.
The customization running times follow the same general trend and grow with the number of cells.
However, having a large number of children helps the customization but hampers the partitioning.
The minimum partitioning running times are therefore achieved for Germany-2-9 and London-3-3, which have a low number of children, whereas the customization running times are minimum for Germany-8-3 and London-8-2, i.e., a high number of children.
To minimize the number of connections, a recursive bisection strategy with many levels performs best.
Scanning fewer connections reduces the running time spent in the Connection Scan algorithm.
Query running times are therefore comparatively fast for nested dissection configurations.
The only exception to this trend is the earliest arrival running time on London, which is fastest for London-8-3 and London-8-2.
The explanation is that the $k$-way merge step dominates the running time.
Having more levels results in more arrays to be merged and thus increases the running time of the merge step.
London-8-3 and London-8-2 have the fewest levels and therefore the fastest merge steps.

Compared to the non-accelerated running times, we observe a significant decrease in running times for every query type on the Germany instance.
However, the speedups are significantly less impressive on the London instance.
The explanation is that the London instance only has 4\,850K connections but even for London-2-10 1\,921K connections have to be scanned.
The speedup is therefore very slim.
In fact for the earliest arrival time problem, the base algorithm is even faster.
The explanation is that it is faster to scan the few additional connections, than to perform the $k$-way merge.

It is very surprising that CSAccel is faster in absolute terms on the Germany instance compared to the London instance.
There are several reasons for this effect.
London has at the time of writing nearly 9M inhabitants. 
This contrasts with the largest German city Berlin that has only 3.5M inhabitants.
As a consequence the London urban transit is larger than any urban transit contained in the Germany instance.
Another explanation is the difference in stop modeling.
The London instance has a reflexive transfer model with usually one stop per platform.
The Germany instance groups nearby platforms into one stop and uses loops in the footpath graph.
London is thus modeled in greater detail than Berlin.
Having more stops increases computation times.

\begin{table}
\begin{center}

\begin{tabular}{ccr}
\toprule
Instance & Pareto & Running Time [ms]  \\
\midrule
Germany-2-12 & $\circ$ & 17.9 \\
Germany-2-12 & $\bullet$ & 24.7 \\
\midrule
London-2-7 & $\circ$ & 11.2 \\
London-2-7 & $\bullet$ & 12.0 \\
\bottomrule
\end{tabular}

\end{center}
\caption{Accelerated range queries average running times.}
\label{tab:accel_range}
\end{table}

\paragraph{Range Queries.}

In Table~\ref{tab:accel_range}, we report range query results.
We restrict our exposition to Germany-2-12 and London-2-7, as we obtained very good results for these configurations for non-range profile queries.
Compared to the profile query running times, we observe significant speedups.
These speedups are similar to those observed when comparing profile with range queries in the non-accelerated Connection Scan base algorithm.
The speedups are due to cache effects and fewer connections being scanned.

\begin{table}
\begin{center}
\begin{tabular}{lrrrrrrr}
\toprule
 & & & & \multicolumn{4}{c}{Query Running Time [ms]}\\

\cmidrule(lr){5-8}
 & \#Stop & \#Conn & Prepro & \multicolumn{2}{c}{Fixed-Dep} & \multicolumn{2}{c}{Profile} \\
\cmidrule(lr){5-6} \cmidrule(lr){7-8}

Algo                         & [K] &  [M] &  [min] &       EA & Pareto & EA  & Pareto \\

\midrule

RAPTOR~\cite{dpw-rbptr-14}   &      252.4 &       46.2 &          --- &      --- &             325.8 &      --- &             4\,730 \\
CSA                          &      252.4 &       46.2 &          0.1 &     44.9 & ${259.2}^\dagger$ &   1\,246 &             2\,490 \\
CSAccel-2-12                 &      252.4 &       46.2 &     (122)+88 &      6.2 &    $24.7^\dagger$ &     47.3 &               78.9 \\
TP~\cite{bs-fbspt-14}        &      248.4 &       13.9 &      22\,320 &      --- &               0.3 &      --- &                5.0 \\
S-TP~\cite{bhs-stp-16}       &      250.0 &       15.0 &          990 &      --- &              32.0 &      --- &                --- \\
TB-ST~\cite{w-tbptr-16}      &      247.9 &       27.1 &      13\,878 &      --- &             0.156 &      --- &              0.512 \\
TB~\cite{w-tbptr-15}         &      249.7 &       46.1 &           39 &      --- &              40.8 &      --- &              301.7 \\

\midrule

RAPTOR~\cite{dpw-rbptr-14}   &       20.8 &        4.9 &          --- &      --- &              6.4 &      --- &             680 \\
CSA                          &       20.8 &        4.9 &      $<$ 0.1 &      1.2 & ${10.7}^\dagger$ &    106.9 &           170.2 \\
CSAccel-2-7                  &       20.8 &        4.9 &       (4)+27 &      2.6 &   $12.0^\dagger$ &     91.7 &           134.4 \\
PTL~\cite{ddpw-ptl-15}       &       20.8 &        5.1 &           54 &   0.0028 &              --- &   0.074 &             --- \\
Pareto-PTL~\cite{ddpw-ptl-15}&       20.8 &        5.1 &       2\,958 &      --- &           0.0266 &      --- &             --- \\
TB-ST~\cite{w-tbptr-16}      &       20.8 &        5.0 &          696 &      --- &              1.7 &      --- &            16.1 \\
TB~\cite{w-tbptr-15}         &       20.8 &        5.0 &            6 &      --- &              1.2 &      --- &            70.0 \\

\bottomrule
\end{tabular}

\end{center}

\caption{Comparison of various preprocessing-based algorithms for timetable routing.
The top results are for Germany instances and the bottom results for London instances.
}
\label{tab:accel_comp}
\end{table}

\paragraph{Comparison with Related Work.}

In Table~\ref{tab:accel_comp}, we compare various algorithms for timetable routing.
Some make use of very heavy-weight preprocessing, while others are very lightweight.
We compare RAPTOR~\cite{dpw-rbptr-14}, our Connection Scan algorithm (CSA), our multilevel extension (CSAccel), public transit labeling (PTL,Pareto-PTL),~\cite{ddpw-ptl-15}, Trip-Based routing (TB)~\cite{w-tbptr-15,w-tbptr-16}, and transfer patterns (TP)~\cite{bceghrv-frvlp-10,bs-fbspt-14,bhs-stp-16}.
Two PTL variants exist: the base version (PTL), and an extension that supports optimizing transfers in the Pareto-sense (Pareto-PTL).
There are also two variants of Trip-Based routing: the base variant TB~\cite{w-tbptr-15} and a newer version~\cite{w-tbptr-16} (TB-ST) that precomputes prefix and suffix trees.
Transfer patterns were introduced in~\cite{bceghrv-frvlp-10} and overhauled in~\cite{bs-fbspt-14}.
We refer to the overhauled version as TP.
Another variant called ``Scalable Transfer Patterns'' was introduced in~\cite{bhs-stp-16}.
We refer to it as S-TP.

The various papers use different instances that are based upon the same input data.
The only exception is S-TP which uses a newer version of the Deutsche Bahn data set.
Unfortunately, the papers significantly differ in how they extract a formal timetable from the input.
The variations on the London instance are comparatively small and originate from differences in how data errors are repaired.

The differences on the Germany instance are more significant.
S-TP is based on newer input data than TP and therefore the corresponding numbers differ.
The TP instance is based on the same input as the other papers.

CSAccel, TB, and TB-ST extract a two day instance.
TP and S-TP extract a single day but have days-of-operation flags.
Using these flags multiple days discerned.
The difference between a two day instance and a one day instance with flags explains the different number of connections between TP and TB.
The difference in size between TB-ST and TB originates from a different interpretation.
Following our original CSAccel paper, TB extracts all connections regardless of the day of operation.
This is done because some local operators do not have a schedule for every day.
The downside of this approach is that several variations of the same trip appear simultaneously.
For example some trips drive differently on Sundays than on workdays.
Fortunately, having more connections will most likely not decrease the running times.
The reported numbers of CSAccel and TB are therefore upper bounds.
The difference between CSAccel and TB is the result of correcting data errors differently.

These differences in instances makes a detailed comparison difficult, if not impossible.
We can only confidently compare orders of magnitude between the running times reported in the various papers.
We therefore refrain from scaling running times with respect to machines as the numbers are not directly comparable anyway.
Further, cache sizes can have a larger impact on the running time than the processor clock speed as demonstrated in Table~\ref{tab:ea_profile}.
Unfortunately, cache sizes are rarely reported in papers.
Scaling by processor clock speed is therefore not meaningful, even if the instances were equal.

All reported running times are sequentially.
The reason that the preprocessing times seem large, stems from the fact that papers usually report parallelized running times.
CSAccel is the only algorithm to split preprocessing into two phases.
We therefore report its preprocessing as $(p)+c$ where $p$ is the preprocessing and $c$ the customization running time.

Unfortunately, we cannot report numbers for every query type and algorithm.
This has various reasons.
For RAPTOR, we do not report non-Pareto running times because RAPTOR does not benefit from not optimizing transfers.
We report no preprocessing time for RAPTOR, because the original implementation that we use was not tuned for this criteria.
For CSA, we report range query running times instead of non-profile Pareto running times.
The reason is that we do not know how to implement non-profile Pareto queries in a way that significantly outperforms range queries.
Range queries usually compute more journeys because they allow for a flexible departure time.
In some sense the problem is therefore harder, 
However, the latest arrival time is bounded, which makes the problem also somewhat easier.
Fortunately, both problems have similar applications and therefore we present the results in the same column.
The CSA numbers are marked with a $\dagger$ to illustrate that range queries are computed.
PTL's preprocessing can optionally optimize transfers.
This explains the two PTL variants in the table.
The authors evaluated the non-transfer variant for earliest arrival time and earliest arrival profiles.
Unfortunately, the authors were not able to evaluate PTL on the Germany instance because of legal restrictions.
Further, they did not evaluate Pareto-profile queries.
The trip-based techniques TB and TB-ST, just as RAPTOR, do not benefit from not optimizing transfers in the Pareto-sense and thus no earliest-arrival-only numbers exist.
The transfer patterns techniques TP and S-TP could in theory be implemented in a variant that only optimizes arrival time.
This theoretical variant would probably benefit from smaller query graphs but it was, to the best of our knowledge, never implemented and thus we cannot report numbers.
Unfortunately, TP was not evaluated on the London instance.

\paragraph{Discussion of the Germany instance.}

Ordering the algorithms by preprocessing running times yields: CSA, RAPTOR, TB, CSAccel, S-TP, TB-ST, and finally TP.
With the exception of TB-ST and TP, the gaps between each of these techniques are large enough that we can be confident, that the differences are not solely due to differences in experimental setup.
Comparing query running times is more difficult because of the various query types.
Further, the differences between running times are smaller. 
It is thus possible that a number is only lower because of a different experimental setup.
With respect to non-profile Pareto query running times, the group of fastest algorithms clearly contains TP and TB-ST. 
The next-slower group contains CSAccel, S-TP, and TB.
The slowest group contains CSA and RAPTOR.
Meaningfully comparing algorithms within a group requires a more similar experimental setup.
Overall, CSAccel strikes a good trade-off between the various criteria. 
No query running time is above 100ms and preprocessing running times are manageable.

\paragraph{Discussion of the London instance.}

On the London instance, only PTL achieves a speedup above a factor of 11 over the CSA baseline.
Given the simplicity and near-instant preprocessing running times, this makes CSA a perfect fit for this instance.
PTL achieves an interesting performance trade-off when not optimizing transfers. 
The preprocessing time is slightly below an hour, which is still somewhat manageable.
The benefit is that PTL achieves query running times are on the microsecond scale.
Unfortunately, when additionally optimizing transfers the preprocessing running time of PTL becomes prohibitively large.
Overall, assuming that some form of transfer optimization is required, we recommend using CSA as it is never drastically slower than the alternatives but is simple to implement and can update the timetable almost instantly.

\paragraph{Section Conclusions.}

The conclusions we draw from the experiments are mixed and depend on the test instance.

On the Germany instance, CSAccel can answer Pareto range queries on average in about 25ms.
This is a significant improvement over the 250ms of CSA.
Interactive timetable systems with a high throughput can be constructed with an average query running times of 25ms.
However, the factor 10 speedup comes at a high cost.

The obvious cost is the increased preprocessing time.
CSA needs 10 seconds single core to adjust to a completely new timetable.
On the other hand, CSAccel requires 2min with 16 cores.
Requiring 2min to update the timetable is probably acceptable in practice but far from ideal.
Further, CSAccel requires that the new timetable is sufficiently similar to the old one.
CSA does not have this restriction.

A further cost associated with CSAccel is the significant increase in code and algorithm complexity compared to CSA.
Arguably the most important selling point of CSA is its simplicity. 
It is so simple that not even a heap-based priority queue is needed as a component.
The earliest arrival CSA base is arguably even easier than Dijkstra's algorithm.
CSAccel requires solving among other things a graph partitioning problem as subroutine.
This is an NP-hard task and the state-of-the-art heuristics alone have a complexity far exceeding that of CSA.
Depending on the application, the increase in complexity of CSAccel compared to CSA might even be worse than the increased preprocessing times.

However, for applications where query running times of 250ms are prohibitive and realtime updates are needed, CSAccel is still attractive because of the lack of alternatives.
None of the other techniques achieves preprocessing running times on the order of only a few minutes and similar query running times.

On the urban London network, the decrease in query running time of CSAccel over the CSA baseline is slim.
We do not believe that it outweighs the significantly larger preprocessing costs and especially not the significant increase in code complexity.
Use CSA in primarily urban networks.

An advantage of CSA is that its performance is nearly independent of the timetable structure and mostly depends on its size.
On the other hand, the performance of CSAccel is heavily dependent on the timetable structure, as the differences between the test instances shows.

When starting a new timetable information system, using CSA until the query running times get prohibitive is a good approach.
CSA is easy to implement and therefore not much effort is lost when switching to other approaches.
Further, chances are high that the size of your timetables will never reach the prohibitive size.
For example, we have not been able to assemble a realistic timetable with only rail-bound vehicles that was large enough.
The Germany test instance is only large enough because buses are included.

\subsection{Differences from original CSAccel publication~\cite{sw-csa-13}}

In this section, we briefly explain where the differences in experimental results between the experimental evaluation presented here and the original conference article~\cite{sw-csa-13} stems from.
If you have only read this journal article then you can ignore this section.

In~\cite{sw-csa-13} we report 1794.7 seconds to perform a customization using 16 threads on the same test machine on the Germany instance. 
We improved this to 113.6 seconds, which constitutes an improvement of over a magnitude.
This improvement is the combination of three smaller changes.

The first and largest improvement is due to an improved parallelization scheme.
The cell boundary sizes differ significantly.
However, in \cite{sw-csa-13} we only parallelized over the cells in a level but not not across level and not within a cell. 
The result was that during large parts of the customization only a single thread was working. 
The new parallelization approach manages to keep all threads occupied over nearly the whole process.

The second improvement is due to using KaHip instead of Metis. 
The newer KaHip versions achieve smaller cut sizes than Metis. 
This translates into slightly smaller query running times and drastically lower customization times.

The third and smallest improvement is due to implementing the minimum transfer CSA more carefully.

\section{Minimum Expected Arrival Time}
\label{sec:MEAT}

The Connection Scan profile framework is very flexible.
In the previous sections, we have seen how the timetable can be adjusted to account for known delays.
In this section, we want to plan ahead and compute a journey that is robust with respect to unknown, future delays.
We do this by computing for every transfer backup journeys.
For every transfer in a journey from train $A$ to train $B$, we compute a list of backup trains $C_1,C_2\ldots$ that the traveler can take if he cannot reach train $B$ because $A$ is delayed.
If a transfer breaks, then a traveler should take the backup train with the earliest departure time that he can get.

\begin{figure}
\begin{center}
\includegraphics{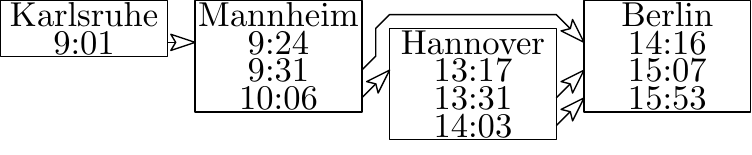}
\end{center}
\caption{Delay-Robust journey from Karlsruhe at 9:00 to Berlin.}
\label{fig:meat-example}
\end{figure}

An example of such a delay-robust journey from Karlsruhe to Berlin is depicted in Figure~\ref{fig:meat-example}.
We refer to the depicted graph as \emph{decision graph}.
If no train is delayed, then the traveler should take the train from Karlsruhe to Mannheim, transfer there, and take the direct train to Berlin.
The question is what the traveler should do, if he misses his connecting train in Mannheim. 
The answer is to take a later journey with an additional transfer in Hannover.
This additional transfer might also break.
The backup journey therefore needs its own backup journey.
Fortunately, the second backup goes directly to the target and therefore no further backup trains are necessary.

Further, examples can be generated using our proof-of-concept demonstration at \url{http://meatdemo.iti.kit.edu}.
As we expect readers to be unfamiliar with the concept, we highly recommend readers to experiment with the demonstration to get a basic understanding on an intuitive level before reading on. 

Computing such journeys is a very different setting than computing an earliest arrival journey with respect to a timetable aware of the realtime delay situation.
All the algorithm's decision need to be performed in advance, when the exact delays are not yet known.
To be able to do this, we assume that we have an estimation of how likely it is that a train is delayed.
We refer to this estimation as \emph{delay model}.
One way to obtain such an estimation is to aggregate historic delay data.
If a train was never delayed in the past, then we can assume that it is unlikely that it is delayed today.
If a train was nearly always delayed, then we definitely need to have a good backup journey.

Consider a train $A$ that is part of a very fast journey but no reasonable backup trains exist and $A$ is likely delayed.
A risk averse traveler will not want to take $A$ because it is too risky.
The algorithmically interesting question consists of identifying risky trains and avoiding them.
Neither optimizing the arrival time nor the number of trains achieves this.

While developing the Connection Scan algorithms, we have discovered a surprisingly easy way to solve this problem.
Consider the Connection Scan earliest arrival profile algorithm.
Suppose that the traveler arrives with a train $A$, transfers at a stop $X$, to take a train $B$.
We want to find the next best backup $C$ in case that the transfer breaks.
To compute this route our, the Connection Scan profile algorithm looks up $B$'s pair in $X$'s profile.
What will a traveler do, if he misses $B$? 
He will wait at $X$ for the next train $C$ heading into the correct direction to depart.
Formulated differently, $C$ is the backup train.
Computing $C$ is easy. 
The pair in $X$'s profile after $B$'s pair corresponds to $C$.

A problem remains.
If no reasonable backup exists, i.e., the transfer is very risky, then the so computed backup $C$ will arrive very late.
To solve this issue, we do not store the arrival time of the next train in the profiles. 
Instead, we store the average over all trains in the profile weighted by the probability of the traveler taking the train.
In probability theory, the expected value of a random variable is the average of all possible outcomes weighted by their probability.
Following this terminology, we refer to the modified arrival times in our profiles as \emph{expected arrival time}.
If the first train $B$ has an early arrival time but no good alternative exists, then the expected arrival time will be large.
Minimizing the expected arrival time therefore solves the problem.
We refer to the corresponding problem setting as \emph{minimum expected arrival time (MEAT)} problem.

The decision graph in Figure~\ref{fig:meat-example} is tiny.
Unfortunately, not all cities are as well connected as Karlsruhe and Berlin.
Decision graphs between more remote areas can quickly grow in size and contain backups over numerous layers.
We therefore investigate approaches to reduce the graph size and to represent it more compactly.

\subsection{Related Work to MEAT}
\label{sec:related-work-meat}
 
There has been a lot of research in the area of train networks and delays.
In contrast to our algorithm most of them compute single paths through the network instead of subgraphs containing all backups. 
To make this distinction clear we refer to such paths as \emph{single-path-journeys}.
The authors of \cite{dms-mcspt-08} define the reliability of a single-path-journey and propose to optimize it in the Pareto-sense with other criteria such as arrival time or the number of transfers. 
The availability of backups is not considered. 
The authors of \cite{bmpvw-rrupt-13}, based on delays occurred in the past, search for a single-path-journey that would have provided close to optimal travel times in every of the observed situations. 
Again, backups do not play a role.
The authors of \cite{gkmss-tprti-11} propose to first compute a set of safe transfers (i.e. those that always work).
They then develop algorithms to compute single-path-journeys that arrive before a given latest arrival time and only use safe transfers or at least minimize use of unsafe transfers.
The problem with this is approach is that unsafe transfers are avoided at all costs.
In the example of Figure~\ref{fig:meat-example}, the direct train from Mannheim to Berlin would be missed because the transfer is unsafe.
In \cite{ghms-rrti-13}, a robust primary journey is computed such that for every transfer stop a good backup single-path-journey to the target exists. 
However, the backups do not have their own backups. 
The approach optimizes the primary arrival time subject to a maximum backup arrival time.
The authors of \cite{fils-itrar-13} study the correlation between real world public transit schedules in Rom and compare them against the single-path-journeys computed by state-of-the-art route planners based on the scheduled timetable. 
They observe a significant discrepancy and conclude that one should consider the availability of good backups already at the planning stage. 
The authors of \cite{bss-drtpp-13} examine delay-robustness in a different context: 
Having computed a set of transfer patters on a scheduled timetable in a urban setting, they show that single-path-journeys based on these patterns are still nearly optimal even when introducing delays. 
The conclusion is that these sets are fairly robust (i.e., the paths in the delayed timetable often use the same or similar patterns). 
In \cite{bs-fbgr-14} the authors propose to present to the user a small set of transfer patterns that cover most optimal journeys. 
They show that in an urban setting few patterns are enough to cover most single-path-journeys.
In a different line of work, the authors of \cite{bgmo-sdplt-11} investigate how a delay-perturbed timetable will evolve over time using stochastic methods. 
Their study shows that this is a computationally expensive task (running time in the seconds) if the delay model accounts many real-world details. 
Using a model with such a degree of realism therefore seems unfeasible for delay-robust route planning (requiring query times in the milliseconds).

\subsection{Delay Model}
\label{sec:delay_model}

Every random variable $X$ in this work is denoted by capital letters, is continuous, non-negative, and has a maximum value $\max X$. 
We denote by $P[X\le x]$ the probability that the random variable is below some constant $x$ and by $E[X]$ the expected value of $X$.

A crucial component of any delay-robust routing system is choosing against which types of delays the system should be robust and how to model these delays. 
This choice has deep implications throughout the whole system. 
While a too simplistic model does not yield useful routes, a too complicated model makes routing algorithms too inefficient to be useful in interactive timetable information systems. 
We therefore propose a simple stochastic model.
While our model that does not cover every situation and is not delay-robust in every possible scenario, it works well enough to give useful routes with backups. 
Further, we were not able to construct a proof-of-concept implementation for a more complex model while maintaining reasonable query running times.

The central simplification is that we assume that all random variables are independent. 
Clearly, in reality this is not always the case.
However, if delays between many trains interact then the timetable perturbation must be significant. 
An example of a significant perturbation is a train track that is blocked for an extended period of time.
As reaction to such a perturbation even trains in the medium or distant future need to be rescheduled (or arrive at least not on-time). 
The set of possible outcomes and the associated uncertainty is huge. 
Accounting for every outcome seems infeasible to us. 
We argue that if the perturbation is large then we cannot account for all possible recovery scenarios in advance. 
Instead, the user should replan his journey based on the realtime delay situation. 
Furthermore, even if we could account for all scenarios, we would still face the problem of explaining every possible outcome to the user, which is a show-stopper in practice. 
Our model therefore only accounts for small disturbances as we only intend to be robust against these.
We believe that assuming independence for small disturbances is a model simplification that is acceptable in practice.

Formally, our model contains one random variable $\mathcal{D}_{c}$ per connection $c$. 
This variable indicates with which delay the train will arrive at $c_{\arrstop}$. 
We assume that all connections depart on time. 
This assumption does not induce a significant error because it roughly does not matter whether the incoming or the outgoing train is delayed. 
Furthermore, we assume that every connection $c$ has a maximum delay, i.e., $\max\mathcal{D}_{c}$ is a finite value.
Finally, we assume that all random variables are independent. 
Delays between trips are independent because if they were not then the perturbation would be large. 
We can assume that delays within a trip are independent as there nearly never exists an optimal decision graph that uses a trip more than once.

We assume that the changing times at stops are encoded in $\mathcal{D}_{c}$.
An transfer with a slack time below the regular change time should have a very low success probability but it should not be zero.
This way the computed decision graph will also include the very risky transfers that in practice only have a chance of working if the outgoing train departs delayed.
However, as the probability is low not much weight is attributed to them.
We further assume that inter-stop footpaths are handled by contracting adjacent stops and adjusting the change time.
These simplifications allows us to omit footpath and change time handling from the algorithm.
Fortunately, for applications that require them they can be incorporated analogously to how they are handled in the earliest arrival profile Connection Scan algorithm.

The only remaining modeling issue is to define what distribution the random variables $\mathcal{D}_{c}$ should have. 
An obvious choice is to estimate a distribution based on historic delay data. 
However, this has two shortcomings: 
\begin{itemize}
\item it is hard to get access to delay data (we do not have it), and 
\item you need to have records of many days with precisely the same planned schedule. 
\end{itemize}
Suppose for example that the user is in the middle of his journey and a significant perturbation occurs. 
The operator then adjusts the short-term timetable to reflect this and the user wants to reroutes based on this adjusted data.
With historic data this often is not possible because this exact recovery scenario may never have occurred in the past and almost certainly not often enough to extrapolate from the historic data.

\begin{figure}
\begin{center}
\includegraphics{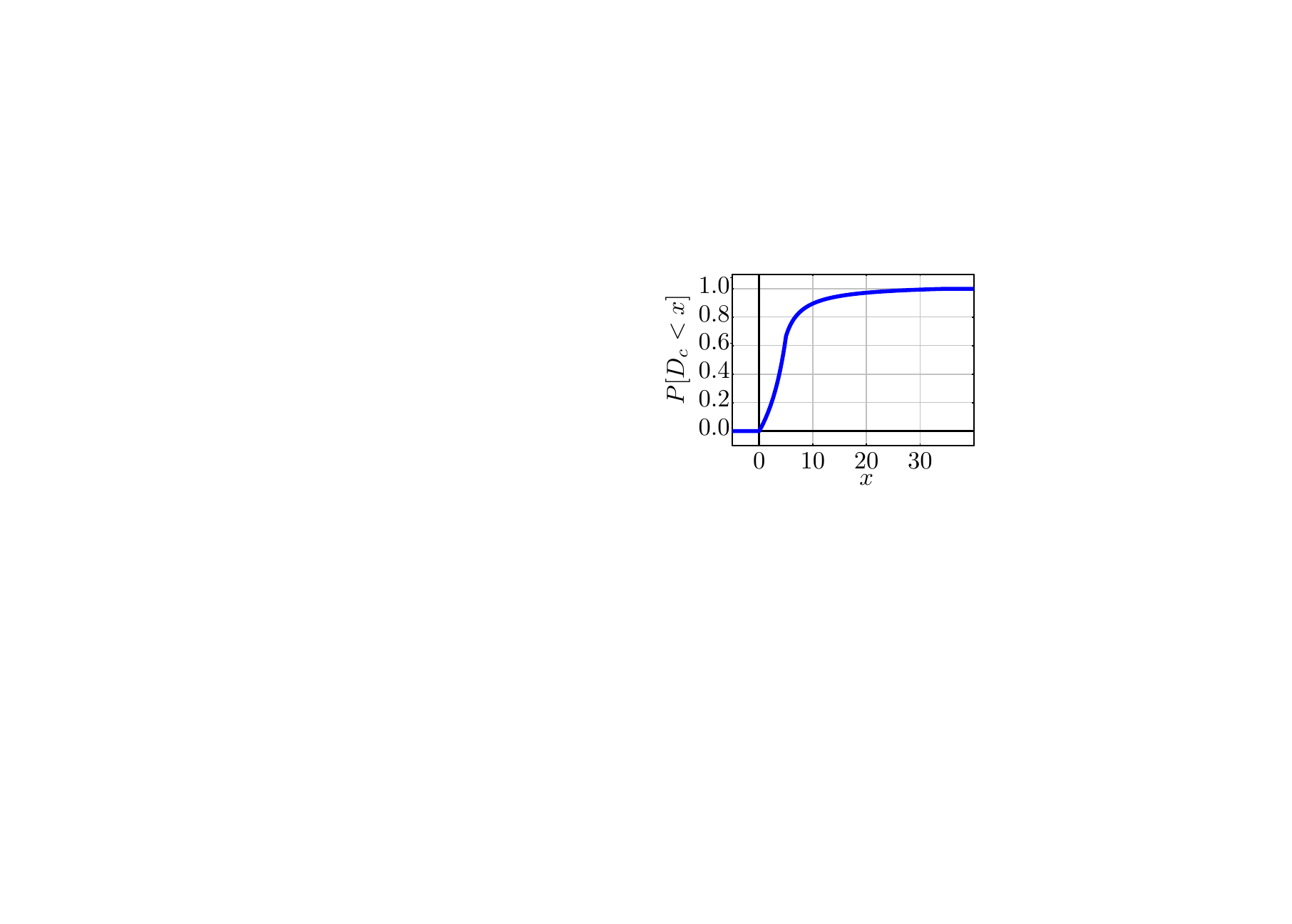}
\end{center}

\caption{\label{fig:delay-function}Plot showing $P[\mathcal{D}_{c}\le x]$ in function of $x$ for $m=5$ and $d=30$.}
\end{figure}

For these reasons, we propose to use synthetic delay distributions that are only parametrized on the planned timetable. 
We propose to add to each connection $c$ a synthetic delay variable $\mathcal{D}_{c}$ that depends on the change time $m$ of $c_{\arrstop}$ and on a global\footnote{$d$ is global since we lack per-train data. Our approach can be adjusted, if such data became available.} \emph{maximum delay parameter} $d$. 
We define $\mathcal{D}_{c}$ as follows: $\forall x\in(-\infty,0]:P[\mathcal{D}_{c}\le x]=0$, $\forall x\in(0,m]:P[\mathcal{D}_{c}\le x]=\frac{2x}{6m-3x}$, $\forall x\in(m,m+d]:P[\mathcal{D}_{c}\le x]=\frac{31(x-m)+2d}{30(x-m)+3d}$, and $\forall x\in(m+d,\infty):P[\mathcal{D}_{c}\le x]=1$. 
The function is illustrated in Figure~\ref{fig:delay-function} and the rational for its our design is given in the next section.

\subsubsection{Synthetic Delay Distribution} 

There are many methods to come up with formulas for synthetic delays. 
The lack of any effectively accessible ground truth makes any conclusive experimental evaluation of their quality very difficult. 
The only real criteria that we have is ``intuitively reasonable''. 
The approach presented here is by no way the final answer to the question of how to design the best synthetic delay distribution. 
In this section, we describe the rational for our design decisions.

We define for every connection $c$ its delay $\mathcal{D}_{c}$ by defining its cumulative distribution function $f_{m,d}(x)$, where $d$ is the maximum delay of $c$ and $m$ the minimum change time at $c_{\arrstop}$. 
Our delays do not depend on any other parameter than $m$ and $d$. 
We have the following hard requirements on $f_{m,d}$ resulting from our algorithm:
\begin{itemize}
\item $f_{m,d}(x)$ is a probability, i.e., $\forall x:0\le f(x)\le1$
\item $f_{m,d}(x)$ is a cumulative distribution function and therefore
non-decreasing, i.e., $\forall x:f_{m,d}'(x)\ge0$
\item $\max\mathcal{D}_{c}$ should be $m+d$, i.e., $\forall x\ge m+d:f(x)=1$ 
\item Our model does not allow for trains that arrive too early, i.e.,$\forall x<0:f(x)=0$
\end{itemize}
These requirements already completely define what happens outside of $x\in(0,m+d)$. 
Because of the limitations of current hardware, there are two additional more fuzzy but important requirements:
\begin{itemize}
\item We need to evaluate $f_{m,d}(x)$ many times. The formula must therefore
not be computationally expensive.
\item Our algorithm computes a lot of $\left(f_{m,d}(x_{1})+a_{1}\right)\cdot\left(f_{m,d}(x_{2})+a_{2}\right)\cdot\left(f_{m,d}(x_{3})+a_{3}\right)\cdots$ chains. 
The chain length reflects the number of rides in the longest journey considered during the computations. 
As 64-bit-floating points only have a limited precision, we must make sure that order of magnitude of the various values of $f_{m,d}$ do not differ too much. 
If they do differ a lot then the less likely journeys have no impact on the overall EAT because their impact is rounded away.
\end{itemize}
Finally there are a couple of soft constraints coming from our intuition:
\begin{itemize}
\item $f(m)$ is the probability that everything works as scheduled without the slightest delay. 
In practice this does happen and therefore this should have reasonable high probability. 
On the other hand a too high $f(m)$ can lead to problems with rounding. 
We set $f(m)=\frac{2}{3}$ as we believe that it is a good compromise.
\item We want $f$ to be continuous.
\item The maximum variation should be at $x=m$, i.e., $f'(m)$ should be the unique local maximum of $f'$. 
\item Initially the function should grow slowly and then once $x=m$ is reached the growth should slow down. 
This can be formalized as $f''(x)>0$ for $x\in(0,m)$ and $f''(x)<0$ for $x\in(m,m+d)$.
\end{itemize}
We define $f$ using two piece function $f_{1}$ and $f_{2}$. 
For these pieces we assume $m=5\mbox{min}$ and $d=30\mbox{min}$ and scale them to accommodate for different values, as following:
\[
f_{m,d}(x)=\begin{cases}
0 & \mbox{ if }x<0\\
f_{1}(\frac{5x}{m}) & \mbox{ if }0\le x\le m\\
f_{2}\left(\frac{30(x-m)}{d}\right) & \mbox{ if }m<x<m+d\\
1 & \mbox{ if }m+d\le x
\end{cases}
\]
It remains to define $f_{1}$ and $f_{2}$. 
We started with a $-1/x$ function and shifted and stretched the function graphs until we ended up with something that looks ``intuitively reasonable''.
\begin{eqnarray*}
f_{1}(x) & = & \frac{2x}{3(10-x)}\\
f_{2}(x) & = & \frac{31x+60}{30(x+3)}
\end{eqnarray*}
The resulting function $f$ fulfills all requirements and is illustrated in Figure~\ref{fig:delay-function}. 
To sum up: We define the $f_{m,d}$ as following:
\[
f_{m,d}(x)=\begin{cases}
0 & \mbox{ if }x<0\\
\frac{2x}{6m-3x} & \mbox{ if }0\le x\le m\\
\frac{31(x-m)+2d}{30(x-m)+3d} & \mbox{ if }m<x<m+d\\
1 & \mbox{ if }m+d\le x
\end{cases}
\]

\subsection{Decision Graphs}
\label{sec:decision_graphs}

In this subsection, we first introduce the notion of safe journey, then formally define decision graphs, and then introduce three problem variants: (i) the unbounded, (ii) the bounded, and (iii) the $\alpha$-bounded MEAT problems. 
The first two are of more theoretical interest, whereas the third one has the highest practical impact.
We prove basic properties of the unbounded and bounded problems and show a relation to the earliest safe arrival problem. 

\subsubsection{Formal Definition}

A \emph{safe $(s,\tau_s,t)$-journey} is a $(s,\tau_s,t)$-journey such for every transfer the time difference between the arrival of the incoming train and the departure of the outgoing train is at least the maximum delay of the incoming train.
We denote by $\mathrm{eat}(s,\tau_s,t)$ the arrival time of an optimal earliest arrival journeys and by $\mathrm{esat}(s,\tau_s,t)$ the arrival time of an optimal safe earliest arrival journey.

A \emph{$(s,\tau_s,t)$-decision graph} from source stop~$s$ to target stop~$t$ with the traveler departing at time~$\tau_s$ is a directed reflexive-loop-free multi-graph $G=(V,A)$ whose vertices correspond to stops and whose arcs correspond to legs $l$ directed from $l_{\depstop}$ to $l_{\arrstop}$. 
There may be several legs between a pair of stops, but they must be of part of different trips and depart at different times. 
We formalize this as: $\forall l^{1},l^{2}\in A:l_{\deptime}^{1}\neq l_{\deptime}^{2}\vee l_{\depstop}^{1}\neq l_{\depstop}^{2}$.
We require that the user must be able to reach every leg and must always be able to get to the target. 
Formally, we require that for every $l\in A$ there exists a $(s,\tau_s,l_{\depstop})$-journey $j$ with $j_{\arrtime}\le l_{\deptime}$ to reach the leg, and a safe $(l_{\arrstop},l_{\arrtime}+\max\mathcal{D}_{r},t)$-journey $j'$ to reach the target. 
To exclude decision graphs with unreachable stops, we require that every stop in $V$ except $s$ and $t$ have non-zero in- and out degree. 

We first recursively define the \emph{expected arrival time} $e(l)$ (short EAT) of a leg $l\in A$ and define in terms of $e(l)$ the EAT $e(G)$ of the whole decision graph $G$. 
If $l_{\arrstop}=t$, we define $e(l)=l_{\arrtime}+E[\mathcal{D}_{l}]$. 
Otherwise $e(l)$ is defined in terms of other legs. 
Denote by $q_{1}\ldots q_{n}$ the sequence of legs in $G$ ordered by departure time, departing at $l_{\arrstop}$ after $l_{\arrtime}$, i.e., every leg that the user could reach after $l$ arrives. 
Denote by $d_{1}\ldots d_{n}$ their departure times and set $d_{0}=l_{\arrtime}$. 
We define $e(l)=\sum_{i\in\{1\ldots n\}}P[d_{i-1}<\mathcal{D}_{l}<d_{i}] \cdot e(q_{i})$, i.e., the average of the EATs of the connecting legs weighted by the transfer probability. 
Note that this definition is well-defined because $e(l)$ only depends on $e(q)$ of legs with a later departure time, i.e., $l_{\deptime}<q_{\deptime}$. 
Further notice that $P[\mathcal{D}_{l}<d_{n}]=1$. 
Otherwise no safe journey to the target would exist invalidating the decision graph.

We denote by $G^\first$ the leg $l\in A$ with minimum $l_{\deptime}$.
This is the leg that the user must initially take at $s$.
We define the \emph{expected arrival time} $e(G)$ (short EAT) of the decision graph $G$ as $e(G^\first)$. 
Furthermore, the \emph{latest arrival time} $G_{\max \arrtime}$ is the maximum $l_{\arrtime}+\max\mathcal{D}_{l}$ over all $l\in A$. 
Note that by minimizing $G_{\max \arrtime}$, we can bound the worst case arrival time giving us some control over the arrival time variance.

The \emph{unbounded $(s,\tau_s,t)$-minimum expected arrival time }(short MEAT) problem consists of computing a \emph{$(s,\tau_s,t)$-}decision graph $G$ minimizing $e(G)$. 
The bounded \emph{$(s,\tau_s,t)$-MEAT} problem consists of computing a \emph{$(s,\tau_s,t)$}-decision graph $G$ minimizing $e(G)$ subject to a minimum $G_{\max \arrtime}$.
As a compromise between bounded and unbounded, we further define the $\alpha$-bounded MEAT problem: We require that $G_{\max \arrtime}-\tau_s\le\alpha\left(\mathrm{esat}\left(s,\tau_s,t\right)-\tau_s\right)$, i.e., the maximum travel time must not be bigger than $\alpha$ times the delay-free optimum. 
Notice that the bounded and 1-bounded MEAT problems are equivalent.

\subsubsection{Decision Graph Existence}
\begin{lemma}
There is a \emph{$(s,\tau_s,t)$-}decision graph $G$ iff there exists a safe \emph{$(s,\tau_s,t)$}-journey $j$.
\end{lemma}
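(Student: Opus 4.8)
The plan is to prove the two implications separately, working in the footpath-contracted model of Section~\ref{sec:delay_model} so that every stop is reached either by starting there or by arriving on a train.

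For the ``only if'' direction I would read the required safe journey almost directly off the decision graph. Let $G^\first$ be the earliest-departing leg of $G$; by the text it is the leg the traveler takes at $s$, so $G^\first_\depstop = s$ and $G^\first_\deptime \ge \tau_s$. The decision-graph axioms guarantee for $G^\first$ a safe $(G^\first_\arrstop, G^\first_\arrtime + \max\mathcal{D}_{G^\first}, t)$-journey $j'$. I would then form the concatenation $G^\first \cdot j'$ and check it is a safe $(s,\tau_s,t)$-journey: its only train-to-train transfer not already inside $j'$ is the transfer out of $G^\first$, and that one is safe because $j'$ departs by definition no earlier than $G^\first_\arrtime + \max\mathcal{D}_{G^\first}$, i.e. the slack exceeds the maximum delay of $G^\first$. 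The remaining transfers are internal to $j'$ and hence safe by hypothesis. So this direction is essentially a one-liner once $G^\first_\depstop = s$ is pinned down; the only thing to argue with care is that the ordinary reaching journey for $G^\first$ cannot contain an earlier leg (which would contradict minimality of $G^\first$), so the prefix before $G^\first$ carries no transfer and introduces no safety obligation.

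For the ``if'' direction I would construct a decision graph from the given safe journey $j = l^0, l^1, \dots, l^{k-1}$. Take the arc set $A = \{l^0,\dots,l^{k-1}\}$ and let $V$ be the stops these legs touch. I then verify each clause of the definition. Distinctness ($l^a_\deptime \ne l^b_\deptime \vee l^a_\depstop \ne l^b_\depstop$) holds because along a journey departure times strictly increase, so distinct legs already differ in departure time; loop-freeness holds because every leg has $l_\depstop \ne l_\arrstop$. For the reachability clause, the prefix $l^0,\dots,l^{i-1}$ of $j$ is an ordinary $(s,\tau_s,l^i_\depstop)$-journey arriving by $l^i_\deptime$, as required (this clause only asks for an ordinary, not a safe, journey). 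For the to-target clause, the suffix $l^{i+1},\dots,l^{k-1}$ is the witness: it is safe because $j$ is safe, and --- this is the one place the safety of $j$ is actually used --- the transfer $l^i \to l^{i+1}$ being safe means $l^{i+1}_\deptime \ge l^i_\arrtime + \max\mathcal{D}_{l^i}$, so the suffix is a safe $(l^i_\arrstop, l^i_\arrtime + \max\mathcal{D}_{l^i}, t)$-journey. The last leg $l^{k-1}$ arrives at $t$, for which the to-target obligation is met by the trivial journey already at $t$. Finally the degree condition holds since each intermediate stop on the path has in- and out-degree at least one (repeated visits only increase degrees), while $s$ and $t$ are exempt.

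I expect the main obstacle to be bookkeeping in the ``if'' direction rather than any deep idea: checking every structural clause (distinctness, loop-freeness, degrees) and correctly exploiting the asymmetry that reaching a leg only needs an ordinary journey whereas reaching the target needs a safe one, together with the degenerate handling of the final leg landing on $t$. The genuinely load-bearing step is translating ``every transfer of $j$ has slack at least the incoming train's maximum delay'' into ``each suffix departs late enough to be a valid safe to-target witness,'' which is exactly the definition of a safe journey, so no real difficulty arises there either.
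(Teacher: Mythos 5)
Your proposal follows essentially the same route as the paper's proof: the forward direction prefixes the safe to-target journey guaranteed for $G^\first$ with the leg $G^\first$ itself, and the converse builds a decision graph containing exactly the legs of the given safe journey. You simply spell out the clause-by-clause verification that the paper leaves implicit, so the argument matches.
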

\begin{proof}
If there exists a \emph{$(s,\tau_s,t)$-}decision graph $G$ then by the decision graph definition we know that there exists a safe $(G_{\arrstop}^\first,G_{\arrtime}^\first+\max\mathcal{D}_{G^\first},t)$-journey $j'$. 
Prefixing $j'$ with $G^\first$ yields the required $(s,\tau_s,t)$-journey $j$.

Conversly, if there exists a $(s,\tau_s,t)$-journey $j$, we can construct a (non-optimal) $(s,\tau_s,t)$-decision graph $G$ that contains exactly the same legs as $j$.
\end{proof}
A direct consequence of this lemma is that the minimum $G_{\max \arrtime}$ over all \emph{$(s,\tau_s,t)$-}decision graphs $G$ is equal to $\mathrm{esat}(s,\tau_s,t)$.
Using this observation we can reduce the bounded MEAT problem to the unbounded MEAT problem. 
Formally stated:
\begin{lemma}
An optimal solution $G$ to the bounded \emph{$(s,\tau_s,t)$-}MEAT problem on timetable $T$ is an optimal solution to the unbounded $(s,\tau_s,t)$-MEAT problem on a timetable $T'$ where $T'$ is obtained by removing all connections $c$ from $T$ with $c_{\arrtime}$ above the $\mathrm{esat}(s,\tau_s,t)$.
\end{lemma}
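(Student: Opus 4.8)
The plan is to reduce the claim to an equality of feasible sets, after first isolating the one genuinely easy observation: the objective is insensitive to whether we work in $T$ or $T'$. Indeed, the expected arrival time $e(G)$ is defined by a recursion that refers only to the legs of $G$ themselves, through their departure and arrival times and the delay variables $\mathcal{D}$, and never to connections outside $G$; hence for a fixed decision graph $G$ its value $e(G)$ is the same whether $G$ is viewed inside $T$ or inside $T'$. Consequently it suffices to prove that the decision graphs feasible for the bounded MEAT problem on $T$ are exactly the decision graphs feasible for the unbounded MEAT problem on $T'$. Here I would use that the bounded problem is the $1$-bounded problem, so its constraint reads $G_{\max\arrtime}\le\mathrm{esat}(s,\tau_s,t)$; by the consequence of the decision-graph-existence lemma the minimum of $G_{\max\arrtime}$ over all decision graphs equals $\mathrm{esat}(s,\tau_s,t)$, so this constraint in fact forces $G_{\max\arrtime}=\mathrm{esat}(s,\tau_s,t)$.

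For the first inclusion I would argue as follows. Suppose $G$ is feasible for the bounded problem on $T$, so $l_{\arrtime}+\max\mathcal{D}_l\le\mathrm{esat}(s,\tau_s,t)$ for every leg $l$ of $G$. Every connection $c$ occurring in a leg $l$ satisfies $c_{\arrtime}\le l_{\arrtime}\le l_{\arrtime}+\max\mathcal{D}_l\le\mathrm{esat}(s,\tau_s,t)$, so no connection of $G$ is deleted when passing to $T'$. Since the validity conditions of a decision graph, namely reachability of each leg and the existence of a safe tail journey to $t$, are witnessed within $G$ itself and therefore involve only connections of $G$, restricting the timetable to $T'$ leaves $G$ a valid decision graph; having no upper bound on $G_{\max\arrtime}$, it is then feasible for the unbounded problem on $T'$.

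For the converse inclusion I would take a decision graph $G'$ feasible for the unbounded problem on $T'$ and show $G'_{\max\arrtime}\le\mathrm{esat}(s,\tau_s,t)$, which is precisely bounded-feasibility on $T$; together with the first inclusion and the objective invariance above, the lemma would then follow because the two problems minimise the same function over the same set. Fix a leg $l$ of $G'$ that does \emph{not} arrive at $t$. The decision-graph definition guarantees a safe $(l_{\arrstop},\,l_{\arrtime}+\max\mathcal{D}_l,\,t)$-journey; since it lies in $T'$, all of its connections arrive no later than $\mathrm{esat}(s,\tau_s,t)$, in particular its final connection arrives at $t$ at some time $\le\mathrm{esat}(s,\tau_s,t)$. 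As a journey cannot arrive before it departs, that arrival time is at least the departure bound $l_{\arrtime}+\max\mathcal{D}_l$, giving $l_{\arrtime}+\max\mathcal{D}_l\le\mathrm{esat}(s,\tau_s,t)$.

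The step I expect to be the main obstacle is exactly the legs of $G'$ that arrive \emph{directly} at $t$. For such a leg the required safe tail journey is the trivial journey already at $t$, so the argument above yields no bound on its worst-case arrival $l_{\arrtime}+\max\mathcal{D}_l$, and a priori this quantity could exceed $\mathrm{esat}(s,\tau_s,t)$ even though its scheduled arrival $l_{\arrtime}=c_{\arrtime}$ is $\le\mathrm{esat}(s,\tau_s,t)$ and hence survives in $T'$. Closing this gap is where the precise reading of $\mathrm{esat}$ and of the removal threshold must be used: I would take $\mathrm{esat}(s,\tau_s,t)$ to be the guaranteed, schedule-plus-maximal-delay arrival time of the best safe journey, which is the reading under which $\min G_{\max\arrtime}=\mathrm{esat}(s,\tau_s,t)$ holds, and then check that discarding connections by arrival time leaves no directly-arriving leg whose worst-case arrival overshoots $\mathrm{esat}(s,\tau_s,t)$ — equivalently, that the deletion is effectively governed by the worst-case value $c_{\arrtime}+\max\mathcal{D}_c$ rather than by $c_{\arrtime}$ alone. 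Pinning down this interaction between the per-connection arrival threshold and the additive $\max\mathcal{D}_l$ term is the crux; the remainder is the routine bookkeeping sketched above.
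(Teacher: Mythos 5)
Your overall plan---objective invariance plus equality of feasible sets---is essentially a more explicit rendering of what the paper actually does. The paper's proof is a three-line contradiction argument: every decision graph on $T'$ is a decision graph on $T$; every safe $(s,\tau_s,t)$-journey in $T'$ is an \emph{earliest} safe journey in $T$ (because all connections of $T'$ arrive by $\mathrm{esat}(s,\tau_s,t)$); hence a decision graph $G'$ on $T'$ with suboptimal $G'_{\max \arrtime}$ would yield a safe journey in $T'$ arriving after $\mathrm{esat}(s,\tau_s,t)$, contradicting the construction of $T'$. Your first inclusion (the bounded optimum survives the deletion) is correct and is something the paper leaves implicit; your treatment of legs not arriving at $t$ in the converse inclusion is exactly the paper's contradiction step made precise.

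The step you flag as the obstacle is, however, a genuine gap, and your proposal does not close it: for a leg $l$ of $G'$ whose exit connection arrives directly at $t$, the required safe tail journey is empty, so no inequality $l_\arrtime+\max\mathcal{D}_l\le\mathrm{esat}(s,\tau_s,t)$ is produced, and since $T'$ deletes by scheduled arrival $c_\arrtime$ rather than by $c_\arrtime+\max\mathcal{D}_c$, such a leg can have $G'_{\max \arrtime}>\mathrm{esat}(s,\tau_s,t)$ while all of its connections survive in $T'$. You should know that the paper's own proof has exactly the same blind spot---its contradiction only manufactures a nontrivial safe journey from the maximizing leg when that leg does not arrive at $t$---so you have located a real imprecision rather than missed an idea the authors supply. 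Closing it requires committing to the reading under which $\min G_{\max \arrtime}=\mathrm{esat}(s,\tau_s,t)$ actually holds, namely $\mathrm{esat}$ as the guaranteed (maximally delayed) arrival time as computed in Section~\ref{sub:Bounded-MEAT-Algo}, and then either tightening the deletion rule so that connections into $t$ with worst-case arrival $c_\arrtime+\max\mathcal{D}_c$ above $\mathrm{esat}(s,\tau_s,t)$ are also discarded, or arguing separately that such terminal legs cannot appear in an optimal $G'$. As submitted, your argument establishes one inclusion and the non-terminal half of the other; the terminal-leg case remains open, so the proof is incomplete.
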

\begin{proof}
There are two central observations needed for the proof: 
First, every $(s,\tau_s,t)$-decision graph on timetable $T'$ is a $(s,\tau_s,t)$-decision graph on the strictly larger timetable $T$. 
Second, every safe \emph{$(s,\tau_s,t)$-}journey in $T'$ is an earliest safe \emph{$(s,\tau_s,t)$-}journey in $T$. 
Suppose that a \emph{$(s,\tau_s,t)$}-decision graph $G'$ on $T'$ would exist with a suboptimal $G'_{\max \arrtime}$ then there would also exist a safe \emph{$(s,\tau_s,t)$-}journey $j'$ in $T'$ with a suboptimal $j_{\arrtime}'$, which is not possible by construction of $T'$, which is a contradiction.
\end{proof}

\begin{figure}
\begin{center}
\includegraphics{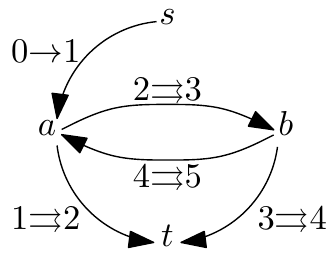}
\end{center}

\caption{\label{fig:infinite-network}A timetable $T_{p}$ has 4 stops: $s$,
$a$, $b$ and $t$. The arrows denote connections. An arrow annotated
with its departure time and arrival time. A simple arrow ($\rightarrow$)
denotes a single non-repeating connection. A double arrow ($\rightrightarrows$)
is repeated every 4 time units, i.e. $1\rightrightarrows2$ is a shorthand
for $1+4i\rightarrow2+4i$ for every $i\in\mathbb{N}$. All connections
are part of their own trip and have the same delay variable $\mathcal{D}$.
We define $P[\mathcal{D}=0]=p$ (with $p\neq0$) and $P[\mathcal{D}<1]=1$.}
\end{figure}%

Having shown how to explicitly bound $G_{\max \arrtime}$ it is natural to ask what would happen if we dropped this bound and solely minimized $e(G)$. 
For this we consider the timetable $T_{p}$ with an infinite connection set illustrated and defined in Figure~\ref{fig:infinite-network}.
Notice that $T_{p}$ is constructed such that it does not matter whether the user arrives at $a$ at moments $1+4\mathbb{N}$ or at $b$ at moments $3+4\mathbb{N}$ as the two states are completely symmetric with the stops $a$ and $b$ swapping roles. 
By exploiting this symmetry we can reduce the set of possibly optimal $(s,0,t)$-decision graphs to 2 elements: 
the decision graph $G^{1}$ that waits at $a$ and never goes over $b$, and the decision graph $G^{2}$ that oscillates between $a$ and $b$. 
The corresponding expected arrival times are $e(G^{1})=p\left(2+E[\mathcal{D}]\right)+(1-p)\left(7+E[\mathcal{D}]\right)$ and $e(G^{2})=p\left(2+E[\mathcal{D}]\right)+(1-p)\left(3+e(G^{2})\right)$.
The later equation can be resolved to $e(G^{2})=E[\mathcal{D}]-1+\frac{3}{p}$.
We can solve $e(G^{1})<e(G^{2})$ in terms of $p$. 
The result is that $G^{1}$ is better if $p<\frac{\sqrt{43}-4}{9}\approx0.28$.
If they are equal then $G^{1}$ and $G^{2}$ are equivalent and otherwise $G^{2}$ is better. 

This has consequences even for timetables with a finite connection set.
One could expect that to compute a decision graph it is sufficient to look at a time-interval proportional to its expected travel time: 
It seems reasonable that a connection scheduled to occur in ten years would not be relevant for a decision graph departing today with an expected travel time of one hour. 
However, this intuition is false in the worst case: Consider the finite sub-timetable $T'$ of the periodic timetable $T_{p}$ that encompasses the first ten years (i.e., we ``unroll'' $T_p$ for ten years). 
For $p{>}0.28$, an optimal $(s,0,t)$-decision graph will use all connections in $T'$, including the ones in ten years (as $G^2$ would).
Fortunately, the bounded MEAT problem does not suffer from this weakness:
No connection arriving after $\mathrm{esat}(s,0,t)$ can be relevant. 
Therefore, even on infinite networks the bounded MEAT problem always admits finite solutions.
This property is the main motivation to study the bounded MEAT problem.

\subsubsection{Non-dominated Pairs and Decision Graphs}

In this section, we only consider decision graphs and journeys arriving at a fixed target stop $t$. 
All lemmas and definition are therefore with respect to $t$. 
To simplify our notation, we omit $t$ in this section.

We consider, for every connection $c$, the pair $p_c=(c_\deptime, e(G))$ where $G$ is a decision graph that minimizes the expected arrival time, subject to $c$ being the first connection, i.e., $G^\first_\enter=c$.
Denote by $O$ the outgoing connections of a stop.
Every connection has an associated pair, which can be dominated within $O$.
This allows us to define when a connection is dominated: It is dominated when its pair is dominated.
A leg $l$ is dominated if $l_\enter$ is dominated.
Non-dominated connections have an important role in the computation of optimal decision graphs as the following lemma shows.

\begin{lemma}
\label{lem:meat-only-non-dominated}
For every source stop $s$ and source time $\tau_s$, if there exists a decision graph, then there exists an optimal decision graph, such that for every leg $l$ of $G$, the entry connection $l^\enter$ is non-dominated at $l^\enter_\depstop$.
\end{lemma}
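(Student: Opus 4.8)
The plan is to build one optimal decision graph explicitly and read the claim off it, rather than to repair an arbitrary optimal graph by local surgery: deleting a backup reshuffles the delay windows covered by the remaining backups, and a traveller who used the deleted leg may then fall through to a \emph{later, worse} backup, so a naive exchange can increase $e(G)$. Throughout I abbreviate $V(c):=e(G_c)$, the minimum expected arrival time attainable by a decision graph whose first leg enters at the connection $c$; this is exactly the second component of the pair $p_c$. I read the domination of connections operationally, as the paper does for its profile staircases: at a fixed stop $x$, a connection $c'$ dominates $c$ when $c'$ departs no earlier than $c$ and $V(c')\le V(c)$, with at least one inequality strict. First I would record the principle of optimality: in any optimal decision graph $G$ and any leg $l$ of $G$, the sub-decision-graph rooted at $l$ is itself optimal among decision graphs whose first leg enters at $l^\enter$, so that $e(l)=V(l^\enter)$. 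This holds because replacing that sub-graph by an optimal one keeps the first connection --- hence the departure time of $l$ and the enclosing delay windows --- unchanged, so it can only decrease $e(G)$.

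The structural heart of the argument concerns the non-dominated connections at a single stop $x$. Sorted by departure time, the non-dominated connections $e_1,e_2,\dots$ have strictly increasing values $V(e_1)<V(e_2)<\dots$, directly from the definition of domination. I would then show that for a traveller reaching $x$ at a realised time $\delta$, the earliest non-dominated connection departing after $\delta$ attains the forward minimum $\varphi_x(\delta):=\min_{c_\deptime>\delta}V(c)$: the unrestricted minimiser can always be taken non-dominated (follow the chain of later, no-worse connections to its non-dominated representative), and the non-dominated staircase is precisely the lower envelope of the $V$-values, so its earliest catchable element equals $\varphi_x(\delta)$. In particular, offering exactly the non-dominated catchable connections as backups lets the traveller realise the best possible continuation $\varphi_x(\delta)$ for every realisation of the delay simultaneously.

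With this in hand I would construct the desired graph inductively, processing connections from latest to earliest departure (the recursion for $e(l)$ is well-founded, as it refers only to strictly later legs). At each stop $y$, after each incoming leg $l$, I offer as backups exactly the non-dominated connections departing after $l_\arrtime$; if $l_\arrstop=t$ the leg is a sink with $e(l)=l_\arrtime+E[\mathcal D_l]$. By the previous paragraph the resulting $e(l)=\sum_i P[d_{i-1}<\mathcal D_l<d_i]\,\varphi_y(d_{i-1})$ equals the minimum expected arrival time over all continuations, since any continuation catchable at delay $\delta$ has value at least $\varphi_y(\delta)$; hence $V$ is realised optimally at every connection, and the first leg is obtained by picking the connection at $s$ departing after $\tau_s$ that minimises $V$ (necessarily non-dominated), giving $e(G)=\varphi_s(\tau_s)$, the true optimum. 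Every entry connection used is non-dominated by construction, which is the assertion of the lemma. Validity is inherited: reachability of each offered leg follows inductively from $s$; each offered backup has $V<\infty$, hence admits a safe journey to $t$; and because a safe transfer out of $l$ exists whenever $l$ is reachable in a valid graph, the latest offered backup departs at least $l_\arrtime+\max\mathcal D_l$, guaranteeing $P[\mathcal D_l<d_n]=1$ as the decision-graph definition requires.

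The main obstacle I anticipate is not the exchange but justifying that the whole non-dominated staircase attains $\varphi_y$ for every delay value at once --- i.e. the strict monotonicity of $V$ along the staircase together with the fact that the forward minimum is always realised by a non-dominated connection --- and, on infinite timetables, guaranteeing that the construction terminates with a finite graph. For the latter I would invoke the bounded-to-unbounded reduction proved above: truncating the timetable at $\mathrm{esat}(s,\tau_s,t)$ leaves only finitely many relevant connections without changing the optimum, so the inductive construction is well-defined and finite. The decision-graph existence lemma then supplies the base fact that the optimum is attained at all whenever a safe journey exists.
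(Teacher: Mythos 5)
Your proof is correct, but it takes a genuinely different route from the paper's. The paper argues by local surgery on an arbitrary optimal decision graph $H$: it first swaps $H^\first_\enter$ for a non-dominated connection, and then claims that if some other leg $l$ of $H$ had a dominated entry connection, deleting $l$ (and everything reachable only through $l$) would improve $e(l')$ for a leg $l'$ feeding into $l$, contradicting optimality. You instead build one optimal decision graph bottom-up from the value function $V(c)$: you show that at each stop the non-dominated connections form a staircase with (strictly) increasing values whose earliest catchable element realises the forward minimum $\varphi_x(\delta)$ for every realised delay $\delta$ simultaneously, offer exactly these connections as backups, and conclude that the resulting graph is pointwise optimal and by construction free of dominated entry connections. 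What your route buys is rigour precisely where the paper's argument is thinnest: as your opening sentence observes, deleting a leg reroutes its probability mass to the next later backup actually present in $H$, which need not be the dominating connection and need not be better, so the paper's one-line ``removing it improves $e(l')$'' glosses over a real issue that your construction sidesteps entirely; as a bonus, your argument doubles as a correctness proof for Phase~1 of the MEAT algorithm. The price is length, plus two small details you should still pin down: ties (two connections with equal departure time and equal $V$ are mutually non-dominated, so the staircase is only weakly increasing there and you must select a single representative, since the decision-graph definition forbids two legs with the same departure stop and departure time), and the choice of exit connection for each offered leg (alight at the first connection of the trip where the optimal continuation value changes, as in the paper's extraction phase). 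Neither is a substantive gap.
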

\begin{proof}
We know that an optimal decision graph $H$ exists as we required the existence of a decision graph.
If $H^\first_\enter$ is dominated, then there is another optimal decision graph associated with the dominating connection.
Without loose of generality we can therefore assume that, $H^\first_\enter$ is non-dominated.

Suppose that $H$ contained some other leg $l$ such that $l^\enter$ is dominated. 
Further denote by $l'$ an incoming leg from which the traveler might transfer to $l$.
$l'$ must exist because $l$ is not the first leg in the decision graph.
As $l$ is dominated, removing it and all legs that can only reached via $l$ from $H$ improves $e(l')$, which in terms improves $e(H)$, which is a contradiction to $H$ being optimal. 
\end{proof}

\subsection{Solving the MEAT Problem}

The unbounded MEAT problem can be solved to optimality on finite networks, and by extension also the bounded and $\alpha$-bounded MEAT problems. 
We first describe an algorithm to optimally solve the unbounded MEAT problem.
By applying this algorithm to a restricted timetable we solve the bounded and $\alpha$-bounded MEAT problems. 

\subsubsection{Solving the Unbounded MEAT problem}

Our algorithm works in two phases:
\begin{itemize}
\item Compute the minimum expected arrival times for all connections $c$,
\item extract a desired $(s,\tau_s,t)$-decision graph. 
\end{itemize}
The first phase is a variant of the earliest arrival profile Connection Scan algorithm.
The second phase is an extension of the journey extraction algorithm.

\paragraph{Phase 1: Computing all Expected Arrival Times.}

Recall the basic Connection Scan profile framework depicted in Figure~\ref{alg:profile-connection-scan-framework} and especially the earliest arrival time instantiation depicted in Figure~\ref{alg:earliest-arrival-profile-connection-scan}.
We first describe the algorithmic differences to the later and then explain why the proposed algorithm is correct.
In the context of this subsection $c$ always refers to the connection being scanned.

The first key idea consists of replacing all earliest arrival times with minimum expected arrival times.
This works similarly to the profile Pareto-optimization where all earliest arrival times were replaced by vectors. 
The stop data structure becomes an array of dynamic arrays of pairs of departure time and expected arrival time.
The trip data structure becomes an array of expected arrival times.
The computation of the expected arrival time when arriving at the target $\tau_1$ is only modified in a minor way: 
We need to add $E\mathcal{D}_c$, a constant, to the arrival time of the connection.
The arrival time when the traveler remains sitting $\tau_2$ is computed in exactly the same way by reading the value of $T[c_\trip]$.
The computation of the arrival time when changing trains $\tau_3$ is significantly modified and is described below.
The value of $\tau_c$ is still computed as the minimum of $\tau_1$, $\tau_2$, and $\tau_3$.
$\tau_c$ is the minimum expected arrival time over all decision graphs starting in $c$. 
Formulated differently, $\tau_c$ is the minimum $e(G)$ over all decision graphs such that $G^\mathrm{first}_\enter = c$. 
Incorporating $\tau_c$ into the trip data structure $T$ and the stop profiles $S$ works completely analogous to the earliest arrival profile algorithm.

The computation of $\tau_3$, i.e., the computation of the arrival time when transferring trains is changed. 
The reason for this change is that the arriving train $c$ has a random arrival time between $c_\arrtime$ and $c_\arrtime + \max \mathcal{D}_c$.
Our algorithm starts by determining using a sequential scan all pairs $p^1\ldots p^k$ in the profile $S[c_\arrstop]$ that might be relevant.
These are all pairs departing between $c_\arrtime$ and $c_\arrtime + \max \mathcal{D}_c$ and the first pair after $c_\arrtime + \max \mathcal{D}_c$.
These correspond to all outgoing trains that are worth taking.
It then computes $\tau_3$ as the weighted sum over the expected arrival times of all $p^i$.
A pair is weighted by the probability of the incoming being delay in such a way that the traveler will take it.
Formally this means: $p^1$ is weighted by the probability $P[c_\arrtime + \max \mathcal{D}_c \le p^1_\deptime]$ and all other $p^i$ are weighted by $P[p^{i-1}_\deptime \le c_\arrtime + \max \mathcal{D}_c \le p^i_\deptime]$.
Formulated differently, $\tau_2$ is the average over the expected arrival time of the non-dominated outgoing trains weighted by the probability of the traveler transferring to them. 

The correctness of our algorithm relies on optimal decision graphs not containing any dominated legs as shown in Lemma~\ref{lem:meat-only-non-dominated}.
The domination test in the profile insertion filters dominated pairs and pairs which appear several times.
In the later case there are two or more connections that depart at the same time and have the same expected arrival time.
In this case, it does not matter which we insert into the decision graph but we may only insert one.
Our algorithm picks the connection that appears last in the connection array.

It remains to show why our strategy of selecting all outgoing non-dominated connections during the evaluation is optimal. 
This directly follows from the pairs being ordered.
One does not want to skip earlier pairs because they have lower expected arrival times than the later trains.
One cannot remove the later trains because it is not guaranteed that the earlier trains can be reached.
Connection not in the profile are dominated.
From Lemma~\ref{lem:meat-only-non-dominated} follows that we can ignore dominated connections.

\paragraph{Phase 2: Extracting Decision Graphs.}

We extract a $(s,\tau_s,t)$-decision graph $G=(V,A)$ by enumerating all legs in $A$. 
The stop set $V$ can then be inferred from $A$. 
At the core, our algorithm uses a min-priority queue that contains connections ordered increasing by their departure time. 
Initially, we add the earliest connection in the profile of $s$ to the queue. 
While the queue is not empty we pop the earliest connection $c^{1}$ from it. 
Denote by $c^{2}\ldots c^{n}$ all subsequent connections in the trip $c_{\trip}^{1}$. 
The desired leg $l=(c^{1},c^{i})$ is given by the first $i$ such that $e(c^{1})\neq e(c^{i+1})$ (or $i=n$ if all are equal). 
We add $l$ to $G$. 
If $c_{\arrstop}^{i}\neq t$ we add the following connections to the queue: 
(i) All connections in the profile of $c_{\arrstop}^{i}$ departing between $c_{\arrtime}^{i}$ and $c_{\arrtime}^{i}+\max\mathcal{D}_{c^{i}}$, and 
(ii) the first connection in the profile of $c_{\arrstop}^{i}$ departing after $c_{\arrtime}^{i}+\max\mathcal{D}_{c^{i}}$.

\subsubsection{Solving the $\alpha$-Bounded MEAT Problem}
\label{sub:Bounded-MEAT-Algo}
We assume that the connection set is stored as an array ordered by departure time. 
To solve the $\alpha$-bounded $(s,\tau_s,t)$-MEAT problem we perform the following steps: 
(i) run a binary search on the connection set to determine the earliest connection $c^\first$ departing after $\tau_s$, 
(ii) run a one-to-one Connection Scan from $s$ to $t$ that assumes all connections $c$ are delayed by $\max \mathcal{D}_c$ to determine $\mathrm{esat}\left(s,\tau_s,t\right)$
(iii) let $\tau_{\mathrm{last}}=\tau_s+\alpha\cdot\left(\mathrm{esat}\left(s,\tau_s,t\right)-\tau_s\right)$ and run a second binary search on the connection set to find the last connection $c^{\mathrm{last}}$ departing before $\tau_{\mathrm{last}}$,
(iv) run a one-to-all Connection Scan from $s$ restricted to the connections from $c^\first$ to $c^{\mathrm{last}}$ to determine all $\mathrm{eat}\left(s,\tau_s,\cdot\right)$, 
(v) run Phase 1 of the unbounded MEAT algorithm scanning the connections from $c^{\mathrm{last}}$ to $c^\first$ skipping connections~$c$ for which $c_\arrtime > \tau_{\mathrm{last}}$ or $\mathrm{eat}(s,\tau_s,c_{\depstop})\le c_{\deptime}$ does not hold, and finally 
(vi) run Phase 2 of the unbounded MEAT algorithm, i.e., extract the $(s,\tau_s,t)$-decision graph.

\subsection{Decision Graph Representation}
\label{sec:representation}

In the previous section we described how to compute decision graphs.
In practice this is not enough and we must be able to represent the graph in a form that the user can effectively comprehend. 
The main obstacle here is to prevent the user from being overwhelmed with information.
A secondary obstacle is how to actually layout the graph. 
In this section, we solely focus to reducing the amount of information. 
The presented drawings were created by hand.
In the demonstration, we use GraphViz~\cite{egknw-gdsdg-03}.

\subsubsection{Expanded Decision Graph Representation}

\begin{figure}
\begin{center}
\subfloat[\label{fig:meat-expanded}Expanded]{\includegraphics{ka_be_exp}}
~~
\subfloat[\label{fig:meat-compact}Compact]{\includegraphics{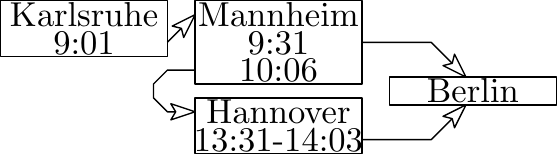}}
\end{center}
\caption{\label{fig:meat-decision-graph}Decision graph representations from Karlsruhe at 9:00 to Berlin.}
\end{figure}

The expanded decision graph subdivides each node $v$ into slots $s_{v,1}\ldots s_{v,n}$ that correspond to moments in time that an arc arrives or departs at $v$. 
The slots in each node are ordered from top to bottom in chronological order. 
Each arc $(u,v)$ connects the corresponding slots $s_{u,i}$ and $s_{v,j}$. 
To determine his next train the user has to search for the box corresponding to his current stop and pick the first departure slot after the current moment in time. 
The arrows guide him to the box corresponding to his next stop. 
Figure~\ref{fig:meat-expanded} illustrates this graph drawing style.

\subsubsection{Compact Decision Graph Representation}

The scheduled arrival time of trains is an information contained in the expanded decision graph that is not strictly necessary. 
A traveler decides what outgoing train to take when he arrives. 
At that moment, he can look at any clock to figure out the precise arrival time.
The scheduled arrival time recorded in the timetable is not needed for his decision.

The compact decision graph exploits this observation by removing the arrival time information from the representation. 
Each arc $(u,v)$ connects the corresponding departure slot $s_{u,i}$ directly to the stop $v$ instead of a slot. 
Time slots that only appear as arrival slots are removed. 
If two outgoing arcs of a node $u$ have the same destination and depart subsequently, they are grouped and only displayed once.
The compact decision graph is never larger than the expanded one and most of the time significantly smaller. 
See Figure~\ref{fig:meat-compact} for an example of a compact decision graph.

\subsubsection{Relaxed Dominance}

Decision graphs exist that contain legs that have near to no impact on the EAT. 
Removing them increases the EAT by only a small amount, resulting in an almost optimal decision graph that can be significantly smaller. 
To exploit this, we introduce a \emph{relaxation tuning parameter} $\beta$. 
EATs are regarded as equal if their difference is below $\beta$. 
Formulated in terms of the framework depicted in Figure~\ref{alg:profile-connection-scan-framework}, we only insert a new pair into the profile $S[x]$, if the expected arrival time of the earliest pair of $S[x]$ is at least $\beta$ time units later than $\tau_c$.

\subsubsection{Displaying only the Relevant Subgraphs}

In many scenarios, we have a canvas of fixed size.
If even the compact relaxed decision graph is too large to fit, we can only draw parts of it.
We observe that the decision graph extraction phase does not rely on the actual distributions of the delay variables $\mathcal{D}_{c}$ but only on $\max\mathcal{D}_{c}$. 
It extracts all connections departing in an interval $I$, plus the first connection directly afterwards. 
The full decision graph is extracted when $I=[c_\arrstop,c_\arrstop+\max \mathcal{D}_c]$.
Reducing the size of $I$ reduces the number of legs displayed, while still guaranteeing that backup legs exist. 
For example a smaller partial decision graph is extracted, if we only follow the connections departing in $I=[c_\arrstop,c_\arrstop+\kappa]$ for $\kappa=1/2 \cdot \max \mathcal{D}_c$.
Valid values for $\kappa$ are from $0$ to $\max \mathcal{D}_c$.
We refer to $\kappa$ as \emph{display window}. 
Given an upper bound $\gamma$ on the number of arcs in the compact or expanded representation, we use a binary search to determine the maximum display window $\kappa$ and draw the corresponding subgraph. 
Note that in the worst case the display window has size~0. 
In this case the decision graph degenerates to a single-path-journey.

\subsection{Experiments}
\label{sec:experiments}

\begin{table}
\begin{center}
\begin{tabular}{c@{\,\,}crrrrrrrrrrrr}
\toprule
\multicolumn{1}{c}{} &  & \multicolumn{4}{c}{Unbounded} & \multicolumn{4}{c}{2.0-Bounded} & \multicolumn{4}{c}{1.0-Bounded}\\
\cmidrule(lr){3-6}\cmidrule(lr){7-10}\cmidrule(lr){11-14}
\multicolumn{1}{c}{} &  & \multicolumn{1}{c}{\begin{sideways}
\negthinspace{}Time~
\end{sideways}} & \multicolumn{1}{c}{\begin{sideways}
\negthinspace{}Stops~
\end{sideways}} & \multicolumn{1}{c}{\begin{sideways}
\negthinspace{}Legs~
\end{sideways}} & \multicolumn{1}{c}{\begin{sideways}
\negthinspace{}Arcs~
\end{sideways}} & \multicolumn{1}{c}{\begin{sideways}
\negthinspace{}Time~
\end{sideways}} & \multicolumn{1}{c}{\begin{sideways}
\negthinspace{}Stops~
\end{sideways}} & \multicolumn{1}{c}{\begin{sideways}
\negthinspace{}Legs~
\end{sideways}} & \multicolumn{1}{c}{\begin{sideways}
\negthinspace{}Arcs~
\end{sideways}} & \multicolumn{1}{c}{\begin{sideways}
\negthinspace{}Time~
\end{sideways}} & \multicolumn{1}{c}{\begin{sideways}
\negthinspace{}Stops~
\end{sideways}} & \multicolumn{1}{c}{\begin{sideways}
\negthinspace{}Legs~
\end{sideways}} & \multicolumn{1}{c}{\begin{sideways}
\negthinspace{}Arcs~
\end{sideways}}\\
\midrule
\multirow{5}{*}{\begin{sideways}
\negthinspace{}0min-Relax
\end{sideways}} & Avg & 6\,452 & 12 & 98 & 42 & 138 & 12 & 87 & 35 & 26 & 9 & 45 & 19\\
 & 33\% & 6\,209 & 7 & 22 & 10 & 84 & 7 & 22 & 10 & 16 & 7 & 15 & 7\\
 & 66\% & 7\,407 & 13 & 70 & 31 & 162 & 13 & 69 & 31 & 27 & 10 & 40 & 19\\
 & 95\% & 7\,635 & 25 & 349 & 125 & 312 & 24 & 330 & 119 & 66 & 19 & 149 & 57\\
 & Max & 7\,805 & 280 & 35\,450 & 28\,848 & 817 & 173 & 5\,540 & 4\,703 & 288 & 38 & 1\,607 & 366\\
\midrule 
\multirow{5}{*}{\begin{sideways}
\negthinspace{}1min-Relax
\end{sideways}} & Avg & 5\,122 & 12 & 88 & 39 & 116 & 12 & 73 & 31 & 25 & 9 & 39 & 17\\
 & 33\% & 4\,628 & 8 & 26 & 12 & 75 & 8 & 25 & 12 & 16 & 6 & 14 & 7\\
 & 66\% & 6\,026 & 13 & 66 & 31 & 136 & 13 & 64 & 30 & 26 & 10 & 36 & 17\\
 & 95\% & 6\,368 & 24 & 284 & 110 & 249 & 24 & 257 & 100 & 64 & 18 & 123 & 52\\
 & Max & 6\,595 & 50 & 12\,603 & 6\,558 & 685 & 50 & 1\,576 & 478 & 240 & 37 & 1\,390 & 289\\
\midrule
\multirow{5}{*}{\begin{sideways}
\negthinspace{}5min-Relax
\end{sideways}} & Avg & 4\,180 & 11 & 66 & 33 & 100 & 11 & 51 & 25 & 24 & 9 & 29 & 15\\
 & 33\% & 3\,845 & 8 & 24 & 12 & 66 & 8 & 23 & 11 & 15 & 6 & 13 & 6\\
 & 66\% & 4\,808 & 13 & 53 & 26 & 115 & 12 & 51 & 25 & 25 & 10 & 30 & 15\\
 & 95\% & 5\,028 & 22 & 178 & 82 & 216 & 22 & 155 & 74 & 61 & 17 & 84 & 42\\
 & Max & 5\,159 & 54 & 6\,640 & 3\,220 & 553 & 54 & 760 & 285 & 196 & 34 & 590 & 183\\
\bottomrule
\end{tabular}
\end{center}

\caption{\label{tab:meat-results}The time (in ms) needed to compute a decision graph and its size. 
Arcs is the number of arcs in the compact representation.
The number of rides corresponds to the number of arcs in the expanded representation. 
The maximum delay parameter is set to 1h.
We report average, maximum and the 33\%-, 66\%- and 95\%-quantiles.}
\end{table}

\begin{table}
\begin{center}
\begin{tabular}{cr}
\toprule
\#Stop & 16~991\\
\#Conn. & 55~930~920\\
\#Trip & 3~965~040\\
\bottomrule 
\end{tabular}
\end{center}

\caption{\label{tab:meat-instance}Instance Size}

\end{table}%

For our experiments, we used on a single core of a Xeon E5-2670 at 2.6 GHz, with 64 GiB of DDR3-1600 RAM, 20 MiB of L3 and 256 KiB of L2 cache.
This is the ``older'' machine used in the experiments of the previous sections.
We implemented the algorithm in C++ and compiled it using GCC 4.7.1 with -O3. 

The timetable is based on the data of \url{bahn.de} during winter 2011/2012. 
This the same primary data source as used for the experiments of Section~\ref{sec:csa_accel_exp}.
However, we extracted a different formal timetable.
We extracted every vehicle except for most buses as we mainly focus on train networks. 
Not having buses explains the significant instance size difference compared to the Germany instance of the previous sections. 
Not having buses allows us to get the running times onto a manageable level.
Further, it allows us to focus on long-distance trains where delays have a significantly larger impact than in high-frequent inner-city transit. 
We removed footpaths longer than 10\,min, connected stops with a distance below 100\,m, and then contracted stops connected through footpaths adjusting their minimum change times resulting in an instance without footpaths.
Not having footpaths again benefits query running times.
We pick the largest strongly connected component to make sure that there always exists a journey (assuming enough days are considered).
We extract one day of maximum operation (i.e. extract everything regardless of the day of operation and remove exact duplicates).
We then replicated this day 30 times to have a timetable spanning about one month of operation.
The detailed sizes are in Table~\ref{tab:meat-instance}. 
We ran 10\,000 random queries. 
Source and target stop are picked uniformly at random. 
The source time is chosen within the first 24h. 
We filter queries out that have an minimum delay-free travel time above 24h.

Our experimental results are presented in Table~\ref{tab:meat-results}.
The compact representation is smaller by a factor of 2 in terms of arcs than the expanded one. 
As expected, a larger relaxation parameter gives smaller graphs.
Increasing the $\alpha$-bound leads to larger graphs and running times grow.
The running times of unbounded queries are proportional to the timespan of the timetable (i.e. 30 days). 
On the other hand, the running times of bounded queries depend only on the maximum travel time of the journey. 
This explains the gap in running time of two orders of magnitude. 
As the maximum values are significantly higher than the 95\%-quantile, we can conclude that the graphs are in most cases of manageable size with a few outlines that distort the average values. 
Upon closer manual inspection, we discover that most outliers with large decision graphs connect remote rural areas, where even no ``good'' delay-free journey exists. 
We can therefore not expect to find any form of robust travel plan.

\begin{figure}
\begin{center}
\subfloat[1.0-Bounded]{\includegraphics[width=4cm]{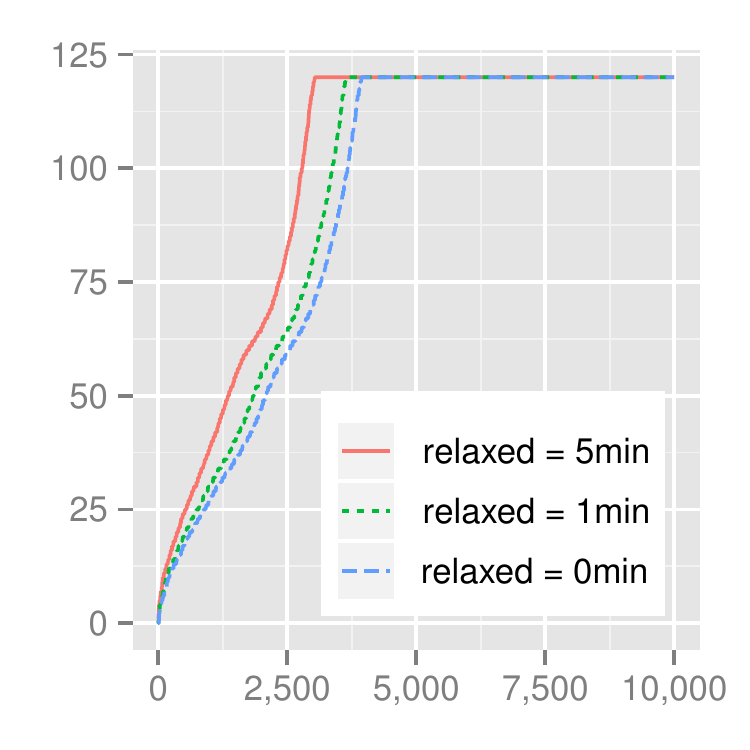}}
\subfloat[2.0-Bounded]{\includegraphics[width=4cm]{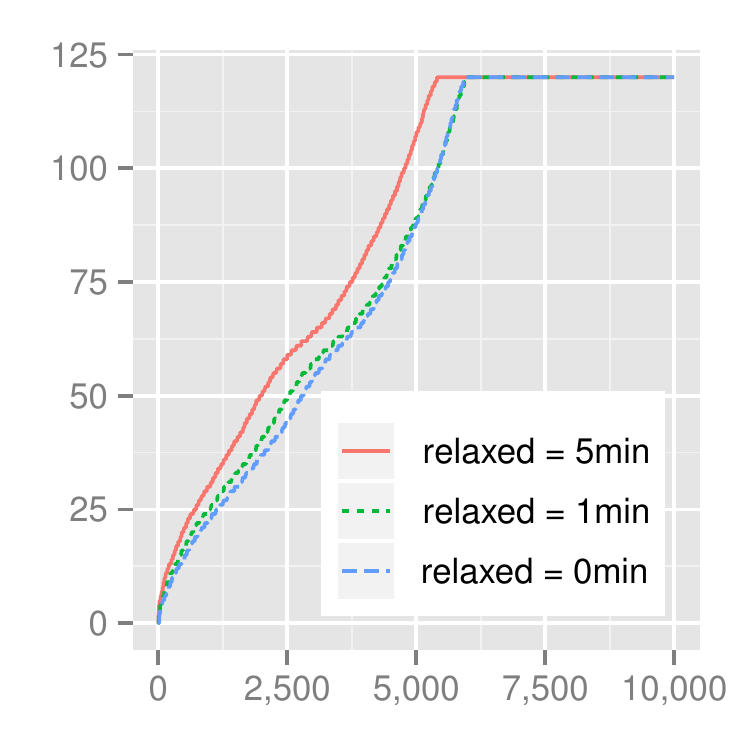}}
\subfloat[Unbounded]{\includegraphics[width=4cm]{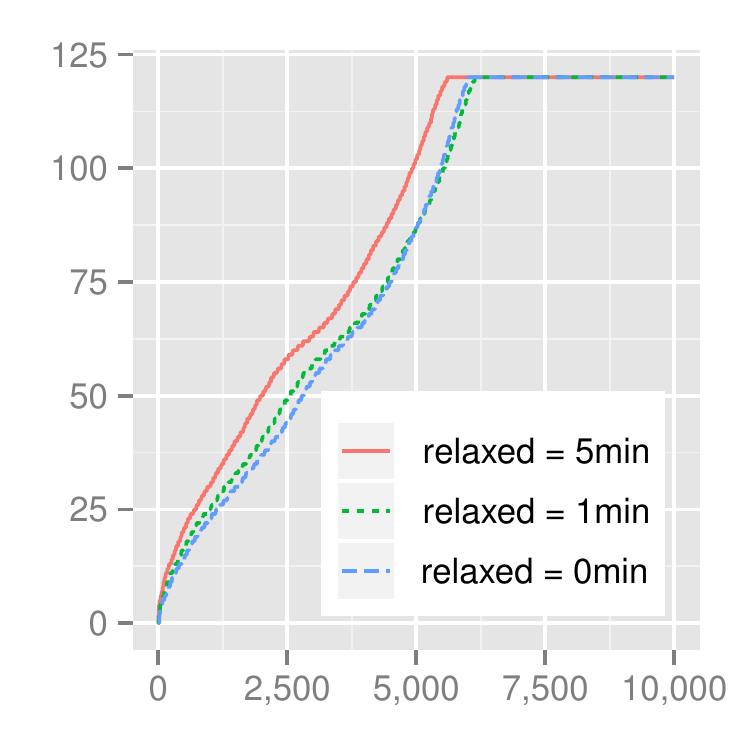}}
\end{center}

\caption{Display windows in min (y-axis) for each of the 10\,000 test queries (x-axis) ordered increasingly. 
The maximum delay parameter is set to 2h.}
\label{fig:display}
\end{figure}

In Figure~\ref{fig:display} we evaluate the value of the display window such that the extracted graphs have less than 25 arcs in the compact representation. 
Recall that this modifies what is displayed to the user.
It is still guaranteed that backups exist. 
As the 1.0-bounded graphs are smaller than 2.0-bounded graphs we can display more, explaining the larger display window. 
The difference between 2.0-bounded graphs and unbounded graphs is small. 
A greater relaxation parameter also reduces the graph size and thus allows for slightly larger display windows. 
If there is no ``good'' way to travel the decision graphs degenerate to single-path-journeys.

\paragraph{Section Conclusions.}

We described the Minimum Expected Arrival Time (MEAT) problem and described an efficient CSA-based algorithm to solve it.
This demonstrates that the CSA-framework is very flexible and can adapt to complex problem settings.
The achieved query running times of 100ms on average are fast enough for interactive systems.
This is further demonstrated by our proof of concept implementation accessible at \url{http://meatdemo.iti.kit.edu}.

However, the fast query running times were bought by removing most buses from the instance.
For the full Germany instance the running times are unfortunately prohibitively large.
Fortunately, decision graphs make most sense in long-distance travel where most high-frequency local bus lines do not play a role.
The size of the computed decision graphs can become large.
Fortunately, using careful engineering it is possible to sufficiently reduce their size to a manageable size.

Overall, we believe that the MEAT problem and our CSA-based algorithm are a promising basis on which an innovative timetable information system can be built.

\section{Conclusion}

We described the Connection Scan family of algorithms (CSA).
The algorithms are a simple solution to various routing problems in timetable-based networks.
We presented profile and non-profile variants of the algorithm.
CSA optimizes the arrival time and optionally the number of transfers in the Pareto sense.
CSA can adjust to a new timetable in mere seconds enabling the computation of journeys with respect to the current delay situation.
We combined CSA with multilevel overlay techniques yielding Connection Scan Accelerated (CSAccel).
CSAccel improves over CSA in terms of query running time on large country networks at the expense of an increased preprocessing running time and an increased code complexity.
Finally, we described the Minimum Expected Arrival Time (MEAT) problem and a CSA-based solution algorithm.
All algorithms were experimentally evaluated in an in-depth study.

\end{document}